\newcommand{\p}{\partial }
\newtheorem{proposition}{Proposition}[section]
\newcommand{\W}{\mathbf{W}}
\newcommand{\I}{\mathbf{I}}
\newcommand{\TM}{\mathrm{M}}
\newcommand{\TO}{\mathrm{O}}
\newcommand{\w}{\mathbf{w}}
\newcommand{\etab}{\bm\eta}
\newcommand{\lwr}{\mathrm{lwr}}
\newcommand{\upr}{\mathrm{upr}}
\newcommand{\etabb}{\bar{\bm\eta}}
\newcommand{\rest}{~\mathrm{rest}}
\renewcommand{\u}{\mathbf{u}}
\newcommand{\R}{\mathbf{R}}
\newcommand{\X}{\mathbf{X}}
\newcommand{\x}{\mathbf{x}}
\newcommand{\Z}{\mathbf{Z}}
\newcommand{\Y}{\mathbf{Y}}
\newcommand{\RMSPE}{\mathrm{RMSPE}}
\newcommand{\EC}{\mathrm{EC}}
\newcommand{\xibb}{\bar{\bm{\xi}}}
\newcommand{\s}{\mathbf{s}}
\newcommand{\Gam}{\mathrm{Gam}}
\renewcommand{\d}{\mathrm{d}}
\newcommand{\Gau}{\mathrm{Gau}}
\newcommand{\GCD}{\mathfrak{g}}
\newcommand{\TCD}{\mathfrak{t}}
\newcommand{\V}{\mathbf{V}}
\renewcommand{\S}{\mathbf{S}}
\newcommand{\MVG}{\mathrm{MVG}}
\newcommand{\E}{\mathbf{E}}
\newcommand{\xib}{\bm\xi}
\newcommand{\SRE}{\mathrm{SRE}}
\newcommand{\TP}{\mathrm{P}}
\renewcommand{\v}{\mathbf{v}}
\newcommand{\trop}{\mathrm{TROP}}
\newcommand{\LG}{\mathrm{LG}}
\begin{document}

\def\spacingset#1{\renewcommand{\baselinestretch}%
{#1}\small\normalsize} \spacingset{1}

\date{4 November, 2025}
  \title{\bf Bayesian copula-based spatial random effects models for inference with complex spatial data}
  \author{Alan R. Pearse$^{1}$\thanks{
    Corresponding author: alan.pearse@unimelb.edu.au}, David Gunawan$^{2}$ and Noel Cressie$^{2}$\\
    $^{1}$School of Mathematics and Statistics, University of Melbourne,\\
    Parkville, Victoria, Australia 3010\\
    $^{2}$NIASRA, School of Mathematics and Applied Statistics,\\University of Wollongong, Wollongong, New South Wales, Australia, 2522}

\maketitle

\begin{abstract}
In this article, we develop fully Bayesian, copula-based, spatial-statistical models for large, noisy, incomplete, and non-Gaussian spatial data. Our approach includes novel constructions of copulas that accommodate a spatial-random-effects structure, enabling low-rank representations and computationally efficient Bayesian inference. The spatial copula is used in a latent process model of the Bayesian hierarchical spatial-statistical model, and, conditional on the latent copula-based spatial process, the data model handles measurement errors and missing data. Our simulation studies show that a fully Bayesian approach delivers accurate and fast inference for both parameter estimation and spatial-process prediction, outperforming several benchmark methods, including fixed rank kriging (FRK). The new class of copula-based models is used to map atmospheric methane in the Bowen Basin, Queensland, Australia, from Sentinel 5P satellite data. 
\end{abstract}

\noindent
{\it Keywords:} Basic Areal Unit; Markov chain Monte Carlo; measurement error; fixed rank kriging; Sentinel 5P satellite; t spatial process 

\sloppy

\spacingset{1}
\section{Introduction}
\label{sec:intro}

Non-Gaussian spatial processes where the marginal distributions are non-Gaussian (e.g., skewed, strictly positive, or multimodal), are commonly used in  environmental-science models. Even after transformation of the marginals, the spatial-dependence structure of the process may be unlike that of a Gaussian process. Trans-Gaussian models \citep[e.g., ][]{Cressie1993} and Spatial Generalized Linear Mixed Models \citep[SGLMMs;][]{Diggle1998} use marginal transformations of a Gaussian spatial process; however, the spatial dependence in these models is fundamentally Gaussian-like, with spatial correlation that tends to zero as the values become more extreme. 

Spatial copulas enable modeling with non-Gaussian marginals and non-Gaussian spatial dependence that can, for example, capture substantial correlation among extreme values of the process. \citet{Bardossy2006} first used copulas to model a non-Gaussian random field in hydrology, developing the chi-squared copula for spatial processes exhibiting asymmetric dependence between small and large values. Subsequently, many copulas that generate non-Gaussian spatial processes have been developed for environmental applications, including the Fisher copula \citep{Favre2018}, the conditional-normal extreme-value copulas \citep{Krupskii2021}, and a family of ``Clayton-like'' spatial copulas \citep{Bevilacqua2024}. The Gaussian copula \citep[e.g.,][]{Song2000} and the t copula \citep[e.g.,][]{Dermata2005} can also be used for spatial-statistical inference \citep[e.g.,][]{Kazianka2011, Erhardt2015}. In fact, the aforementioned trans-Gaussian models are special cases of Gaussian-copula models \citep{Kazianka2010}, and we expand on this result for `transformed t' spatial processes in Section \ref{sec:t_copula_SRE}. 

In this article, we not only develop new dimension-reduced spatial copulas, but we also show how they should be used in a Bayesian hierarchical statistical model. That is, we show how to capture the large-, small-, and micro-scale variability of the spatial process and separate them from measurement errors in noisy and incomplete spatial data. Not distinguishing between the data and the process has led in the past to the unrealistic assumption of zero measurement error or to assuming an overly smooth process with no micro-scale variation \citep[pp. 120-123]{CressieWikle2011}. Previous attempts at hierarchical spatial-copula models include \citet{Ghosh2011}, \citet{Beck2019}, and \citet{garcia2021bayesian}. In these models, a copula is inserted at the data-model level of the hierarchy, so measurement errors are still confounded with the variability of the scientific process. On the other hand, \citet[Sec. IV.2]{Ghosh2009},  \citet{Prates2015} and \citet{Musgrove2016} defined copula-based SGLMMs for count and binary data, featuring a data model that deals with the discreteness of the observed data. However, the general notion of measurement errors and missingness is absent from these works. 

The past literature on spatial-copula modeling has not adequately considered inference that scales to very large datasets. In today's data-rich world of large remote-sensing and sensor-network datasets, spatial-copula models that can efficiently handle tens of thousands of observations or more are needed.  Fixed-rank kriging (FRK) and, more broadly, spatial random effects (SRE) models \citep{Cressie2008, ZM2021, SainsburyDale2024} use basis functions for dimension reduction, which we generalize here to the spatial-copula model. 

In this article, our contributions are as follows. We develop novel Gaussian copulas and t copulas with embedded SREs that enable fast Bayesian computation for large spatial datasets, based on Markov chain Monte Carlo (MCMC). The Gaussian-copula SRE models generalize existing trans-Gaussian SRE models and SGLMMs \citep{Sengupta2013, SainsburyDale2024}. The t-copula models with embedded SREs result in non-Gaussian spatial processes generated from a multivariate t distribution. Importantly, these models are developed in a Bayesian hierarchical spatial-statistical-modeling framework \citep[e.g.,][]{CressieWikle2011} that enables inference on a latent scientific process from large, noisy, incomplete spatial data. Simulation studies and an application to a satellite-remote-sensing dataset of atmospheric methane concentrations demonstrate the inferential and computational advantages of our hierarchical spatial-copula modeling of noisy, incomplete, non-Gaussian spatial data. 

The rest of this article is organized as follows. Section \ref{sec:spatial_copula_model} defines the general copula-based Bayesian hierarchical spatial-statistical model, which includes our copula-based SRE models. Section \ref{sec:Bayesian_inference} discusses computationally efficient Bayesian inference, and Section \ref{sec:simulation_studies} presents simulation studies that demonstrate the validity of our Bayesian inferences. In Section \ref{sec:methane}, we present an application to predicting atmospheric methane concentrations from satellite data taken in and around the Bowen Basin, Queensland, Australia, which is a region of intense coal-mining activity. Conclusions and a discussion are given in Section \ref{sec:discussion}. This article has a Supplement containing proofs for all propositions as well as additional technical details and empirical results.

\section{Copula-based spatial random effects model}\label{sec:spatial_copula_model}

In this section, we define a general Bayesian hierarchical spatial-statistical model (BHM; Section \ref{sec:hierarchical_models}), and then we particularize it to incorporate spatial copulas (Section \ref{sec:copulas}).  Sections \ref{sec:Gaussian_copula_SRE}--\ref{sec:t_copula_SRE} present Gaussian copulas and t copulas that accommodate a spatial random effects (SRE) structure for efficient computations. 

\subsection{Bayesian hierarchical spatial-statistical models}\label{sec:hierarchical_models}

Let $Y(\cdot) \equiv \{Y(\s): \s \in D\}$ be a spatial stochastic process defined over points $\s$ in spatial domain $D$, usually a $d$-dimensional subset of Euclidean space. Define a set of polygons $\mathcal{A} \equiv  \{A_j: j = 1, ..., N\}$ with areas $|A_j| > 0$, $j = 1, ..., N$, that tile $D$. The polygons in $\mathcal{A}$ are called Basic Areal Units \citep[BAUs;][]{Nguyen2012}, and they are constructed to represent the smallest areal regions over which spatial-statistical inference is meaningful. The BAUs need not be the same size and shape, although they often are \citep{ZM2021}. For each $j = 1, ..., N$, define aggregation over $A_j$ as $Y(A_j) \equiv |A_j|^{-1}\int_{\u \in A_j} Y(\u)~\mathrm{d}\u$, which is the areal average of $\{Y(\u): \u \in A_j\}$, for $j = 1, ..., N$. The random vector $\Y \equiv (Y(A_1), ..., Y(A_N))^\top$ is called the \textit{BAU-level process} and fully describes the (discretized) spatial process over $D$ \citep{ZM2021}. Explicit process models for the marginal behavior and dependence structure of the spatial process are provided by the BAU-level process. The advantage of this approach is that it simplifies \textit{change-of-support} problems \citep[e.g.,][Sec 5.2]{Cressie1993}, where predictions are made at geographic `blocks' made up of several BAUs \citep[e.g., see the examples in][]{SainsburyDale2024}.

Let $\mathcal{A}\equiv \mathcal{A}_\TO \cup \mathcal{A}_\TM$, where $\mathcal{A}_\TO \equiv \{A_{\TO k}: k = 1, ..., K\}$ and $\mathcal{A}_\TM \equiv \{A_{\TM l}: l = 1, ..., L\}$ are two mutually exclusive sets of BAUs (i.e., $\mathcal{A}_\TO\cap \mathcal{A}_\TM = \emptyset$) with $K \leq N$ and $L = N - K$. We call $\mathcal{A}_\TO$ the set of `observed BAUs' (with subscript `O') and $\mathcal{A}_\TM$ the set of `missing BAUs' (with subscript `M'). The corresponding latent-process vectors are $\Y_\TO \equiv (Y(A_{\TO1}), ..., Y(A_{\TO K}))^\top$ and $\Y_\TM \equiv (Y(A_{\TM1}), ..., Y(A_{\TM L}))^\top$, and we can write $\Y \equiv (\Y_\TO^\top, \Y^\top_\TM)^\top$. At the `observed BAUs', we observe the noisy, incomplete spatial data, $\Z_\TO \equiv (Z(A_{\TO 1}), ..., Z(A_{\TO K}))^\top$. Spatial data are observed at or aggregated to the BAU resolution. The `missing BAUs' are so called, because data are considered missing there.

We now define the Bayesian hierarchical spatial-statistical model, comprising a \textit{data model}, a \textit{process model}, and a \textit{parameter model}. Let $\bm\theta_\TO$ be a vector of parameters for the measurement error associated with the data $\Z_\TO$. The data model specifies the distribution of $\Z_\TO$ given $\Y$, written as $[\Z_\TO \mid \Y, \bm\theta_\TO]$. With very few exceptions \citep[e.g.,][]{Wikle2005}, conditional independence of $\Z_\TO$ given $\Y_\TO$ is assumed, resulting in $[\Z_\TO \mid \Y, \bm\theta_{\TO}] = [\Z_\TO \mid \Y_\TO, \bm\theta_\TO] = \prod_{k=1}^K[Z(A_{\TO k}) \mid Y(A_{\TO k}), \bm\theta_{\TO}]$. 

The \textit{process model} describes the conditional distribution of the latent spatial process $\Y \equiv (\Y_\TO^\top, \Y_\TM^\top)^\top$, usually denoted as $[\Y \mid \bm\theta_\TP]$ but equivalently as $[\Y_\TO, \Y_\TM \mid \bm\theta_\TP]$, where $\bm\theta_\TP$ is a vector of parameters that govern the latent spatial process.  From this point, $\bm\theta_\TO$ is assumed known (e.g., from independent calibration experiments of the scientific instrument), so only $\bm\theta_\TP$ is given a distribution. That is, the \textit{parameter model} is $[\bm\theta_\TP]$, which is often referred to as the prior distributions of the parameters.

The goal of Bayesian inference is to infer the `unknowns', namely the latent process $\Y$ and the parameters $\bm\theta_\TP$, given the noisy, incomplete spatial data $\Z_\TO$. Inference proceeds from the joint posterior distribution, $[\Y, \bm\theta_\TP \mid \Z_\TO]$. Since $\Y \equiv (\Y_\TO^\top, \Y_\TM^\top)^\top$, we use the decomposition, $[\Y, \bm\theta_\TP \mid \Z_\TO]=[\Y_\TM \mid \Y_\TO, \bm\theta_\TP]\times[\Y_\TO, \bm\theta_\TP \mid \Z_\TO]$, since conditional on $\Y_\TO$, $\Y_\TM$ is independent of $\Z_\TO$. This enables us to address inference on the parameters and $\Y_\TO$ separately from inference on $\Y_\TM$. By Bayes' rule, the joint posterior of $\Y_\TO$ and $\bm\theta_\TP$ is given by $[\Y_\TO, \bm\theta_\TP \mid \Z_\TO] \propto [\Z_\TO \mid \Y_\TO, \bm\theta_\TO] \times [\Y_\TO \mid \bm\theta_\TP] \times [\bm\theta_\TP]$. The predictive distribution, $[\Y_\TO \mid \Z_\TO]$ can be obtained by integrating out $\bm\theta_\TP$. Similarly, predictions of $\Y_\TM$ alone are obtained via the integral $[\Y_\TM \mid \Z_\TO] = \int\int [\Y_\TM \mid \Y_\TO, \bm\theta_\TP] \times [\Y_\TO, \bm\theta_\TP \mid \Z_\TO]~\d\bm\theta_\TP ~\d\Y_\TO$, using the decomposition introduced above. Practical aspects are discussed in Section \ref{sec:Bayesian_inference}.

\subsection{Copula-based spatial models}\label{sec:copulas}

An $N$-variate copula function, $\mathfrak{C}_{1:N}(u_1, ..., u_N)$,  $u_1, ..., u_N \in [0, 1]$, is a joint cumulative distribution function (CDF) for a random vector $\mathbf{U} \equiv (U_1, ..., U_N)^\top$, where marginally each $U_j$, $j = 1, ..., N$, is uniformly distributed on $[0, 1]$. Let $\Y \equiv (Y_1,...,Y_N)^\top$ be an $N$-variate random vector. The notation here is not necessarily spatial, but it can be made so by letting $Y_j$ be $Y(A_j)$, for $j = 1, ..., N$. 

Now write the joint CDF of $\Y$ as $F_{1:N}(y_1, ..., y_N)$ and the joint PDF as $f_{1:N}(y_1, ..., y_N)$ for $y_1, ..., y_N \in \mathbb{R}$. Let $F_{j}(y_j)$ denote the marginal CDF of $Y_j$, for $j=1,...,N$. We link $\mathbf{U}$ and $\mathbf{Y}$ by transforming $U_j \equiv F_{j}(Y_j)$ or, conversely, $Y_j \equiv F_{j}^{-1}(U_j)$, where $F_j^{-1}(\cdot)$ is the quantile function, for $j=1,...,N$. For realizations $y_j$ and $u_j$, the same relationship holds: $u_j \equiv F_j(y_j)$ for $u_j \in [0, 1]$ and $y_j \in \mathbb{R}$. By Sklar's theorem \citep{Sklar1959}, combining the marginal CDFs $\{F_{j}(y_j): y_j \in \mathbb{R}\}_{j=1}^N$ with a copula function, $\mathfrak{C}_{1:N}(\cdot)$,  results in the joint CDF of $\Y$. That is, for $y_1, ..., y_N \in \mathbb{R}$,
\begin{equation}
F_{1:N}(y_1, ..., y_N) \equiv \mathfrak{C}_{1:N}(F_{1}(y_1), ..., F_{N}(y_N)).\label{eqn:sklars_theorem}
\end{equation}
When the marginal distributions are all continuous, \eqref{eqn:sklars_theorem} is unique. Assuming that $\mathfrak{C}_{1:N}(\cdot)$ and all marginal CDFs are absolutely continuous, we obtain the joint density of $\mathbf{Y} \equiv (Y_1, ..., Y_N)^\top$ from \eqref{eqn:sklars_theorem}. That is, for $y_1, ..., y_N \in \mathbb{R}$,
\begin{equation}
f_{1:N}(y_1, ..., y_N) \equiv \left(\prod_{j=1}^N f_{j}(y_j)\right) \times \mathfrak{c}_{1:N}(F_{1}(y_1), ..., F_{N}(y_N)),\label{eqn:copula_pdf}
\end{equation}
where $f_{j}(y_j)$ is the marginal PDF of $Y_j$, for $j = 1, ..., N$; and $\mathfrak{c}_{1:N}(\cdot)$ is the joint copula density, $\mathfrak{c}_{1:N}(u_1, ..., u_N) \equiv \partial^N \mathfrak{C}(u_1, ..., u_N)/\partial u_1... \partial u_N$, where $u_1, ..., u_N \in [0,1]$. 

In a copula-based hierarchical spatial-statistical model, \eqref{eqn:copula_pdf} defines the process model. We set $[\Y \mid \bm\theta_\TP] = (\prod_{j=1}^N f_{j}(y_j)) \times \mathfrak{c}_{1:N}(F_{1}(y_1), ..., F_{N}(y_N))$, where the right-hand side implicitly depends on $\bm\theta_\TP$ through the marginal CDFs and PDFs, $\{F_{j}(\cdot): j=1,...,N\}$ and $\{f_j(\cdot): j = 1, ..., N\}$, respectively, and through the joint copula density $\mathfrak{c}_{1:N}(\cdot)$. The vector $\bm\theta_\TP$ typically includes parameters such as linear-model coefficients $\bm\beta$ that parameterize the marginal expectations (e.g., $E(Y(A_j) \mid \bm\theta_\TP) = \mathbf{x}(A_j)^\top \bm\beta$, where $\x(A_j)^\top$ denotes a vector of covariates defined on BAU $A_j$), and a marginal variance parameter $\sigma^2_\TP \equiv \mathrm{var}(Y(A_j) \mid \bm\theta_\TP)$, for $j = 1 , ..., N$. Next, we develop spatial copulas to insert into the process model, $[\Y \mid \bm\theta_\TP]$.

\subsection{The Gaussian spatial random effects (Gau-SRE) copula}\label{sec:Gaussian_copula_SRE}

In order to construct the Gaussian SRE (Gau-SRE) copula, we first introduce the Gaussian SRE model. Let the process $Y(\cdot)$ have an SRE component, which is a linear combination of the $b$-dimensional vector of basis functions $\S(\cdot)$, where the coefficients are random \citep{Cressie2008}. Then, let $\S(A_j) \in \mathbb{R}^b$ be a vector of $b$ spatial basis functions \citep{Cressie2008} integrated over the BAU $A_j$ or perhaps evaluated at its centroid, for $j = 1, ..., N$. Let $\S \in \mathbb{R}^{N \times b}$ be a matrix where the $j$-th row is $\S(A_j)^\top$, for $j = 1, ..., N$. We can also write $\S \equiv (\S_\TO^\top, \S_\TM^\top)^\top$, where $\S_\TO$ and $\S_\TM$ are the $K\times b$ and $L\times b$ matrices of spatial basis functions for the observed and missing BAUs, respectively. 

Let the basis-function coefficients $\etab \in \mathbb{R}^b$ be a $b$-variate Gaussian random vector with mean vector $\mathbf{0}$ and positive-definite covariance matrix $\E$, and let $\xib$ be an $N$-variate Gaussian random vector, independent of $\etab$, with mean vector $\mathbf{0}$ and covariance matrix  $\sigma^2_\xi\I_N$, where $\I_N$ is the $N$-dimensional identity matrix. For reasons explained in Section \ref{sec:identifiability}, we fix $\sigma^2_\xi = 1$. Then, define the Gaussian random vector $\W \equiv (W(A_1), ..., W(A_N))^\top$ as,
\begin{equation}
    \W = \S\etab + \xib.\label{eqn:Gaussian_SRE_model}
\end{equation}
Now $\W$ in \eqref{eqn:Gaussian_SRE_model} can also be represented hierarchically as follows \citep{kang2011bayesian}: $\W \mid \etab \sim \MVG(\S\etab, \I_N)$ and $\etab \mid \E \sim \MVG(\mathbf{0}, \E)$, where $\mathrm{MVG}(\bm\mu,\bm\Sigma)$ denotes a multivariate Gaussian distribution with mean vector $\bm\mu$ and covariance matrix $\bm\Sigma$. Integrating out $\etab$ gives the unconditional distribution of $\W$, which is multivariate Gaussian with mean vector $\mathbf{0}$ and covariance matrix $\bm\Sigma^{\SRE}\equiv \S\E\S^\top + \I_N$. The form of $\bm\Sigma^{\SRE}$ enables us to leverage the Sherman-Morrison-Woodbury identities to efficiently compute its inverse and determinant \citep{Henderson1981}; that is, $(\bm\Sigma^{\SRE})^{-1} = \I_N - \S (\S^\top \S + \E^{-1})^{-1}\S^\top$ and $\det(\bm\Sigma^{\SRE}) = \det(\S^\top \S + \E^{-1})\det(\E)$.

In Proposition \ref{prop:Gau_cop_with_SRE}, we show how to obtain the Gau-SRE copula, where we use the notation $\Phi$ and $\phi$ to denote the CDF and PDF, respectively, of a univariate standard Gaussian random variable, and hence $\Phi^{-1}$ is its quantile function. 

\begin{proposition}\label{prop:Gau_cop_with_SRE}
Let $F_{\mathcal{G},1:N}(\cdot;\bm\Sigma^{\mathrm{SRE}})$ denote an $N$-variate Gaussian CDF with mean vector $\mathbf{0}$ and covariance matrix $\bm\Sigma^{\mathrm{SRE}}\equiv \S\E\S^\top + \I_N$. For $0 \leq u_1, ..., u_N \leq 1$, a Gaussian copula with parameter given by an SRE covariance-matrix is,
\begin{equation}
\mathfrak{G}_{1:N}^{\mathrm{SRE}}(u_1, ..., u_N; \bm\Sigma^{\mathrm{SRE}}) = F_{\mathcal{G}, 1:N}(\sigma_{1}^{\mathrm{SRE}} \Phi^{-1}(u_1),...,\sigma_{N}^{\mathrm{SRE}} \Phi^{-1}(u_N); \bm{\Sigma}^{\mathrm{SRE}}),\label{eqn:appendix_SRE_gau_copula}
\end{equation}
where $\sigma_j^{\mathrm{SRE}}$ is the square-root of the $j$-th diagonal element of $\bm{\Sigma}^{\mathrm{SRE}}$. The associated joint copula density is,
\begin{equation}
\mathfrak{g}_{1:N}^{\mathrm{SRE}}(u_1, ..., u_N; \bm\Sigma^{\mathrm{SRE}}) = \frac{f_{\mathcal{G},1:N}(\sigma_{1}^{\mathrm{SRE}} \Phi^{-1}(u_1),...,\sigma_{N}^{\mathrm{SRE}} \Phi^{-1}(u_N); \bm{\Sigma}^{\mathrm{SRE}})}{\prod_{j=1}^N \phi(\Phi^{-1}(u_j))/\sigma_j^{\mathrm{SRE}}}.\label{eqn:full_SRE_gau_copula_density}
\end{equation}
\end{proposition}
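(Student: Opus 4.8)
The plan is to exhibit the function in \eqref{eqn:appendix_SRE_gau_copula} as the (unique, by Sklar's theorem) copula associated with the Gaussian SRE vector $\W \sim \MVG(\mathbf{0}, \bm\Sigma^{\SRE})$ of \eqref{eqn:Gaussian_SRE_model}, and then obtain the density \eqref{eqn:full_SRE_gau_copula_density} by differentiating that CDF. First I would record the marginals of $\W$: since the $j$-th diagonal entry of $\bm\Sigma^{\SRE} \equiv \S\E\S^\top + \I_N$ is $(\sigma_j^{\SRE})^2$ by definition of $\sigma_j^{\SRE}$, the marginal law of $W(A_j)$ is $\mathcal{N}(0, (\sigma_j^{\SRE})^2)$, so $W(A_j)/\sigma_j^{\SRE}$ is standard Gaussian and $U_j \equiv \Phi(W(A_j)/\sigma_j^{\SRE})$ is uniform on $[0,1]$, for $j = 1, \ldots, N$.

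The copula associated with $\W$ is, by definition, the joint CDF of $(U_1, \ldots, U_N)^\top$. Using the strict monotonicity of $\Phi$ (hence of $\Phi^{-1}$),
\begin{align*}
\mathrm{Pr}(U_1 \le u_1, \ldots, U_N \le u_N)
&= \mathrm{Pr}\bigl(W(A_1) \le \sigma_1^{\SRE}\Phi^{-1}(u_1), \ldots, W(A_N) \le \sigma_N^{\SRE}\Phi^{-1}(u_N)\bigr) \\
&= F_{\mathcal{G}, 1:N}\bigl(\sigma_1^{\SRE}\Phi^{-1}(u_1), \ldots, \sigma_N^{\SRE}\Phi^{-1}(u_N); \bm\Sigma^{\SRE}\bigr),
\end{align*}
which is exactly \eqref{eqn:appendix_SRE_gau_copula}. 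That this is a valid copula (an $N$-dimensional CDF with uniform marginals) is immediate from the construction, and its uniqueness as the copula linking the Gaussian marginals to the joint law of $\W$ follows from the continuity of the Gaussian marginals via Sklar's theorem \citep{Sklar1959}. I would also remark that this object coincides with the ordinary Gaussian copula evaluated at the correlation matrix obtained by standardizing $\bm\Sigma^{\SRE}$; writing it in terms of $\bm\Sigma^{\SRE}$ itself is precisely what keeps the low-rank-plus-identity structure, and hence the Sherman--Morrison--Woodbury identities quoted above, available.

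For the density I would differentiate \eqref{eqn:appendix_SRE_gau_copula} with respect to $u_1, \ldots, u_N$. Writing $g_j(u_j) \equiv \sigma_j^{\SRE}\Phi^{-1}(u_j)$, the chain rule gives $\mathfrak{g}_{1:N}^{\SRE}(u_1, \ldots, u_N; \bm\Sigma^{\SRE}) = f_{\mathcal{G}, 1:N}(g_1(u_1), \ldots, g_N(u_N); \bm\Sigma^{\SRE}) \prod_{j=1}^N g_j'(u_j)$, because the mixed $N$-th order partial of the Gaussian CDF is the Gaussian PDF and each $g_j$ depends on the single variable $u_j$. Since $\tfrac{\d}{\d u}\Phi^{-1}(u) = 1/\phi(\Phi^{-1}(u))$, we have $g_j'(u_j) = \sigma_j^{\SRE}/\phi(\Phi^{-1}(u_j))$, and collecting these factors yields $\prod_{j=1}^N g_j'(u_j) = \bigl(\prod_{j=1}^N \phi(\Phi^{-1}(u_j))/\sigma_j^{\SRE}\bigr)^{-1}$, giving \eqref{eqn:full_SRE_gau_copula_density}.

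I do not anticipate a genuine obstacle here; the proof is a direct computation. The only points needing care are (i) reading off the marginal variances $(\sigma_j^{\SRE})^2$ so that the standardization to uniform marginals is correct, and (ii) justifying the termwise differentiation, which is valid because $F_{\mathcal{G}, 1:N}$ (and every marginal CDF) is absolutely continuous, as assumed in Section~\ref{sec:copulas}, and $\Phi^{-1}$ is smooth on $(0,1)$. As a consistency check, one sees that this derivation is exactly the specialization of the general copula density \eqref{eqn:copula_pdf} to Gaussian marginals, so compatibility with the process-model construction of Section~\ref{sec:copulas} is automatic.
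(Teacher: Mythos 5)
Your proof is correct and follows essentially the same route as the paper's: standardize the marginals of $\W \sim \MVG(\mathbf{0}, \bm\Sigma^{\mathrm{SRE}})$ to define $U_j \equiv \Phi(W(A_j)/\sigma_j^{\mathrm{SRE}})$, identify the joint CDF of $(U_1, \ldots, U_N)^\top$ as \eqref{eqn:appendix_SRE_gau_copula}, and obtain \eqref{eqn:full_SRE_gau_copula_density} by the chain rule with $\partial(\sigma_j^{\mathrm{SRE}}\Phi^{-1}(u_j))/\partial u_j = \sigma_j^{\mathrm{SRE}}/\phi(\Phi^{-1}(u_j))$. Your added remarks on Sklar uniqueness and the equivalence with the standard correlation-matrix parameterization are consistent with, but not required by, the paper's argument.
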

\noindent The proof (and all other proofs) is given in Supplement \ref{sec:appendix_proofs}. 

Copulas are invariant to monotonic transformations of the marginal variables (e.g., scaling operations). If the covariance matrix's diagonals $\sigma_1^2, ..., \sigma_N^2$ are free parameters only constrained to be positive, the invariance to scaling operations on the marginals makes them non-identifiable. However, the diagonal elements of $\bm\Sigma^\SRE$, namely $(\sigma^{\mathrm{SRE}}_1)^2, ..., (\sigma^{\mathrm{SRE}}_N)^2$, are constrained by the SRE parameterization. Provided that the constraints imposed in Section \ref{sec:identifiability} are followed, all parameters remain identifiable. The purpose of the parameterization given in \eqref{eqn:appendix_SRE_gau_copula} in Proposition \ref{prop:Gau_cop_with_SRE} is that the diagonal elements of $\bm\Sigma^{\mathrm{SRE}}$, which depend on $\S$ and $\E$, can be computed quickly and directly without computing and storing the large $N\times N$ matrix prior to extracting the diagonals. Traditionally, copulas have relied on the correlation matrix for their parameterization, but our new approach using \eqref{eqn:appendix_SRE_gau_copula} avoids this. 

The proposition below establishes the connection between the process $\Y$ and $\W$ given in \eqref{eqn:Gaussian_SRE_model}, through a special trans-Gaussian model. 

\begin{proposition}\label{prop:Gau_anamorphosis}
Define the (Gaussian) anamorphosis \citep{Matheron1976}: Set  $W(A_j) = \sigma_j^{\SRE}\Phi^{-1}(F_j(Y(A_j)))$, for $j = 1, ..., N$, where $\W \equiv (W(A_1), ..., W(A_N))^\top$ jointly follows a multivariate Gaussian distribution with mean vector $\mathbf{0}$ and covariance matrix $\bm\Sigma^\SRE$. Then, a transformation of variables shows that, for $y_1, ..., y_N\in\mathbb{R}$, the joint PDF of $\Y$ is, 
\begin{align}
   [\Y \mid \bm\theta_\TP] 
   &=
   \left(\prod_{j=1}^Nf_j(y_j)\right)\mathfrak{g}_{1:N}^{\mathrm{SRE}}(F_1(y_1), ..., F_N(y_N); \bm\Sigma^{\mathrm{SRE}}),\label{eqn:Gau_SRE_cop_model}
\end{align}
where $\mathfrak{g}_{1:N}^{\mathrm{SRE}}$ is given by \eqref{eqn:full_SRE_gau_copula_density}.
\end{proposition}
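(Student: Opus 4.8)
The plan is to treat the defining relation $W(A_j)=\sigma_j^{\SRE}\Phi^{-1}(F_j(Y(A_j)))$, $j=1,\dots,N$, as a coordinatewise (hence coordinate-separable) change of variables between $\Y$ and $\W$, and then push the known multivariate Gaussian density of $\W$ through the Jacobian formula. First I would record the regularity assumptions already in force from Section~\ref{sec:copulas}: each marginal CDF $F_j$ is absolutely continuous with density $f_j$ and is strictly increasing on the support of $Y(A_j)$, so $F_j^{-1}$ exists there; and each $\sigma_j^{\SRE}>0$, because the diagonal entries of $\bm\Sigma^{\SRE}=\S\E\S^\top+\I_N$ are at least $1$ (as $\S\E\S^\top$ is positive semidefinite). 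Under these conditions the map $y_j\mapsto w_j=\sigma_j^{\SRE}\Phi^{-1}(F_j(y_j))$ is a strictly increasing diffeomorphism from the support of $Y(A_j)$ onto $\mathbb{R}$, so $(y_1,\dots,y_N)\mapsto(w_1,\dots,w_N)$ is a diffeomorphism with diagonal Jacobian. For coherence of the construction I would also note that, since marginally $W(A_j)\sim\MVG(0,(\sigma_j^{\SRE})^2)$, we have $\Phi(W(A_j)/\sigma_j^{\SRE})\sim\mathrm{Unif}(0,1)$, so $Y(A_j)=F_j^{-1}(\Phi(W(A_j)/\sigma_j^{\SRE}))$ indeed has marginal CDF $F_j$.

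Next I would compute the Jacobian. Using $\tfrac{\d}{\d t}\Phi^{-1}(t)=1/\phi(\Phi^{-1}(t))$ and the chain rule, $\tfrac{\d w_j}{\d y_j}=\sigma_j^{\SRE} f_j(y_j)/\phi(\Phi^{-1}(F_j(y_j)))>0$, so the Jacobian determinant of the transformation $\y\mapsto\w$ is $\prod_{j=1}^N \sigma_j^{\SRE} f_j(y_j)/\phi(\Phi^{-1}(F_j(y_j)))$. The change-of-variables formula then gives, for $y_1,\dots,y_N\in\mathbb{R}$,
\[
[\Y\mid\bm\theta_\TP]
= f_{\mathcal{G},1:N}\!\left(\sigma_1^{\SRE}\Phi^{-1}(F_1(y_1)),\dots,\sigma_N^{\SRE}\Phi^{-1}(F_N(y_N));\bm\Sigma^{\SRE}\right)\prod_{j=1}^N\frac{\sigma_j^{\SRE} f_j(y_j)}{\phi(\Phi^{-1}(F_j(y_j)))}.
\]

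The last step is to recognize the Gaussian factor as the Gau-SRE copula density. Setting $u_j=F_j(y_j)$ and rearranging the definition \eqref{eqn:full_SRE_gau_copula_density} from Proposition~\ref{prop:Gau_cop_with_SRE} yields $f_{\mathcal{G},1:N}(\sigma_1^{\SRE}\Phi^{-1}(u_1),\dots,\sigma_N^{\SRE}\Phi^{-1}(u_N);\bm\Sigma^{\SRE})=\mathfrak{g}_{1:N}^{\SRE}(u_1,\dots,u_N;\bm\Sigma^{\SRE})\prod_{j=1}^N\phi(\Phi^{-1}(u_j))/\sigma_j^{\SRE}$. Substituting this into the display above, the factors $\phi(\Phi^{-1}(F_j(y_j)))$ and $\sigma_j^{\SRE}$ cancel against the matching factors in the Jacobian, leaving exactly $[\Y\mid\bm\theta_\TP]=\big(\prod_{j=1}^N f_j(y_j)\big)\,\mathfrak{g}_{1:N}^{\SRE}(F_1(y_1),\dots,F_N(y_N);\bm\Sigma^{\SRE})$, which is \eqref{eqn:Gau_SRE_cop_model}.

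Since every step is a direct computation, there is no deep obstacle; the only points requiring care are (i) justifying that the coordinatewise map is a genuine diffeomorphism on the support of each $Y(A_j)$ — strict monotonicity and continuity of $F_j$, positivity of $f_j$ there, and $\sigma_j^{\SRE}>0$ — and (ii) the bookkeeping that makes the $\phi(\Phi^{-1}(\cdot))$ and $\sigma_j^{\SRE}$ terms cancel cleanly between the Jacobian and the copula-density denominator. Handling supports that are proper subsets of $\mathbb{R}$ (e.g., strictly positive processes) merely restricts the domain of the diffeomorphism and does not alter the algebra.
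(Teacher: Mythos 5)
Your proposal is correct and follows essentially the same route as the paper's proof: a coordinatewise change of variables from $\Y$ to $\W$, computation of the diagonal Jacobian terms $\sigma_j^{\SRE} f_j(y_j)/\phi(\Phi^{-1}(F_j(y_j)))$, and recognition of the resulting Gaussian density factor as $\mathfrak{g}_{1:N}^{\SRE}$ via \eqref{eqn:full_SRE_gau_copula_density}. The extra regularity remarks (strict monotonicity of $F_j$, $\sigma_j^{\SRE}\geq 1>0$) are a welcome but minor elaboration of what the paper leaves implicit.
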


Proposition \ref{prop:Gau_anamorphosis} allows us to obtain the Gau-SRE copula model for any subvector of $\Y$. Let $\{F_{\TO k}: k = 1, ..., K\}$ and $\{f_{\TO k}: k = 1, ..., K\}$ denote the marginal CDFs and PDFs of $\{Y(A_{\TO k}):k=1,...,K\}$, respectively. Also let $\xib_\TO$ be the $K$-variate sub-vector of $\xib$ in \eqref{eqn:Gaussian_SRE_model}, with mean vector $\mathbf{0}$ and covariance matrix $\I_K$.  Hence, $\W_\TO \equiv (W(A_{\TO 1}), ..., W(A_{\TO K}))^\top \equiv \S_\TO \etab + \xib_\TO$, where $\S_\TO$ is the corresponding $K\times b$ submatrix of $\S$. Unconditional on $\etab$, $\W_\TO$ follows a multivariate Gaussian distribution with mean vector $\mathbf{0}$ and covariance matrix, $\bm\Sigma_{\TO\TO}^\SRE \equiv \S_\TO \E\S_\TO^\top +\I_K$. Applying Proposition \ref{prop:Gau_anamorphosis}, we see that, for $y_{\TO 1}, ..., y_{\TO K} \in \mathbb{R}$,
\begin{align}
[\Y_\TO \mid \bm\theta_\TP]= \left(\prod_{k=1}^Kf_{\TO k}(y_{\TO k})\right) \GCD_{1:K}^{\SRE}\left(F_{\TO 1}(y_{\TO 1}),...,F_{\TO K}(y_{\TO K}); \bm{\Sigma}^{\mathrm{SRE}}_{\TO\TO}\right).\label{eqn:Gau_OBS_ONLY}
\end{align}
This property, that the copula representation is conserved under marginalization, is an attractive feature of our parameterization.

\subsection{Parameter identifiability in $\bm\Sigma^{\mathrm{SRE}}$}\label{sec:identifiability}

The construction of \eqref{eqn:appendix_SRE_gau_copula} finesses the normalization of $\bm\Sigma^{\mathrm{SRE}}$ into its correlation matrix, $\mathbf{R}^{\mathrm{SRE}}$, so avoiding unnecessary and inefficient matrix operations. However, care is needed when parameterizing $\bm\Sigma^\SRE$.  Suppose $\E \equiv \sigma^2_E \times \R$, where $\sigma^2_E > 0$ and $\R$ is a $b\times b$ correlation matrix; and suppose $\xib$ has covariance matrix $\sigma^2_\xi \I_N$. Then, $\bm\Sigma^{\mathrm{SRE}} = \sigma^2_E \S \R \S^\top + \sigma^2_{\xi}\mathbf{I}_N$,
which is over-parameterized. The parameters $\sigma^2_E$ and $\sigma^2_{\xi}$ cannot be identified together, but their ratio can be. Fixing $\sigma_\xi^2 = 1$, we have $\bm\Sigma^\SRE = \theta_s \left(\S\R\S^\top\right) + \I_N$, where $\theta_s$ is interpreted as the relative importance of the spatially dependent $\S\etab$, relative to the vector of independent standard Gaussian vector, $\xib$. In Supplement \ref{sec:appendix_multiresolution}, we extend this parameterization to multiple resolutions of spatial basis functions.

\subsection{The t spatial random effects (t-SRE) copula}\label{sec:t_copula_SRE}

Here, we define a novel t spatial process with an SRE component. Let $\gamma$ follow a Gamma distribution with shape and rate parameters both equal to $\nu/2$, where $\nu > 2$. That is, $\gamma \mid \nu \sim \Gam(\nu/2, \nu/2)$. Then, define the $N$-variate random vector,
\begin{equation}
\V \equiv \gamma^{-1/2}\times \W = \gamma^{-1/2} \times \left(\S\etab + \xib\right).\label{eqn:V}
\end{equation}
As with the Gaussian SRE model, $\V$ in \eqref{eqn:V} also has a hierarchical representation. Write \eqref{eqn:V} as $\V \equiv \S\etabb + \xibb$, where $\etabb \equiv \gamma^{-1/2}\etab$ and $\xibb \equiv \gamma^{-1/2}\xib$. Then, $\V \mid \etabb, \gamma \sim \MVG(\S\etabb, \gamma^{-1}\I_N)$ and $\etabb \mid \gamma,\E \sim \MVG(\mathbf{0}, \gamma^{-1}\E)$. This uses a well known representation of the multivariate t distribution as a scale-mixture of Gaussian distributions \citep[e.g.,][pp. 256--258]{Koop2007}. From this hierarchical representation, integrating out $\etabb$ and $\gamma$ yields the unconditional distribution of $\V$, namely a multivariate t distribution on $\nu > 2$ degrees of freedom with mean vector $\mathbf{0}$ and positive-definite scale matrix $\bm\Sigma^{\SRE}$.

In Proposition \ref{prop:t_copula_SRE} that follows, we show how to obtain the t-SRE copula. Let $F_{t,1:N}(\cdot; \bm\Sigma, \nu)$ and $f_{t,1:N}(\cdot; \bm\Sigma, \nu)$ denote the CDF and PDF of the $N$-variate t distribution with $\nu > 2$ degrees of freedom, respectively, with mean vector $\mathbf{0}$ and positive-definite scale matrix $\bm\Sigma$. For $\mathbf{v} \equiv (v_1, ..., v_N)^\top\in \mathbb{R}^N$,   
\begin{equation}
f_{t,1:N}(v_1, ..., v_N;\bm\Sigma, \nu) = \frac{\Gamma\left(\frac{\nu+N}{2}\right)\det(\bm\Sigma)^{-\frac{1}{2}}}{\Gamma\left(\frac{\nu}{2}\right)\left(\nu\pi\right)^{\frac{N}{2}}}\left(1+\frac{1}{\nu}\mathbf{v}^{\top}\bm\Sigma^{-1}\mathbf{v}\right)^{\left(-\frac{\nu+N}{2}\right)},\label{eqn:t_PDF}
\end{equation}
where $\Gamma(\cdot)$ is the Gamma function. Then, the t copula with parameter given by an SRE covariance matrix, can be derived as follows.

\begin{proposition}\label{prop:t_copula_SRE}
Let $F_{t,1:N}(\cdot; \bm\Sigma, \nu)$ and $f_{t,1:N}(\cdot; \bm\Sigma, \nu)$ be the CDF and PDF, respectively, of an $N$-variate t distribution on $\nu > 2$ degrees of freedom with mean vector $\mathbf{0}$ and positive-definite scale matrix $\bm\Sigma$. For $0 \leq u_1, ..., u_N \leq 1$, the $N$-variate t copula with an SRE scale matrix and its joint copula density are, respectively,
\begin{align}
\mathfrak{T}_{1:N}^{\mathrm{SRE}}(u_1, ..., u_N; \bm{\Sigma}^{\mathrm{SRE}}, \nu) &= F_{t,1:N}(\sigma_1^{\mathrm{SRE}} T^{-1}_\nu(u_1), ..., \sigma_N^{\mathrm{SRE}} T^{-1}_\nu(u_N); \bm{\Sigma}^{\mathrm{SRE}}, \nu)\label{eqn:appendix_SRE_t_copula}\\
\mathfrak{t}_{1:N}^{\mathrm{SRE}}(u_1, ..., u_N; \bm{\Sigma}^{\mathrm{SRE}}, \nu) &= \frac{f_{t,1:N}(\sigma_1^{\mathrm{SRE}} T^{-1}_\nu(u_1), ..., \sigma_N^{\mathrm{SRE}} T^{-1}_\nu(u_N); \bm{\Sigma}^{\mathrm{SRE}}, \nu)}{\prod_{j=1}^N t_\nu(T^{-1}_\nu(u_j))/ \sigma_j^{\mathrm{SRE}}},\label{eqn:appendix_SRE_t_copula_density}
\end{align}
where $t_\nu(\cdot)$, $T_\nu(\cdot)$ , and $T_\nu^{-1}(\cdot)$ are the PDF, CDF, and quantile function of a standardized t distribution on $\nu > 2$ degrees of freedom, respectively, and recall that $\sigma_1^{\mathrm{SRE}}, ..., \sigma_N^{\mathrm{SRE}}$ are the square-roots of the diagonal elements of $\bm\Sigma^{\mathrm{SRE}}$.
\end{proposition}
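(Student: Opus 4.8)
The plan is to mirror the argument behind Proposition~\ref{prop:Gau_cop_with_SRE}, replacing the Gaussian law with the multivariate t. The starting point is the fact, already established in Section~\ref{sec:t_copula_SRE}, that the vector $\V = \gamma^{-1/2}(\S\etab + \xib)$ in \eqref{eqn:V} has an $N$-variate t distribution on $\nu > 2$ degrees of freedom with mean vector $\mathbf{0}$ and scale matrix $\bm\Sigma^{\SRE}$; that is, its joint CDF is $F_{t,1:N}(\cdot;\bm\Sigma^{\SRE},\nu)$ and its joint PDF is $f_{t,1:N}(\cdot;\bm\Sigma^{\SRE},\nu)$ as in \eqref{eqn:t_PDF}. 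A copula is then extracted from this joint law via Sklar's theorem, exactly as in \eqref{eqn:sklars_theorem}--\eqref{eqn:copula_pdf}.

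First I would identify the marginal distributions of the components of $\V$. Using the scale-mixture representation $\W\mid\etab\sim\MVG(\S\etab,\I_N)$, $\etab\mid\E\sim\MVG(\mathbf{0},\E)$, $\gamma\mid\nu\sim\Gam(\nu/2,\nu/2)$, the $j$-th component satisfies $V_j = \gamma^{-1/2}W_j$ with $W_j$ marginally Gaussian, mean $0$ and variance $(\sigma_j^{\SRE})^2$; hence $V_j = \sigma_j^{\SRE}\,\gamma^{-1/2}Z_j$ for a standard Gaussian $Z_j$ independent of $\gamma$. Therefore $V_j/\sigma_j^{\SRE}$ is a standardized t random variable on $\nu$ degrees of freedom, so the marginal CDF of $V_j$ is $F_{V_j}(v_j) = T_\nu(v_j/\sigma_j^{\SRE})$, the marginal PDF is $f_{V_j}(v_j) = t_\nu(v_j/\sigma_j^{\SRE})/\sigma_j^{\SRE}$, and the quantile function is $F_{V_j}^{-1}(u_j) = \sigma_j^{\SRE}T_\nu^{-1}(u_j)$.

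Next I would assemble the copula. Since each $F_{V_j}$ is continuous, Sklar's theorem gives the unique copula of $\V$ as the joint CDF of $\V$ evaluated at the marginal quantile functions, $\mathfrak{T}_{1:N}^{\SRE}(u_1,\dots,u_N;\bm\Sigma^{\SRE},\nu) = F_{t,1:N}(F_{V_1}^{-1}(u_1),\dots,F_{V_N}^{-1}(u_N);\bm\Sigma^{\SRE},\nu)$, and substituting $F_{V_j}^{-1}(u_j)=\sigma_j^{\SRE}T_\nu^{-1}(u_j)$ yields \eqref{eqn:appendix_SRE_t_copula}. For the density I would use the standard identity (cf. \eqref{eqn:copula_pdf}) $\mathfrak{t}_{1:N}^{\SRE}(u_1,\dots,u_N) = f_{t,1:N}(F_{V_1}^{-1}(u_1),\dots,F_{V_N}^{-1}(u_N);\bm\Sigma^{\SRE},\nu)\big/\prod_{j=1}^N f_{V_j}(F_{V_j}^{-1}(u_j))$, obtained by differentiating the previous display $N$ times (equivalently, a change of variables between $\V$ and $\U$ whose Jacobian is $\prod_{j=1}^N f_{V_j}(F_{V_j}^{-1}(u_j))$, as in the transformation-of-variables step of Proposition~\ref{prop:Gau_anamorphosis}). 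Evaluating the denominator with $f_{V_j}(F_{V_j}^{-1}(u_j)) = t_\nu(T_\nu^{-1}(u_j))/\sigma_j^{\SRE}$ gives \eqref{eqn:appendix_SRE_t_copula_density}.

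I do not expect a serious obstacle; the argument is a direct t-analogue of Proposition~\ref{prop:Gau_cop_with_SRE}. The two points requiring care are (i) justifying that the marginal of the $N$-variate t in this mean--scale parameterization is a scaled univariate t on the \emph{same} degrees of freedom $\nu$, which is cleanest via the scale-mixture construction already set up in Section~\ref{sec:t_copula_SRE} because the mixing variable $\gamma$ is shared across all components; and (ii) tracking the scaling constants so that the factors $\sigma_j^{\SRE}$ land in the correct places in both \eqref{eqn:appendix_SRE_t_copula} and \eqref{eqn:appendix_SRE_t_copula_density}. It is worth noting that the particular SRE form $\bm\Sigma^{\SRE}=\theta_s\,\S\R\S^\top+\I_N$ plays no role in the identities themselves—they hold for any positive-definite scale matrix—and enters only through the Sherman--Morrison--Woodbury evaluation of $\det(\bm\Sigma^{\SRE})$, $(\bm\Sigma^{\SRE})^{-1}$, and the diagonal entries $(\sigma_j^{\SRE})^2$ needed to compute \eqref{eqn:t_PDF} and the copula efficiently.
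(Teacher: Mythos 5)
Your proposal is correct and follows essentially the same route as the paper's proof: identify the marginals of $\V$ as scaled univariate t distributions on $\nu$ degrees of freedom (you justify this explicitly via the shared mixing variable $\gamma$, which the paper simply asserts), apply Sklar's theorem to obtain \eqref{eqn:appendix_SRE_t_copula}, and differentiate with the chain rule to get \eqref{eqn:appendix_SRE_t_copula_density}. No gaps.
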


Similar to the Gaussian-SRE copula, the proposition below shows how a particular elementwise transformation of $\Y$ to $\V$ induces the t-SRE copula model for $[\Y\mid \bm\theta_\TP]$. 

\begin{proposition}\label{prop:t_anamorphosis}
    Define the t anamorphosis function as $V(A_j) \equiv \sigma^{\SRE}_jT_\nu^{-1}(F_j(Y(A_j)))$, $j = 1, ..., N$. Let $\V \equiv (V(A_1), ..., V(A_N))^\top$ follow a multivariate t distribution on $\nu > 2$ degrees of freedom with mean vector $\mathbf{0}$ and positive-definite scale matrix $\bm\Sigma^\SRE$. Then, by a transformation of variables, and for $y_1, ..., y_N \in \mathbb{R}$, it follows that,
    \begin{align}
        [\Y \mid \bm\theta_\TP]
        &= \left(\prod_{j=1}^N f_j(y_j)\right)\mathfrak{t}_{1:N}^{\mathrm{SRE}}(F_1(y_1), ..., F_N(y_N); \bm{\Sigma}^{\mathrm{SRE}}, \nu),\label{eqn:t_SRE_copula_model}
    \end{align}
    where $\mathfrak{t}_{1:N}^\SRE$ is given by \eqref{eqn:appendix_SRE_t_copula_density}
\end{proposition}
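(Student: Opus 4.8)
The plan is to prove Proposition~\ref{prop:t_anamorphosis} by a monotone change of variables, exactly paralleling the proof of Proposition~\ref{prop:Gau_anamorphosis} but with the Gaussian quantile function replaced by the standardized-t quantile function $T_\nu^{-1}$ and with the multivariate-t density \eqref{eqn:t_PDF} in place of the multivariate Gaussian density. First I would observe that the anamorphosis map $\Y \mapsto \V$ with $V(A_j) = \sigma_j^\SRE T_\nu^{-1}(F_j(Y(A_j)))$ acts coordinatewise, and in each coordinate it is the composition of the CDF $F_j$ (strictly increasing on the support of $Y(A_j)$ under the standing absolute-continuity assumption on the marginals), the standardized-t quantile function $T_\nu^{-1}$, and multiplication by the positive constant $\sigma_j^\SRE$. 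Hence it is a strictly increasing $C^1$ bijection onto its range with a diagonal Jacobian matrix, so the transformation-of-variables formula applies.

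Next I would record the derivative of the quantile function, $\mathrm{d}\, T_\nu^{-1}(u)/\mathrm{d} u = 1/t_\nu(T_\nu^{-1}(u))$, and apply the chain rule to get $\mathrm{d} v_j/\mathrm{d} y_j = \sigma_j^\SRE f_j(y_j)/t_\nu(T_\nu^{-1}(F_j(y_j)))$; since the Jacobian is diagonal, its determinant is the product of these $N$ factors. Starting from the joint PDF $f_{t,1:N}(\cdot;\bm\Sigma^\SRE,\nu)$ of $\V$, the change-of-variables formula then gives
\[
[\Y \mid \bm\theta_\TP] = f_{t,1:N}\big(\sigma_1^\SRE T_\nu^{-1}(F_1(y_1)), \ldots, \sigma_N^\SRE T_\nu^{-1}(F_N(y_N)); \bm\Sigma^\SRE, \nu\big)\prod_{j=1}^N \frac{\sigma_j^\SRE f_j(y_j)}{t_\nu(T_\nu^{-1}(F_j(y_j)))}.
\]
Factoring out $\prod_{j=1}^N f_j(y_j)$, what remains is $f_{t,1:N}(\sigma_1^\SRE T_\nu^{-1}(F_1(y_1)),\ldots;\bm\Sigma^\SRE,\nu) \big/ \prod_{j=1}^N\big(t_\nu(T_\nu^{-1}(F_j(y_j)))/\sigma_j^\SRE\big)$, which is precisely $\mathfrak{t}_{1:N}^\SRE(F_1(y_1),\ldots,F_N(y_N);\bm\Sigma^\SRE,\nu)$ by the definition \eqref{eqn:appendix_SRE_t_copula_density} with $u_j = F_j(y_j)$. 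This is \eqref{eqn:t_SRE_copula_model}.

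There is no genuine obstacle here; the computation is routine and the only points deserving a sentence of care are (i) the invertibility of each $F_j$ on the support, which follows from absolute continuity of the marginals, and (ii) that the diagonal standardization by $\sigma_j^\SRE$ is the right one, which holds because a univariate marginal of the multivariate-t law with scale matrix $\bm\Sigma^\SRE$ and $\nu$ degrees of freedom is a univariate t with scale $\sigma_j^\SRE$, so $V(A_j)/\sigma_j^\SRE$ is standardized t and $T_\nu(V(A_j)/\sigma_j^\SRE) = F_j(Y(A_j))$ is uniform on $[0,1]$ --- confirming that $Y(A_j)$ has marginal CDF $F_j$ and that the construction is coherent. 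As an alternative, one could bypass the explicit Jacobian and invoke Sklar's theorem directly: since $\mathfrak{t}_{1:N}^\SRE$ is by construction the copula density of the multivariate-t law of $\V$, substituting it into \eqref{eqn:copula_pdf} yields the claim at once; I would nonetheless present the transformation-of-variables version to keep the parallel with Proposition~\ref{prop:Gau_anamorphosis}.
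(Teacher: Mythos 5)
Your proposal is correct and follows essentially the same route as the paper's proof: a coordinatewise change of variables from $\V$ to $\Y$, with the Jacobian factor $\partial v_j/\partial y_j = \sigma_j^{\mathrm{SRE}} f_j(y_j)/t_\nu(T_\nu^{-1}(F_j(y_j)))$, followed by factoring out $\prod_{j=1}^N f_j(y_j)$ and recognizing the remaining ratio as $\mathfrak{t}_{1:N}^{\mathrm{SRE}}$ from \eqref{eqn:appendix_SRE_t_copula_density}. Your added remarks on the marginal standardization and the Sklar's-theorem shortcut are consistent with, but not needed beyond, the paper's argument.
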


Proposition \ref{prop:t_anamorphosis} provides a straightforward method for deriving $[\Y_\TO \mid \bm\theta_\TP]$ under the t-SRE copula model. Define $\V_\TO  \equiv (V(A_{\TO 1}), ..., V(A_{\TO K}))^\top \equiv \gamma^{-1/2}\times \W_\TO$, which unconditionally follows a $K$-variate t distribution on $\nu > 2$ degrees of freedom with mean vector $\mathbf{0}$ and positive-definite scale matrix $\bm\Sigma_{\TO\TO}^\SRE$. By a transformation of variables associated with the t anamorphosis in Proposition \ref{prop:t_anamorphosis}, we obtain,
\begin{align}
[\Y_\TO \mid \bm\theta_\TP]= \left(\prod_{k=1}^Kf_{\TO k}(y_{\TO k})\right)\TCD_{1:K}^\SRE(F_{\TO1}(y_{\TO 1}), ..., F_{\TO K}(y_{\TO K}); \bm\Sigma^\SRE_{\TO\TO}, \nu).\label{eqn:tanamorphosis}
\end{align}
Hence, the representation of a t-SRE copula is conserved under marginalization.

\section{Markov chain Monte Carlo (MCMC) sampling}\label{sec:Bayesian_inference}

Our copula-based SRE models enable fast Bayesian inference for large spatial datasets. In this section, we show how the hierarchical representation of the copula-based model for $[\Y \mid \bm\theta_\TP]$ enables a computationally efficient Markov chain Monte Carlo (MCMC) algorithm to obtain the posterior distributions of $\Y$ and $\bm\theta_\TP$ given the data $\Z_\TO$. Section \ref{sec:Gibbs} sets out the type of algorithm we use. Sections \ref{sec:MCMC_Gaussian_case} and \ref{sec:MCMC_t_case} particularize this to, respectively, the Gau-SRE copula model and the t-SRE copula model.

\subsection{Gibbs sampler with Metropolis-Hastings steps}\label{sec:Gibbs}

We use a Gibbs sampler \citep{Gelfand1990} with some Metropolis-Hastings steps \citep{Metropolis1953, Hastings1970}, which we refer to as Metropolis-within-Gibbs (MwG) steps. We require the full-conditional distributions of all parameters and latent variables in the model. We denote the full-conditional density for $\bm\theta_\TP$ (say) as $[\bm\theta_\TP \mid \rest]$, where ``rest'' is used to represent all remaining variables. In the example, ``rest'' stands for $\Z_\TO$ and $\Y_\TO$. We write $[\bm\theta_\TP \mid \rest] \propto [\Z_\TO \mid \Y_\TO, \bm\theta_\TO] \times [\Y_\TO \mid \bm\theta_\TP] \times [\bm\theta_\TP]$, which simplifies to $[\bm\theta_\TP \mid \rest] \propto [\Y_\TO \mid \bm\theta_\TP] \times [\bm\theta_\TP]$, because the term with $\Z_\TO$ does not depend on $\bm\theta_\TP$.

\subsection{MCMC for the Gau-SRE copula model}\label{sec:MCMC_Gaussian_case}

The key to computationally efficient Bayesian inference in our copula-based SRE models is to leverage the fact that, due to \eqref{eqn:Gaussian_SRE_model}, the elements of $\W$ (and therefore $\Y$) are conditionally independent of each other given $\etab$.  Therefore, by Proposition \ref{prop:Gau_anamorphosis}, which connects $\W$ to $\Y$ in the Gau-SRE copula model, we can represent the process model in terms of $[\Y \mid \etab, \bm\theta_\TP]  = [\Y_\TO \mid \etab, \bm\theta_\TP] \times [\Y_\TM \mid \etab, \bm\theta_\TP]$ and $[\etab \mid \bm\theta_\TP]$. Using \eqref{eqn:Gaussian_SRE_model}, we can see that $\W\mid \etab, \bm\theta_\TP \sim \MVG(\S\etab, \I_N)$; it also easily follows that $\W_\TO \mid \etab, \bm\theta_\TP \sim \MVG(\S_\TO\etab, \I_N)$ and $\W_\TM \mid \etab, \bm\theta_\TP \sim \MVG(\S_\TM\etab, \I_N)$. Then by applying the same transformation of variables used in Proposition \ref{prop:Gau_anamorphosis}, we have, for $y_{\TO 1}, ..., y_{\TO K}, y_{\TM 1}, ..., y_{\TM L}\in\mathbb{R}$,
\begin{align}
    [\Y_\TO \mid \etab, \bm\theta_\TP] &= \prod_{k=1}^K\left(\frac{\sigma^{\SRE}_{\TO k}f_{\TO k}(y_{\TO k})}{\phi(\Phi^{-1}(F_{\TO k}(y_{\TO k})))} \frac{\exp\{-0.5(w_{\TO k} - \S(A_{\TO k})^\top\etab)^2\}}{(2\pi)^{1/2}}\right),\label{eqn:obs_gau_conditional}\\
    [\Y_\TM \mid \etab, \bm\theta_\TP] &= \prod_{l=1}^L\left(\frac{\sigma^{\SRE}_{\TM l}f_{\TM l}(y_{\TM l})}{\phi(\Phi^{-1}(F_{\TM l}(y_{\TM l})))} \frac{\exp\{-0.5(w_{\TM l} - \S(A_{\TM l})^\top\etab)^2\}}{(2\pi)^{1/2}}\right),\label{eqn:preds_gau_conditional}
\end{align}
where $\sigma^\SRE_{\TO k}$ is the square-root of the $k$-th diagonal element of $\bm\Sigma^\SRE_{\TO\TO} \equiv \S_\TO \E \S_\TO^\top + \I_K$; $\sigma_{\TM l}^{\SRE}$ is the square-root of the $l$-th diagonal element of $\bm\Sigma^\SRE_{\TM\TM} \equiv \S_\TM \E \S_\TM^\top + \I_L$; $w_{\TO k} \equiv \sigma^\SRE_{\TO k}\Phi^{-1}(F_{\TO k}(y_{\TO k}))$ for $k=1,...,K$; and $w_{\TM l} \equiv \sigma^\SRE_{\TM l}\Phi^{-1}(F_{\TM l}(y_{\TM l}))$ for $l = 1, ..., L$.

Under the hierarchical representation of the Gau-SRE copula model, the full joint posterior  can be written as, 
\begin{align}
    [\Y_\TM, &\Y_\TO, \etab, \bm\theta_\TP \mid \Z_\TO] = [\Y_\TM \mid \Y_\TO, \etab, \bm\theta_\TP] [\Y_\TO, \etab,\bm\theta_\TP \mid \Z_\TO]\nonumber\\
    &\propto [\Y_\TM \mid \etab, \bm\theta_\TP] \Big\{[\Z_\TO\mid \Y_\TO, \bm\theta_\TO] [\Y_\TO \mid \etab, \bm\theta_\TP] [\etab\mid \bm\theta_\TP][\bm\theta_\TP]\Big\}.\label{eqn:joint_posterior_GauSRE}
\end{align}
To sample from the full joint posterior on the left-hand side (LHS) of \eqref{eqn:joint_posterior_GauSRE}, we sample from the posterior distribution $[\Y_\TO, \etab, \bm\theta_\TP \mid \Z]$ and then sample from the conditional distribution $[\Y_\TM \mid \Y_\TO, \etab, \bm\theta_\TP] = [\Y_\TM \mid \etab, \bm\theta_\TP]$ in \eqref{eqn:preds_gau_conditional}. Hence, we need the full-conditional distributions of $\Y_\TO$, $\etab$, and $\bm\theta_\TP$. In Supplement \ref{sec:appendix_FCs_Gau}, we show the full conditionals and $[\Y_\TM \mid \Y_\TO, \etab, \bm\theta_\TP]$ are either available in closed form or can be sampled efficiently using MwG steps by leveraging the conditional independence of the elements of $\Y_\TO$ and $\Y_\TM$ given $\etab$.  

\subsection{MCMC for the t-SRE copula model}\label{sec:MCMC_t_case}

Efficient MCMC for the t-SRE copula model works similarly. From \eqref{eqn:V}, the elements of $\V$ are conditionally independent given $\etabb$, $\gamma$, and $\bm\theta_\TP$. Then we can exploit the fact that Proposition \ref{prop:t_anamorphosis} links $\V$ to $\Y$ via a transformation of random variables to rewrite the process model of the t-SRE copula models in terms of $[\Y \mid \etabb, \gamma, \bm\theta_\TP] = [\Y_\TO \mid \etabb, \gamma, \bm\theta_\TP] \times [\Y_\TM \mid \etabb, \gamma, \bm\theta_\TP]$, $[\etabb\mid \gamma,\bm\theta_\TP]$, and $[\gamma \mid \bm\theta_\TP]$.  In particular, we have $\V \mid \etabb, \gamma, \bm\theta_\TP \sim \MVG(\S\etabb, \gamma^{-1}\I_N)$. It also follows that $\V_\TO \mid \etabb, \gamma, \bm\theta_\TP \sim \MVG(\S_\TO\etabb, \gamma^{-1}\I_K)$ and $\V_\TM \mid \etabb, \gamma, \bm\theta_\TP \sim \MVG(\S_\TM\etabb, \gamma^{-1}\I_L)$. By applying the same transformation of variables used in Proposition \ref{prop:t_anamorphosis}, we obtain,
for $y_{\TO 1}, ..., y_{\TO K}, y_{\TM 1}, ..., y_{\TM L} \in \mathbb{R}$,
\begin{align}
    [\Y_\TO \mid \etabb, \gamma, \bm\theta_\TP] &= \prod_{k=1}^K\left(\frac{\sigma^{\SRE}_{\TO k}f_{\TO k}(y_{\TO k})}{t_\nu(T_\nu^{-1}(F_{\TO k}(y_{\TO k})))} \frac{\exp\{-0.5\gamma(v_{\TO k} - \S(A_{\TO k})^\top\etabb)^2\}}{(2\pi\gamma^{-1})^{1/2}}\right),\label{eqn:obs_t_conditional}\\
    [\Y_\TM \mid \etabb,\gamma, \bm\theta_\TP] &= \prod_{l=1}^L\left(\frac{\sigma^{\SRE}_{\TM l}f_{\TM l}(y_{\TM l})}{t_\nu(T_\nu^{-1}(F_{\TM l}(y_{\TM l})))} \frac{\exp\{-0.5\gamma(v_{\TM l} - \S(A_{\TM l})^\top\etabb)^2\}}{(2\pi\gamma^{-1})^{1/2}}\right),\label{eqn:preds_t_conditional}
\end{align}
where $v_{\TO k} \equiv \sigma^\SRE_{\TO k}T_\nu^{-1}(F_{\TO k}(y_{\TO k}))$, $k = 1, ..., K$, and $v_{\TM l} \equiv  \sigma^\SRE_{\TM l}T_\nu^{-1}(F_{\TM l}(y_{\TM l}))$, $l = 1, ..., L$. 

The conditional distributions \eqref{eqn:obs_t_conditional} and \eqref{eqn:preds_t_conditional} are needed to write the full joint posterior distribution under the hierarchical representation of the t-SRE copula model. This is, 
\begin{align}
    [&\Y_\TM, \Y_\TO, \etabb, \gamma, \bm\theta_\TP \mid \Z_\TO] = [\Y_\TM \mid \Y_\TO, \etabb, \gamma, \bm\theta_\TP][\Y_\TO, \etabb, \gamma, \bm\theta_\TP \mid \Z_\TO]\nonumber\\
    &\propto [\Y_\TM \mid \etabb, \gamma, \bm\theta_\TP]\Big\{[\Z_\TO \mid \Y_\TO, \bm\theta_\TO][\Y_\TO \mid \etabb, \gamma, \bm\theta_\TP][\etabb \mid \gamma, \bm\theta_\TP][\gamma \mid \bm\theta_\TP][\bm\theta_\TP]\Big\},\label{eqn:hierarchical_joint_t}
\end{align}
where $\bm\theta_\TP$ includes the degrees-of-freedom parameter $\nu > 2$, in addition to all those present for the Gau-SRE model. To sample from the full joint posterior on the LHS of \eqref{eqn:hierarchical_joint_t}, we sample from the joint posterior $[\Y_\TO, \etabb, \gamma,\bm\theta_\TP \mid \Z_\TO]$ and also the conditional distribution, $ [\Y_\TM \mid \Y_\TO, \etabb, \gamma,\bm\theta_\TP] = [\Y_\TM \mid \etabb, \gamma, \bm\theta_\TP]$ in \eqref{eqn:preds_t_conditional}. Like in the Gaussian case, Supplement \ref{sec:appendix_FCs_t} shows that the full-conditional distributions of $\Y_\TO$, $\etabb$, $\gamma$, and $\bm\theta_\TP$, as well as \eqref{eqn:preds_t_conditional}, are either available in closed form or can be sampled efficiently with MwG steps as a consequence of our SRE representation of the t-SRE copula model. 

\section{Simulation studies}\label{sec:simulation_studies}

The simulation studies given in this section demonstrate parameter recovery, prediction, uncertainty quantification, and computational efficiency for the copula-based hierarchical spatial-statistical models developed in Section \ref{sec:spatial_copula_model}. Section \ref{sec:models} outlines four copula-based spatial-statistical models. Section \ref{sec:setup} outlines the structure of the simulation experiments. Section \ref{sec:sim_results} shows selected results and interprets them, with the remaining results given in Supplement \ref{sec:appendix_sims}. 

\subsection{Models}\label{sec:models}

Four basic models are used in this simulation study. They are obtained by using log-Gaussian (LG) and skew-Gaussian (SG) \citep{Azzalini1985} marginals with joint dependence captured by the Gau-SRE and t-SRE copula models. Hence, there are four models ($2\times 2$), which are labelled LG-Gau-SRE, SG-Gau-SRE, LG-t-SRE, and SG-t-SRE models, whose names indicate the marginals and the copula used. For both sets of marginals, we set $E(Y(A_j) \mid \bm\theta_\TP) = \exp\{\beta_0\}$ (a log-linear model with only an intercept term) and $\mathrm{var}(Y(A_j) \mid \bm\theta_\TP) = \sigma^2_\TP$, for $j = 1, ..., N$. Here, $\bm\theta_\TP$ depends on the copula model. For the LG-Gau-SRE model, $\bm\theta_\TP =(\beta_0, \sigma^2_\TP, \theta_s, \theta_r)^\top$, with an extra degrees-of-freedom parameter $\nu > 2$ for the t-SRE copula models, and a skewness parameter $\lambda$ for the SG-Gau-SRE and SG-t-SRE models. The parameters $\theta_s$ and $\theta_r$ (explained below) control spatial dependence in the copula. The log-Gaussian CDF for $Y(A_j)$ is written as $F_{LG,j}(y_j;\beta_0, \sigma^2_\TP )$, $y_j \in (0, \infty)$, and the skew-Gaussian CDF is written as $F_{SG, j}(y_j; \beta_0, \sigma^2_\TP, \lambda)$, $y_j \in \mathbb{R}$, where $\lambda \in \mathbb{R}$ is the skewness parameter of the skew-Gaussian distribution. Note that the LG-Gau-SRE model is equivalent to a log-Gaussian SRE model, where a version is  implemented  in the R package FRK \citep{SainsburyDale2024}. See Supplement \ref{sec:appendix_LGGau}--\ref{sec:appendix_FRK_SG} for more details.

\subsection{Structure of the simulation study}\label{sec:setup}

The spatial domain for the simulation study was the unit square, $D \equiv [0, 1]^2$, split into 10,000 equal-sized BAUs. Of these, 5,000 were `observed BAUs', $\mathcal{A}_{\TO}$, and 5,000 were `missing BAUs', $\mathcal{A}_\TM$. The choice of $K = 5,000$ observed BAUs was motivated by the size of the real datasets analyzed in Section \ref{sec:methane}. The missing BAUs were either randomly selected, hence `missing at random' (MAR), or systematically missing in blocks, hence `missing by design' (MBD). The MBD case was designed to provide a true test of the predictive performance of our models, since most missing BAUs are far from any observations. For the MBD case, the 2,500 BAUs in the top-left and bottom-right corners of the unit square were all designated as members of $\mathcal{A}_\TM$. Once chosen, the same $\mathcal{A}_{\TO}$ and $\mathcal{A}_{\TM}$ were used for each model. See Supplement \ref{sec:appendix_setup} for maps of the BAUs in the MAR and MBD configurations. 

Then, $R = 100$ realizations of $\Y$ and $\Z_\TO$ were simulated. Details are given below for a t-SRE copula model, and Supplement \ref{sec:appendix_other_models_sims} shows how datasets were simulated for models with Gau-SRE copulas. First, the true scale matrix for the random-effects vector was calculated as $\E^*(\theta_s, \theta_r) \equiv \theta_s\exp\{-\mathbf{D}/\theta_r\}$, elementwise, where $\mathbf{D}$ is a matrix whose $(i,j)$-th entry is the distance between the $i$-th and $j$-th spatial basis functions. The true values of the parameters were set as $\beta_0^* = \log(1000)$, $\lambda^* = -5$, $\theta_s^* = 10$, $\theta_r^* = \sqrt{2}/4$, and $\nu^* = 4$ for all models. For the models with log-Gaussian marginals, we set $(\sigma^*_\TP)^2 = (0.1)^2$. For the models with skew-Gaussian marginals, we set $(\sigma^*_\TP)^2 = (100)^2$. In both cases, we set $(\sigma^*_\TO)^2 = 0.05 (\sigma^*_\TP)^2$. The FRK package \citep{ZM2021, SainsburyDale2024} in R  was used to generate 36 bisquare spatial basis functions \citep{Cressie2008} with apertures of $0.375$ units on a regular grid across $D$ (see Supplement \ref{sec:appendix_setup}). To generate the simulated datasets, let $\bm\theta_\TP^*$ be a vector of the true parameter values for the model. For $r= 1, ..., R$, we simulated $\gamma^{(r)} \sim [\gamma \mid \bm\theta_\TP^*]$; then we simulated $\bar\etab^{(r)} \sim [\etabb \mid \bm\theta_\TP^*, \gamma^{(r)}]$; then we simulated $\Y^{(r)} \sim [\Y\mid \bar\etab^{(r)}, \gamma^{(r)}, \bm\theta_\TP^*]$, where recall $\Y^{(r)} \equiv ((\Y_\TO^{(r)})^\top, (\Y_\TM^{(r)})^\top)^\top$; and finally, we simulated the observed spatial data $\Z_\TO^{(r)} \sim [\Z_\TO \mid \Y_\TO^{(r)}, (\sigma^*_\TO)^2]$. 

For each of $R=100$ simulated datasets and for each model, fully Bayesian inference was carried out via MCMC. The weakly informative priors for the parameters are shown in Supplement \ref{sec:appendix_sim_priors}, and the initial values are described in Supplement \ref{sec:appendix_sim_initial_values}. We ran 45,000 MCMC iterations for every simulated dataset, discarding the first 5,000 as burn-in, and subsequently keeping one in every four samples. For the LG-Gau-SRE and LG-t-SRE models, it took, respectively, 7-8 minutes and 18-20 minutes to run the MCMC for 45,000 iterations. For the SG-Gau-SRE and SG-t-SRE models, it took, respectively, 16-18 minutes and 27-28 minutes. To speed up computations, the skew-Gaussian CDF and quantile function were evaluated at 101 points and interpolated using splines. We ran the MCMC for each dataset in parallel on Australia's National Computing Infrastructure's GADI high-performance computing cluster. 

\subsection{Prediction results}\label{sec:sim_results}

In what follows, data were simulated from the LG-t-SRE and SG-t-SRE models. We investigated the predictive performance and uncertainty quantification of the LG-t-SRE and SG-t-SRE models, and we compared them to several baseline models. The baseline models include a no-measurement-error (NME) version of the models (Supplement \ref{sec:appendix_NH_LGGau}) to assess whether accounting for measurement error provides any benefits; the corresponding LG-Gau-SRE and SG-Gau-SRE models to gauge the effect of using a misspecified copula model to spatially predict a spatial process based on the t-SRE copula; and fixed rank kriging (FRK; Supplement \ref{sec:appendix_FRK_general}--\ref{sec:appendix_FRK_SG}), which is an important dimension-reduction model for large spatial data that accounts for measurement error. 

For illustration, Fig. \ref{fig:sim_maps} displays the first ($r=1$) simulated dataset (including both the latent-process vector $\Y$ at $\mathcal{A}_\TO$ and $\mathcal{A}_\TM$ and spatial data $\Z_\TO$ at $\mathcal{A}_\TO$ only) from the LG-t-SRE and SG-t-SRE models when the missing BAUs are MBD. The predictions (posterior means) and their uncertainties, posterior standard deviations (PSDs), from the respective model fitted back to the respective dataset are also shown. For $j = 1, ..., N$ and $r = 1, ..., R$, the predicted value of $Y(\cdot)$ at the $j$-th BAU given the spatial data $\Z^{(r)}_\TO$ is taken to be $E(Y(A_j) \mid \Z^{(r)}_\TO)$, and the PSD, $\sqrt{\mathrm{var}(Y(A_j)\mid \Z^{(r)}_\TO)}$, is used to summarize prediction uncertainty. The prediction problem in Fig. \ref{fig:sim_maps} is clearly a challenging one, but the LG-t-SRE and SG-t-SRE models successfully recover the surface of the latent-process values. In what follows, we carry out extensive statistical comparisons of model properties across all $R = 100$ datasets to make valid statistical comparisons with alternative models.

\begin{figure}[!ht]
    \centering
    \includegraphics[width=\textwidth]{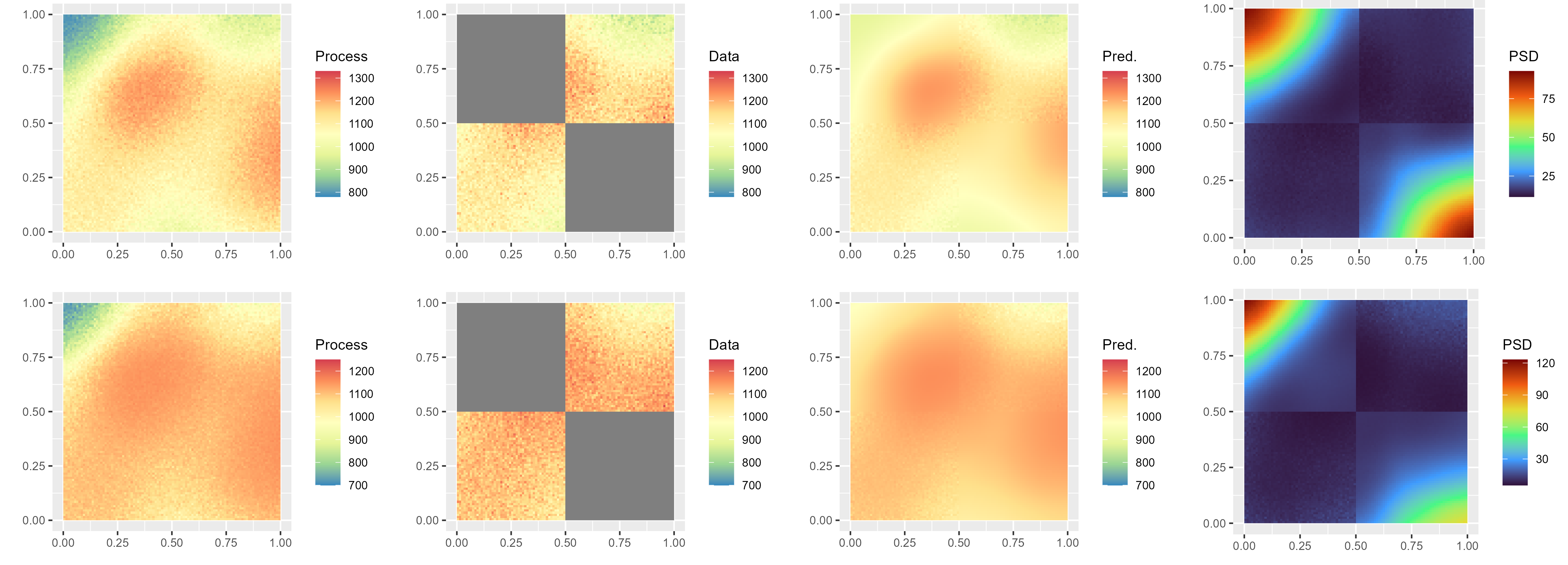}
    \caption{Top row and left to right: First, map of a single realization of a copula-based spatial process with log-Gaussian marginal distributions and a t-SRE copula; then the spatial data when the missing BAUs are missing by design; then a map of the predicted values of the spatial-statistical process; and a map of the posterior standard deviations (PSDs). Bottom row and left to right: As for the top row but with skew-Gaussian marginals.}
    \label{fig:sim_maps}
\end{figure}

To compare models in terms of predictive performance and uncertainty quantification, we use BAU-wise root-mean-squared prediction errors (RMSPEs) and empirical coverages (ECs). Let $\delta( \Z_\TO^{(r)}; A_j)$ be the predicted value of $Y(A_j)$ given the $r$-th spatial dataset $\Z_\TO^{(r)}$, for $j = 1, ..., N$ and $r= 1, ..., R$. Using the ubiquitous squared-error loss function, we take $\delta(\Z_\TO^{(r)}; A_j) \equiv E(Y(A_j) \mid \Z_\TO^{(r)})$ to be the optimal prediction of $Y(A_j)$. The RMSPE at BAU $A_j$ is defined as,
\begin{equation}
    \RMSPE(A_j) \equiv \sqrt{\frac{1}{R}\sum_{r=1}^R\left(\delta(\Z_\TO^{(r)}; A_j) - Y(A_j)^{(r)}\right)^2}, ~~~j = 1, ..., N.\label{eqn:RMSPE_definition}
\end{equation}
For EC, let $\alpha \in (0, 1)$, and let $I_{\lwr}^{(r,\alpha)}(A_j)$ and $I_{\upr}^{(r,\alpha)}(A_j)$ be the $\alpha/2$ and $(1-\alpha/2)$ quantiles of the marginal predictive distribution $[Y(A_j)\mid \Z_\TO^{(r)}]$, respectively, for $j = 1, ..., N$ and $r = 1, ..., R$. Estimates are denoted by $\hat{I}_{\lwr}^{(r,\alpha)}(A_j)$ and $\hat{I}_{\upr}^{(r,\alpha)}(A_j)$. Then, for any given model, the empirical coverage at BAU $A_j$ and level $\alpha$ is defined as,
\begin{equation}
    \EC^{(\alpha)}(A_j) \equiv \frac{1}{R}\sum_{r=1}^R \mathbb{I}\!\left(\hat{I}_{\lwr}^{(r,\alpha)}(A_j) \leq Y(A_j)^{(r)} \leq \hat{I}_{\upr}^{(r,\alpha)}(A_j)\right),~~~j = 1, ..., N,\label{eqn:EC_definition}
\end{equation} 
where $\mathbb{I}(\cdot)$ is the indicator function. Here, $R = 100$, and we chose $\alpha = 0.10$. 

In the present simulation study, recall that $R = 100$ datasets were simulated from the LG-t-SRE and SG-t-SRE models. For the datasets simulated from the LG-t-SRE model, Fig. \ref{fig:LG_prediction_comparisons} compares the RMSPEs under the LG-t-SRE model (the true model) to those under the NME version of the LG-t-SRE model (no measurement error), the LG-Gau-SRE model (a model with the correct marginals but the wrong copula), and a log-Gaussian spatial model in FRK. Similarly, for datasets simulated under the SG-t-SRE model, Fig. \ref{fig:SG_prediction_comparisons} compares the RMSPEs from the SG-t-SRE model (the true model) to the RMSPEs from the NME version of the SG-t-SRE model, the SG-Gau-SRE model (a model with the correct marginals but the wrong copula), and a Gaussian spatial model in FRK. Both plots show that the true model (LG-t-SRE or SG-t-SRE) achieves lower RMSPEs than the competing models at a majority of BAUs. The differences are clearest when the missing BAUs are MBD. As expected, the differences between the models are less pronounced when the missing BAUs are MAR, since all missing BAUs are close to an observed BAU, making the prediction task simpler. Still, it is notable that the NME versions of the LG-t-SRE and SG-t-SRE models appear to achieve the same predictive performance as the true models. Yet the next set of results show that they fail to achieve nominal coverage in terms of EC, despite their good performance in terms of RMSPE.

\begin{figure}[!ht]
    \centering
    \includegraphics[width=0.8\linewidth]{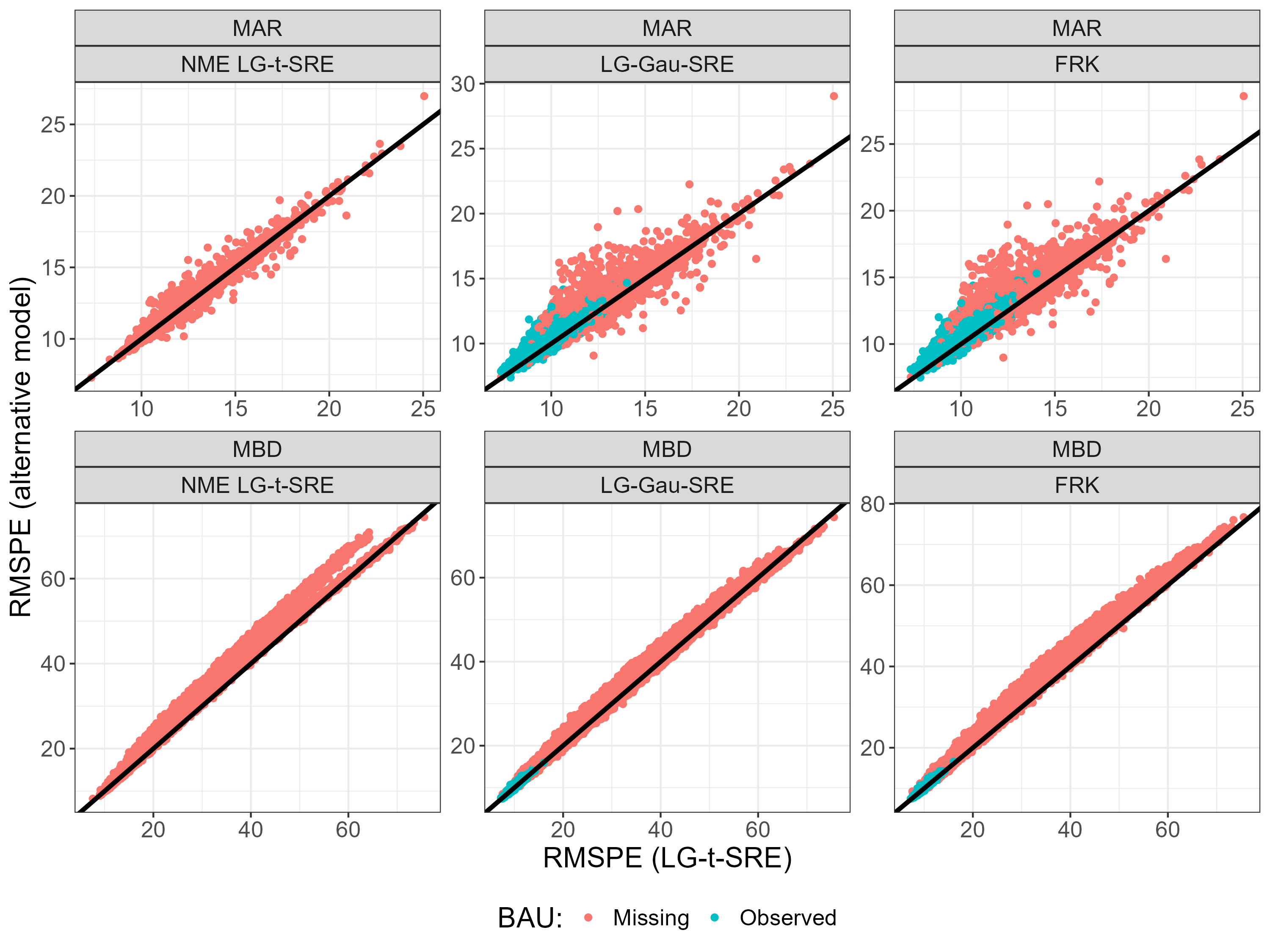}
    \caption{For data generated under the LG-t-SRE model, root-mean-squared prediction errors (RMSPEs) in \eqref{eqn:RMSPE_definition} under the LG-t-SRE model plotted against the RMSPEs under a no-measurement-error (NME) version of the LG-t-SRE model, the LG-Gau-SRE model, and the FRK predictor at each missing and observed basic areal unit (BAU), when the BAUs are missing at random (MAR) or missing by design (MBD). For points above the line, the alternative model has a higher RMSPE than the LG-t-SRE model for that BAU.}
    \label{fig:LG_prediction_comparisons}
\end{figure}

\begin{figure}[!ht]
    \centering
    \includegraphics[width=0.8\linewidth]{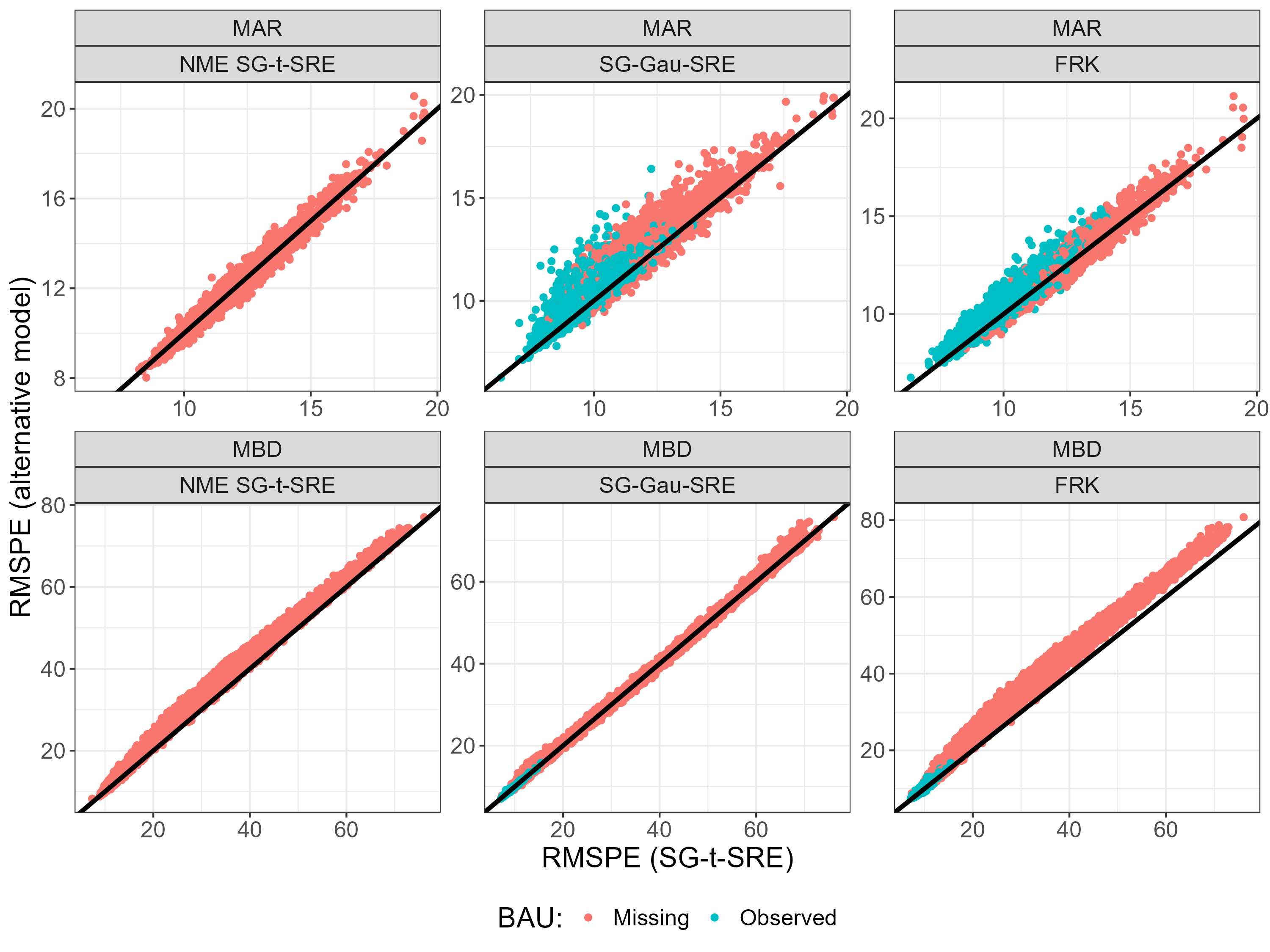}
    \caption{For data generated under the SG-t-SRE model, root-mean-squared prediction errors (RMSPEs) in \eqref{eqn:RMSPE_definition} under the SG-t-SRE model plotted against the RMSPEs under a no-measurement-error version of the SG-t-SRE model, the SG-Gau-SRE model, and the FRK predictor at each missing and observed basic areal unit (BAU), when the BAUs are missing at random (MAR) or missing by design (MBD). Points above the line indicate the alternative model has a higher RMSPE than the SG-t-SRE model for that BAU.}
    \label{fig:SG_prediction_comparisons}
\end{figure}

Fig. \ref{fig:empirical_coverages} shows plots of $\{\EC^{(0.10)}(A_j): j = 1, ..., 10000\}$ for the models fitted to data simulated from the LG-t-SRE and SG-t-SRE models. In Fig. \ref{fig:empirical_coverages}, with nominal coverage of 90\%, the results show that the NME versions of the true models do not have valid 90\% prediction intervals, showing substantial overcoverage in all configurations of the simulation study. This is caused by a failure to separate out variability due to measurement error from the variability of the latent process. 

\begin{figure}
    \centering
    \includegraphics[width=\linewidth]{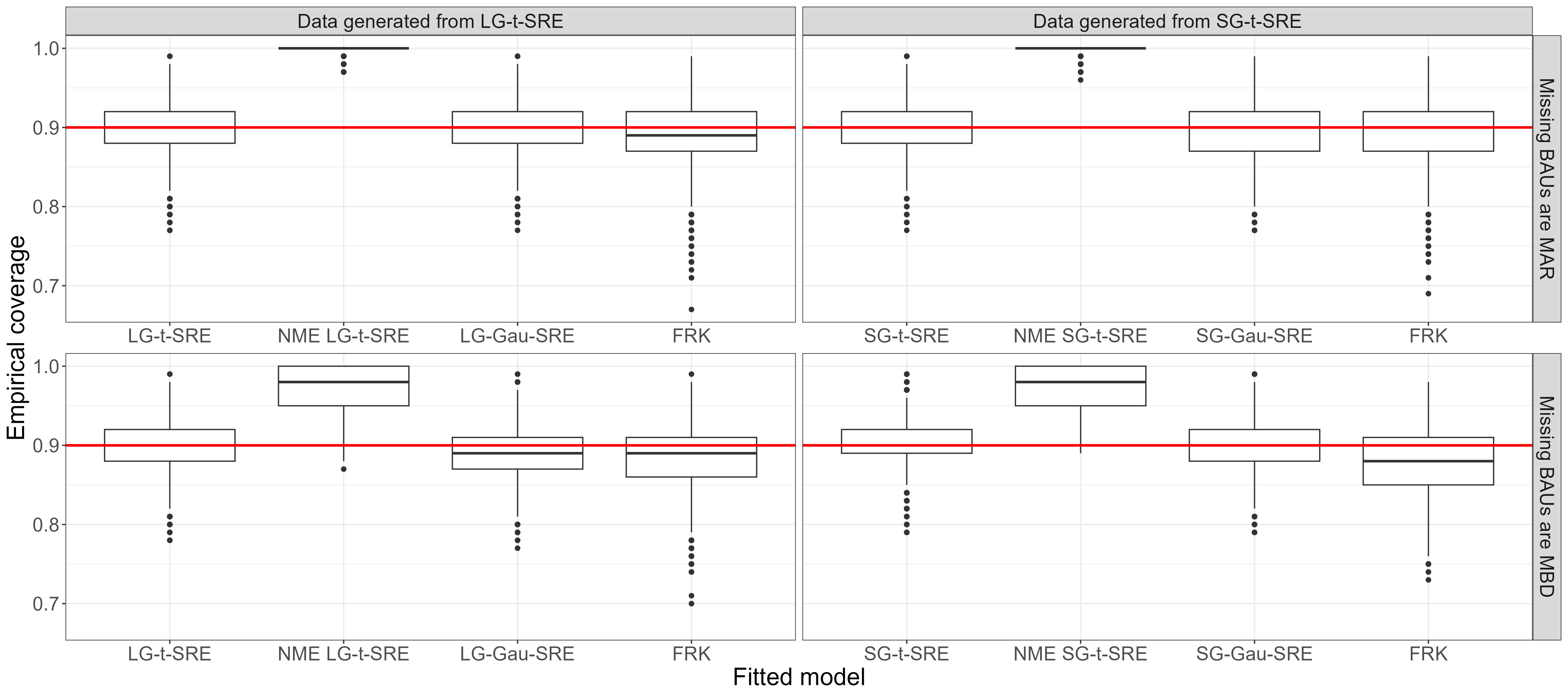}
    \caption{Empirical coverages in \eqref{eqn:EC_definition} calculated for all BAUs in $D$ for models fitted to data simulated from either the LG-t-SRE model or the SG-t-SRE model. The panel titles indicate which copula-based model was used to simulate the data and whether the missing BAUs were `Missing at Random' (MAR) or `Missing by Design' (MBD). The horizontal red line indicates the nominal coverage of 90\%.}\label{fig:empirical_coverages}
\end{figure}

Overall, the results here establish the following: First, there are advantages in using a copula-based SRE model to predict non-Gaussian spatial processes compared to simply using FRK (say). Second, Gaussian spatial models, log-Gaussian spatial models, and the Gau-SRE copula model with the correct marginal distributions (but wrong copula) exhibit poor predictive performance when predicting a spatial process generated by the t-SRE copula. Finally, accounting for measurement errors in spatial data is important. Spatial-copula models with no measurement errors exhibit poor prediction-interval coverage when measurement errors are known to be present in the spatial data.

In Supplement \ref{sec:appendix_additional_predictions_t}--\ref{sec:appendix_timing}, we show additional results. In particular, Supplement \ref{sec:appendix_additional_predictions_t} shows how the marginal predictive distributions of the LG-t-SRE/SG-t-SRE models are different from those of the benchmark models. Supplement \ref{sec:appendix_additional_predictions} presents versions of Figs. \ref{fig:LG_prediction_comparisons}--\ref{fig:empirical_coverages} where the `true models' are the LG-Gau-SRE and SG-Gau-SRE models, respectively. The LG-Gau-SRE and SG-Gau-SRE models are shown to outperform NME versions of these models as well as FRK in terms of RMSPEs and/or ECs. Supplement \ref{sec:appendix_posterior_parameters} shows the parameters of the LG-Gau-SRE, LG-t-SRE, SG-Gau-SRE, and SG-t-SRE models can be accurately recovered via our computationally efficient MCMC algorithm. Supplement \ref{sec:appendix_timing} presents the time taken to run 45,000 MCMC iterations for the LG-Gau-SRE and SG-Gau-SRE models.

\section{Monitoring atmospheric methane in the Bowen Basin, Queensland, Australia}\label{sec:methane}

Satellite-based monitoring of atmospheric column-averaged methane (XCH$_4$) concentrations allows us to track regional trends and local variations in this potent greenhouse gas. Using data from the TROPOMI instrument on the Sentinel 5P satellite \citep{Sentinel5P},  \citet{Sadavarte2021} recently showed that some of the approximately 40 coal mines in the Bowen Basin, Queensland, Australia, were emitting notable quantities of methane gas. Here, we use Sentinel 5P data \citep{Sentinel5P} for August, 2020, and a Bayesian hierarchical spatial-copula model to predict XCH$_4$ for each day of the month over the Bowen Basin. Section \ref{sec:data_description} briefly describes the data. Section \ref{sec:daily_maps} presents maps of predicted XCH$_4$ and the prediction standard deviations (PSDs). The predictions and PSDs are compared to those from FRK. Extra details are given in Supplement \ref{sec:appendix_real_data}.

\subsection{Methane data for August, 2020}\label{sec:data_description}

The data used were \citet{GESDISC}. We extracted the data for the study area, which is the region between $26^{\circ}$S and $20.5^{\circ}$S degrees of latitude and between $147^{\circ}$E and $151^{\circ}$E degrees of longitude or the coastline. The BAUs were constructed as squares with sides of 0.05 decimal degree (dd) length (Supplement \ref{sec:appendix_methane_baus}). Retrievals from August, 2020 were analyzed, with a total of 41,750 XCH$_4$ retrievals over the study area in the state of Queensland, Australia. These retrievals occurred across 24 of the 31 days in August. Histograms of the daily retrievals and the number of retrievals in each day are shown in Supplement \ref{sec:appendix_daily_histograms}-\ref{sec:appendix_daily_counts}. Every XCH$_4$ retrieval is accompanied by a measurement-error quantification from the TROPOMI instrument \citep[Supplement \ref{sec:appendix_measurement_error};][]{Hu2016}.  

\subsection{Model for mapping methane}\label{sec:daily_maps}

Based on the fact that the histograms of XCH$_4$ changed in shape by day, it was decided that spatial prediction of XCH$_4$ should be based on daily data. In what follows, we produce maps of XCH$_4$ for 23 and 28 August 2020 (Days 23 and 28, hereafter), due to the abundance and geographical coverage of the retrievals on these days (2,110 and 2,537 retrievals, respectively). The histograms of the XCH$_4$ retrievals on each day (Fig. \ref{fig:maps_and_histograms}) also appear to show opposite skewness (positive on Day 23 and negative on Day 28). In order to capture the time-varying skewness of the marginal distributions, the SG-Gau-SRE model (Section \ref{sec:models}) was chosen to model XCH$_4$ in and around the Bowen Basin, with two departures in this application. First, the SG-Gau-SRE model for XCH$_4$ uses a spherical covariance function for its covariance matrix $\E(\theta_s, \theta_r)$. Its $(i,j)$ element is $E_{ij}(\theta_s, \theta_r)= \mathbb{I}(d_{i,j} < \theta_r) \times \theta_s \times \{1 - 1.5 (d_{ij}/\theta_r) + 0.5 (d_{ij}/\theta_r)^3\}$, where $d_{ij}$ is the great-circle distance (in 100s of km) between the centers of the $i$-th and $j$-th bisquare spatial basis functions in the study area (Supplement \ref{sec:appendix_methane_basis}), for $i, j = 1, ..., 101$, and $\theta_r$ is a range parameter in the same units. Second, the data model has been modified to accommodate different measurement-error standard deviations for the observations, denoted as $\{\sigma_\TO(A_{\TO k}): k = 1, ..., K\}$. Approximately 30 BAUs on Days 23 and 28 contained two retrievals, and Supplement \ref{sec:appendix_averaging_data} shows how we averaged the retrievals and dealt with the measurement-error standard deviations. Supplement \ref{sec:appendix_skew_Gaussian} and \ref{sec:appendix_FRK_methane} provides details of the SG-Gau-SRE model and the competing FRK model, respectively.

\begin{figure}[!ht]
\centering
\includegraphics[width=0.9\textwidth]{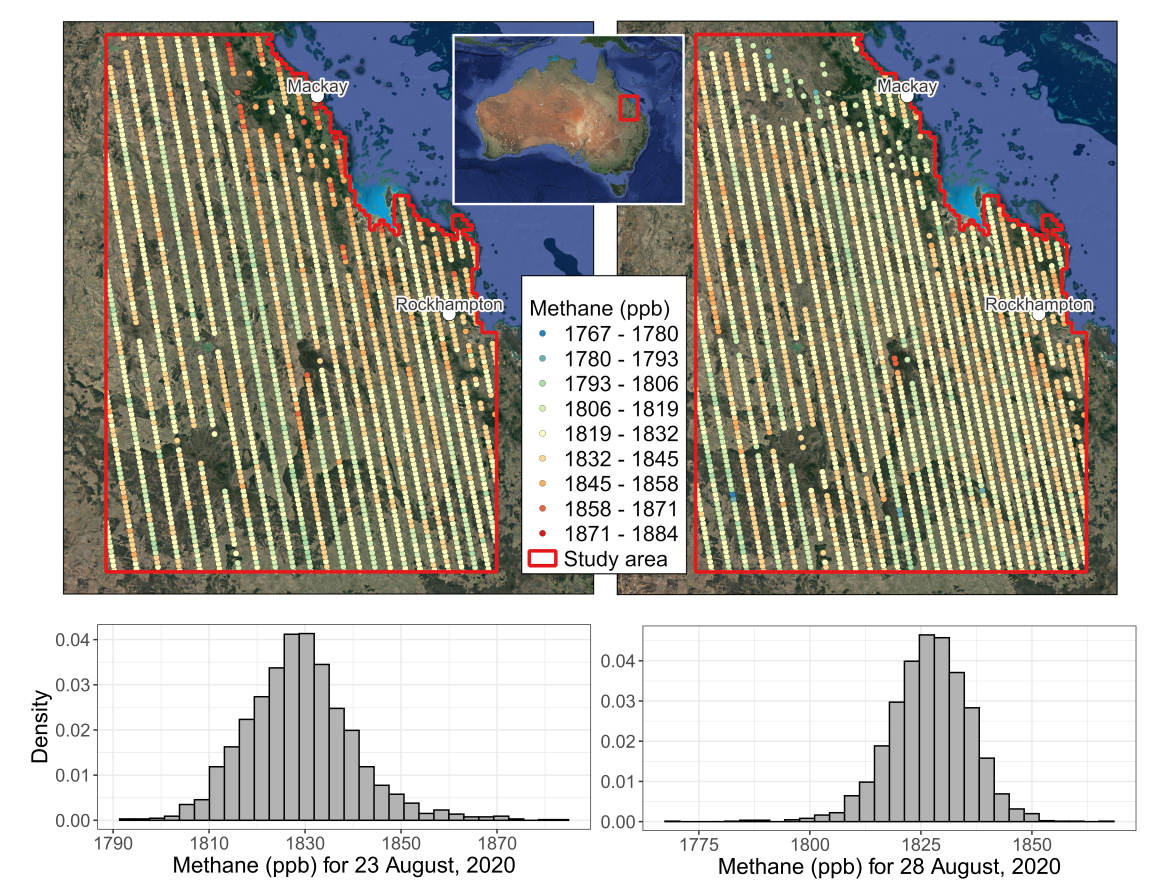}
\caption{Top: Maps of the study area $D$ in the Bowen Basin, Queensland, Australia, with TROPOMI/Sentinel 5P methane retrievals in parts per billion (ppb) for 23 August, 2020 (left) and 28 August, 2020 (right). Bottom: Histogram of methane retrievals (ppb) for 23 August, 2020 (left) and 28 August, 2020 (right). Two cities, Mackay and Rockhampton, are labelled and marked as white circles. The inset shows $D$ on a map of Australia.}
\label{fig:maps_and_histograms}
\end{figure}

An MCMC sampler for the SG-Gau-SRE model was run for Days 23 and 28. See Supplement \ref{sec:appendix_methane_priors}--\ref{sec:appendix_methane_initial} for the priors and initial values. After a burn-in of 5,000 iterations, 40,000 MCMC iterations were run, keeping one in every four samples. The runtime was approximately 20 minutes. Trace plots and Gelman-Rubin statistics \citep{Gelman1992} indicated convergence of the MCMC sampler (Supplement \ref{sec:appendix_methane_trace_plots} and \ref{sec:appendix_convergence_diagnostics}). The posterior mean, 95\% credible interval, and effective sample size \citep[ESS;][]{coda} of the model parameters are summarized in Table \ref{tab:methane_parms}. The ESS values were calculated using the 10,000 MCMC samples left after thinning. The posterior means of $\theta_s$ were 0.61 (Day 23) and 0.56 (Day 28). The value of $\theta_s$ can be thought of as the ratio of systematic spatial variation to the micro-scale variation (see Section \ref{sec:identifiability}), so these values of $\theta_s$ in Table \ref{tab:methane_parms} indicate that a substantial amount of micro-scale variation is present in the process. The posterior mean of $\beta_0$ indicates that the average value of XCH$_4$ in the study area was $1833$ ppb on Day 23 and $1823$ ppb on Day 28. As expected, the skewness parameter $\lambda$ was positive on Day 23 ($1.38$) and negative on Day 28 ($-1.79$), with their 95\% credible intervals not containing zero. The range parameter $\theta_r$, which controls correlations between spatial basis functions, had a posterior mean of $42.0$km on Day 23 and $40.1$km on Day 28. The minimum distance between any two basis-function centers was $49.4$km, so the posterior means of $\theta_r$ suggest `nearest-neighbor' spatial dependence between the basis functions. 

\begin{table}[!ht]
    \centering
    \caption{Posterior means, 95\% credible intervals, and effective sample sizes (ESS) from the R package `coda' \citep{coda}, for the parameters in the SG-Gau-SRE for the methane dataset. The ESS values were calculated from 10,000 thinned MCMC samples.\vspace{3pt}}\label{tab:methane_parms}
\begin{adjustbox}{width=\columnwidth,center}
    \begin{tabular}{|c|ccc|ccc|}
        \hline
	& \multicolumn{3}{c|}{23 August, 2020} & \multicolumn{3}{c|}{28 August, 2020}\\
         Parameter & Posterior mean & 95\% credible interval & ESS & Posterior mean & 95\% credible interval & ESS\\
         \hline
         $\sigma_{\TP}$ & $13.242$ & (11.522, 15.621) & 788 & $11.670$ & (10.066, 14.143) & 562\\
         $\lambda$ & $1.375$ & (1.029, 1.736) & 1163 & $-1.788$ & (-2.195, -1.431) & 819\\
         $\beta_0$ & $7.513$ & (7.511, 7.516) & 1192 & $7.508$ & (7.505, 7.510) & 735\\
         $\theta_s$ & $0.607$ & (0.352, 0.999) & 777 & $0.557$ & (0.334, 0.898) & 514 \\
         $\theta_r$ & $0.419$ & (0.158, 0.727) & 884 & $0.401$ & (0.157, 0.669) & 419 \\
         \hline
    \end{tabular}
    \end{adjustbox}
\end{table}

Predictions of XCH$_4$ and the prediction uncertainties from the SG-Gau-SRE model and FRK are mapped in Fig. \ref{fig:XCH4_maps}. On both days, the SG-Gau-SRE model and FRK yielded similar prediction surfaces. Due to the high levels of micro-scale variability present in the spatial process, the posterior means at the observed BAUs (i.e., $E(\Y_\TO \mid \Z_\TO)$) may be quite different from the posterior means at adjacent missing BAUs. This behavior of the SRE model is consistent with the results of \citet{kang2011bayesian}, but the strip-like nature of the Sentinel 5P satellite tracks results in a less smooth prediction surface. Mapping only the posterior means at the missing BAUs gives a clearer picture of the underlying spatial pattern. From Fig. \ref{fig:XCH4_maps}, on Day 23, high concentrations of methane can be seen around Mackay (Fig. \ref{fig:maps_and_histograms}) and the surrounding coastal region. On Day 28, a high concentration of methane was detected near Mackay as well as in the coastal region near Rockhampton. Regions with high methane concentrations are also revealed inland. The prediction uncertainties from the SG-Gau-SRE model and FRK are similar, but some differences can be noted from Fig. \ref{fig:XCH4_maps}. The skewness in the shape of the marginal distributions of the SG-Gau-SRE model also means that, on Day 23, areas with higher predicted XCH$_4$ have higher PSDs than areas with lower predicted XCH$_4$, and \textit{vice versa} on Day 28. Meanwhile, the PSDs from FRK do not appear to vary strongly with the magnitude of the predictions.

\begin{figure}[!ht]
\centering
\includegraphics[width=\textwidth]{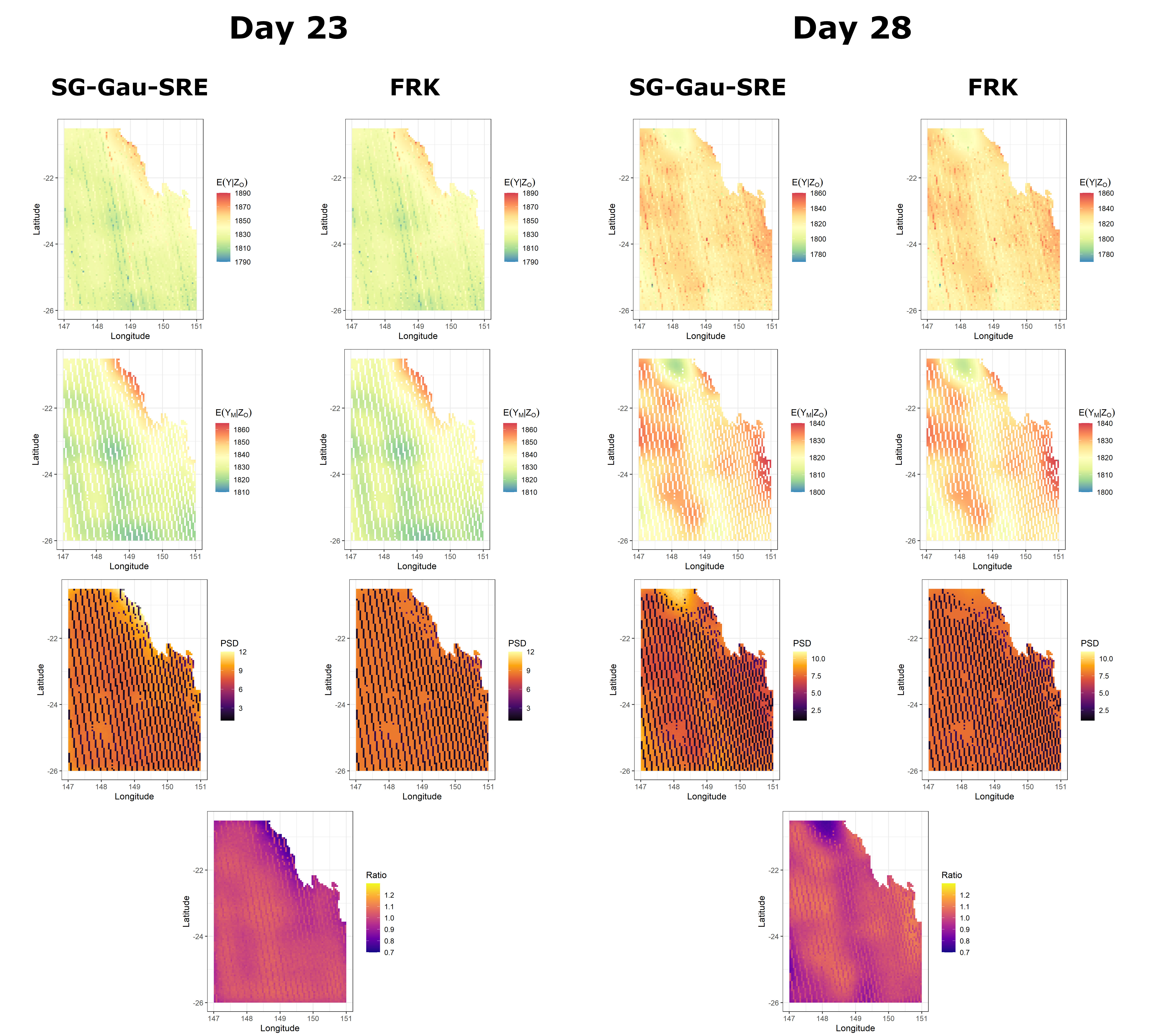}
\caption{Top row: Maps of predicted methane concentrations (ppb) at all basic areal units (BAUs), namely $E(\Y\mid\Z_\TO)$. Second row: Maps of predicted methane concentrations at only the missing BAUs (i.e., $E(\Y_\TM\mid\Z_\TO)$). Third row: Maps of posterior standard deviations (PSDs), where $\mathrm{PSD}(A_j; \Z_\TO) \equiv \sqrt{\mathrm{var}(Y(A_j)\mid \Z_\TO)}$, $j = 1, ..., 7370$. Bottom row: Ratios of the PSD from FRK over the PSD from the SG-Gau-SRE model. }
\label{fig:XCH4_maps}
\end{figure}

\section{Discussion and conclusion}\label{sec:discussion}

In this article, we develop copula-based hierarchical spatial-statistical models for non-Gaussian spatial processes that can be fitted to very large spatial datasets while properly accounting for measurement errors in the spatial data. We show how embedding spatial random effects into Gaussian copulas and t copulas, enables fast Bayesian spatial prediction from large, incomplete, noisy, and non-Gaussian spatial datasets. Simulation studies demonstrate reliable parameter recovery, superior predictive performance, and improved uncertainty quantification, compared to several benchmark spatial-statistical models. To illustrate the practical utility of our approach, we apply it to mapping atmospheric methane over a coal-producing region in Queensland, Australia. Future research could investigate copula-based hierarchical spatio-temporal modeling with spatio-temporal random effects, which would allow smoothing, filtering, and forecasting of non-Gaussian spatio-temporal random fields from large amounts of noisy, incomplete, non-Gaussian spatio-temporal data.

\section*{Acknowledgements}

This article is based on research presented in ARP's PhD dissertation completed at the University of Wollongong, Australia. ARP and NC would like to acknowledge Australian Research Council Discovery Project DP190100180 and NASA ROSES-2023 award 23-OCOST23-0001. NC also acknowledges funding from the Air Force Office of Scientific Research under award number FA2386-23-1-4100. We are grateful to  Associate Professor Andrew Zammit Mangion for insightful discussions and comments. This research was undertaken using resources from the National Computational Infrastructure (NCI Australia), an NCRIS enabled capability supported by the Australian Government.

\newpage
\appendix

\renewcommand{\thepage}{S\arabic{page}}
\renewcommand{\theequation}{S\arabic{equation}}
\renewcommand{\thealgorithm}{S\arabic{algorithm}}
\renewcommand{\thetable}{S\arabic{table}}
\renewcommand{\thefigure}{S\arabic{figure}}
\renewcommand{\thesection}{S\arabic{section}}
\renewcommand{\thesubsection}{S\arabic{section}.\arabic{subsection}}
\setcounter{equation}{0}
\setcounter{figure}{0}
\setcounter{algorithm}{0}
\setcounter{page}{1}
\setcounter{section}{0}
\setcounter{table}{0}

\section*{{\Large Supplementary information}}

This is the Supplementary Information (`Supplement') for ``Bayesian copula-based spatial random effects models for inference with complex spatial data'' by Alan R. Pearse, David Gunawan, and Noel Cressie. We have organized the Supplement as follows: Section 1, equation (1), Table 1, Figure 1, and Algorithm 1, Proposition 1, etc., refer to the main article, while Section S1, equation (S1), Table S1, Figure S1, and Algorithm S1, Proposition S1, etc., refer to the contents of this Supplement.

\section{Proofs of propositions}\label{sec:appendix_proofs}

In this section, we prove the propositions stated in the main article. 

\begin{proof}[Proof of Proposition \ref{prop:Gau_cop_with_SRE}]
Let $\mathbf{W} \equiv (W(A_1), ..., W(A_N))^\top$ follow an $N$-variate Gaussian distribution with mean vector $\mathbf{0}$ and covariance matrix $\bm\Sigma^{\mathrm{SRE}}$ given in Section \ref{sec:Gaussian_copula_SRE} in the main text. Its cumulative distribution function (CDF) is $F_{\mathcal{G},1:N}(w_1, ..., w_N; \bm\Sigma^{\mathrm{SRE}})$, for $w_1, ..., w_N \in \mathbb{R}$. Marginally, for $j=1,...,N$, each $W(A_j)$ follows a mean-zero univariate Gaussian distribution with variance $(\sigma_j^{\mathrm{SRE}})^2$, which is the $j$-th diagonal element of $\bm\Sigma^{\mathrm{SRE}}$. Hence, we can write the CDF as $\Phi(w_j/\sigma_j^{\mathrm{SRE}}),~w_j \in \mathbb{R}$, where $\Phi$ is the CDF of a standard Gaussian distribution. Also define $U_j \equiv \Phi(W(A_j)/\sigma^{\mathrm{SRE}}_j)$, where it follows that $U_j$ has a uniform distribution on $[0, 1]$. Conversely, $W(A_j) = \sigma_{j}^{\mathrm{SRE}} \Phi^{-1}(U_j)$. In obvious notation for specific realizations of these random variables, we also write $u_j = \Phi(w_j/\sigma^\SRE_j)$ and $w_j = \sigma^\SRE_j\Phi^{-1}(u_j)$ for $u_j \in [0, 1]$ and $w_j \in \mathbb{R}$. Then the Gau-SRE copula is,
$$
\mathfrak{G}_{1:N}^{\mathrm{SRE}}(u_1, ..., u_N; \bm\Sigma^{\mathrm{SRE}}) = F_{\mathcal{G},1:N}(\sigma_{1}^{\mathrm{SRE}} \Phi^{-1}(u_1),...,\sigma_{N}^{\mathrm{SRE}} \Phi^{-1}(u_N); \bm{\Sigma}^{\mathrm{SRE}}),
$$
as required. For the copula density, take $\partial \mathfrak{G}_{1:N}^{\mathrm{SRE}}(u_1, ..., u_N; \bm\Sigma^{\mathrm{SRE}})/\partial u_1 \dots \partial u_N$. By the chain rule, 
\begin{align*}
&\frac{\partial \mathfrak{G}_{1:N}^{\mathrm{SRE}}(u_1, ..., u_N; \bm\Sigma^{\mathrm{SRE}})}{\partial u_1 \dots \partial u_N}\\
&~~~~~~~~= \left(\prod_{j=1}^N \frac{\partial (\sigma_j^{\mathrm{SRE}}\Phi^{-1}(u_j))}{\partial u_j}\right) \times f_{\mathcal{G},1:N}(\sigma_{1}^{\mathrm{SRE}} \Phi^{-1}(u_1),...,\sigma_{N}^{\mathrm{SRE}} \Phi^{-1}(u_N); \bm{\Sigma}^{\mathrm{SRE}}),
\end{align*}
where $\partial (\sigma_j^{\mathrm{SRE}}\Phi^{-1}(u_j))/\partial u_j = \sigma_j^{\mathrm{SRE}}/ \phi(\Phi^{-1}(u_j))$, for $j = 1, ..., N$. This is precisely \eqref{eqn:full_SRE_gau_copula_density}.
\end{proof}

\begin{proof}[Proof of Proposition \ref{prop:Gau_anamorphosis}]
By construction, $\W \mid \bm\theta_\TP$ follows an $N$-variate Gaussian distribution with mean vector $\mathbf{0}$ and covariance matrix $\bm\Sigma^\SRE$. Now, let $w_j \in \mathbb{R}$ be realizations of $W(A_j)$, $j = 1, ..., N$, and let $\w \equiv (w_1, ..., w_N)^\top \in \mathbb{R}^N$ be a realization of $\W$. Then, the probability density function (PDF) of $\W \mid \bm\theta_\TP$ can be written as follows: For $w_1, ..., w_N \in \mathbb{R}$ and $\w \in \mathbb{R}^N$,
$$
[\W \mid \bm\theta_\TP] = f_{\mathcal{G},1:N}\left(w_1, ..., w_N; \bm\Sigma^\SRE\right) = \frac{\exp\{-0.5\w^\top(\bm\Sigma^\SRE)^{-1}\w\}}{(2\pi)^{N/2}\det(\bm\Sigma^\SRE)^{1/2}}.
$$
Consider the transformation of variables $W(A_j) = \sigma_j^\SRE\Phi^{-1}(F_{j}(Y(A_j)))$ and the corresponding realizations $w_j = \sigma^\SRE_j\Phi^{-1}(F_j(y_j))$, $j = 1, ..., N$. After transformation of random variables, it follows that
$$
[\Y \mid \bm\theta_\TP] = \left|\prod_{j=1}^N \frac{\p w_j}{\p y_j}\right| \times f_{\mathcal{G},1:N}\!\left(\sigma_{1}^{\mathrm{SRE}} \Phi^{-1}(F_1(y_1)),...,\sigma_{N}^{\mathrm{SRE}} \Phi^{-1}(F_N(y_N)); \bm{\Sigma}^{\mathrm{SRE}}\right).
$$
The $j$-th term in the absolute value of the Jacobian is,
$$
\frac{\p w_j}{\p y_j} = \frac{\sigma_j^\SRE f_j(y_j)}{\phi(\Phi^{-1}(F_j(y_j)))},~y_j \in \mathbb{R}.
$$
Therefore, we have
\begin{align*}
[\Y \mid \bm\theta_\TP] &= \left(\prod_{j=1}^N\frac{\sigma_j^\SRE f_j(y_j)}{\phi(\Phi^{-1}(F_j(y_j)))}\right) \times f_{\mathcal{G},1:N}\!\left(\sigma_{1}^{\mathrm{SRE}} \Phi^{-1}(F_1(y_1)),...,\sigma_{N}^{\mathrm{SRE}} \Phi^{-1}(F_N(y_N)); \bm{\Sigma}^{\mathrm{SRE}}\right)\\
&= \left(\prod_{j=1}^N f_j(y_j)\right) \times \frac{f_{\mathcal{G},1:N}\!\left(\sigma_{1}^{\mathrm{SRE}} \Phi^{-1}(F_1(y_1)),...,\sigma_{N}^{\mathrm{SRE}} \Phi^{-1}(F_N(y_N)); \bm{\Sigma}^{\mathrm{SRE}}\right)}{\prod_{j=1}^N (\sigma_j^\SRE)^{-1}\phi(\Phi^{-1}(F_j(y_j)))}\\
&= \left(\prod_{j=1}^N f_j(y_j)\right) \times \GCD^\SRE_{1:N}\!\left(F_1(y_1), ..., F_N(y_N); \bm\Sigma^\SRE\right),
\end{align*}
and the proposition follows.
\end{proof}

\begin{proof}[Proof of Proposition \ref{prop:t_copula_SRE}]
Let $\mathbf{V} \equiv (V(A_1), ..., V(A_N))^\top$ follow an $N$-variate t distribution on $\nu > 2$ degrees of freedom with mean vector $\mathbf{0}$ and positive-definite scale matrix $\bm\Sigma^{\mathrm{SRE}}$. The CDF is $F_{t,1:N}(v_1, ..., v_N; \bm\Sigma^{\mathrm{SRE}}, \nu)$. Marginally, each $V(A_j)$ follows a mean-zero univariate t distribution on $\nu > 2$ degrees of freedom with scale parameter $(\sigma_j^{\mathrm{SRE}})^2$. Its CDF can be written as $T_{\nu}(v_j/\sigma^{\mathrm{SRE}}_j),~v_j \in \mathbb{R}$. For $j = 1, ..., N$, define $U_j \equiv T_\nu(V(A_j)/\sigma^{\mathrm{SRE}}_j)$, where $U_j$ is uniform on $[0, 1]$ and $V(A_j)/\sigma^{\mathrm{SRE}}_j$ follows a standardized t distribution on $\nu$ degrees of freedom; conversely, $V(A_j) = \sigma_{j}^{\mathrm{SRE}} T_\nu^{-1}(U_j)$. In obvious notation, we write the same transformations for realizations $u_j \in [0, 1]$ and $v_j \in \mathbb{R}$, as $u_j = T_\nu(v_j/\sigma^\SRE_j)$ and $v_j = \sigma^\SRE_j T_\nu^{-1}(u_j)$. Then the t copula with SRE covariance-matrix parameter can be written as follows: For $u_1, ..., u_N \in [0, 1]$,
$$
\mathfrak{T}_{1:N}^{\mathrm{SRE}}\!\left(u_1, ..., u_N; \bm\Sigma^{\mathrm{SRE}}, \nu\right) = F_{t,1:N}\!\left(\sigma_{1}^{\mathrm{SRE}} T_\nu^{-1}(u_1),...,\sigma_{N}^{\mathrm{SRE}} T_\nu^{-1}(u_N); \bm{\Sigma}^{\mathrm{SRE}}, \nu\right),
$$
which is \eqref{eqn:appendix_SRE_t_copula}. For the copula density, simply take partial derivatives with respect to all $u_1, ..., u_N$ and use the chain rule to arrive at \eqref{eqn:appendix_SRE_t_copula_density}. 
\end{proof}

\begin{proof}[Proof of Proposition \ref{prop:t_anamorphosis}]
By construction, $\V \mid \bm\theta_\TP$ follows an $N$-variate t distribution on $\nu$ degrees of freedom with mean vector $\mathbf{0}$ and positive-definite scale matrix $\bm\Sigma^\SRE$. Therefore, its PDF can be written as follows: For $\v \equiv (v_1, ..., v_N)^\top \in \mathbb{R}^N$,
$$
[\V \mid \bm\theta_\TP] = f_{t,1:N}\left(v_1, ..., v_N; \bm\Sigma^\SRE, \nu\right) = \frac{\Gamma\left(\frac{\nu+N}{2}\right)|\bm\Sigma^\SRE|^{-\frac{1}{2}}}{\Gamma\left(\frac{\nu}{2}\right)\left(\nu\pi\right)^{\frac{N}{2}}}\left(1+\frac{1}{\nu}\mathbf{v}^{\top}\bm\left(\bm\Sigma^\SRE\right)^{-1}\mathbf{v}\right)^{\left(-\frac{\nu+N}{2}\right)}.
$$
For $j = 1, ..., N$, consider the transformation $V(A_j) = \sigma_j^\SRE T_\nu^{-1}(F_{j}(Y(A_j)))$ for the marginal random variables and $v_j = \sigma_j^\SRE T_\nu^{-1}(F_j(y_j))$ for their realizations. After transformation, the density is given by,
$$
[\Y \mid \bm\theta_\TP] = \left|\prod_{j=1}^N \frac{\p v_j}{\p y_j}\right| \times f_{t,1:N}(\sigma_{1}^{\mathrm{SRE}} T_\nu^{-1}(F_1(y_1)),...,\sigma_{N}^{\mathrm{SRE}} T_\nu^{-1}(F_N(y_N)); \bm{\Sigma}^{\mathrm{SRE}}, \nu).
$$
The $j$-th term in the absolute value of the Jacobian is,
$$
\frac{\p v_j}{\p y_j} = \frac{\sigma_j^\SRE f_j(y_j)}{t_\nu(T_\nu^{-1}(F_j(y_j)))},~y_j \in \mathbb{R},
$$
where recall that $t_\nu(\cdot)$ is the PDF of a standardized t distribution on $\nu > 2$ degrees of freedom. Therefore, 
\begin{align*}
[\Y \mid \bm\theta_\TP] &= \left(\prod_{j=1}^N\frac{\sigma_j^\SRE f_j(y_j)}{t_\nu(T_\nu^{-1}(F_j(y_j)))}\right) \times f_{t,1:N}(\sigma_{1}^{\mathrm{SRE}} T_\nu^{-1}(F_1(y_1)),...,\sigma_{N}^{\mathrm{SRE}} T_\nu^{-1}(F_N(y_N)); \bm{\Sigma}^{\mathrm{SRE}}, \nu)\\
&= \left(\prod_{j=1}^N f_j(y_j)\right) \times \frac{f_{t,1:N}(\sigma_{1}^{\mathrm{SRE}} T_\nu^{-1}(F_1(y_1)),...,\sigma_{N}^{\mathrm{SRE}} T_\nu^{-1}(F_N(y_N)); \bm{\Sigma}^{\mathrm{SRE}}, \nu)}{\prod_{j=1}^N (\sigma_j^\SRE)^{-1}t_\nu(T_\nu^{-1}(F_j(y_j)))}\\
&= \left(\prod_{j=1}^N f_j(y_j)\right) \times \TCD^\SRE_{1:N}(F_1(y_1), ..., F_N(y_N); \bm\Sigma^\SRE, \nu),
\end{align*}
as required.
\end{proof}

\section{Parameterization of $\bm\Sigma^\SRE$ with multiple resolutions of basis functions}\label{sec:appendix_multiresolution}

Recall from Section \ref{sec:identifiability} that $\S$ is an $N\times b$ matrix of spatial basis functions, and $\etab \sim \MVG(\mathbf{0}, \E)$ is a $b$-variate random vector of basis-function coefficients with mean vector $\mathbf{0}$ and $b\times b$ covariance matrix $\E$. Also recall that $\xib \sim \MVG(\mathbf{0}, \I_N)$, where $\I_N$ is the $N$-dimensional identity matrix, is an $N$-variate standard Gaussian random vector that is independent of $\etab$. Then we have $\W \equiv \S\etab + \xib$ per \eqref{eqn:Gaussian_SRE_model}, where the unconditional distribution of $\W$ is an $N$-variate Gaussian distribution with mean vector $\mathbf{0}$ and covariance matrix $\bm\Sigma^\SRE \equiv \S\E\S^\top + \I_N$. In this section, we show how to validly parameterize $\bm\Sigma^\SRE$ when $\S$ contains multiple resolutions of spatial basis functions. This is a common strategy to capture spatial dependence at various spatial length scales. 

Let $P \geq 2$, and define $\{\S_p \in \mathbb{R}^{N \times b_p}: p = 1 ,..., P\}$ to be a set of spatial-basis-function matrices for $P$ different resolutions of basis functions, with a total of $b = \sum_{p=1}^P b_p$ basis functions. In \eqref{eqn:Gaussian_SRE_model}, the matrix $\S \in \mathbb{R}^{N\times b}$ can be partitioned as $\S \equiv (\S_1, \cdots,\S_P)$. The corresponding $b$-variate vector of random effects can also be decomposed as $\etab \equiv (\etab_1^\top, ..., \etab_P^\top)^\top$, where $\etab_p$ is a $b_p$-variate Gaussian random vector with mean vector $\mathbf{0}$ and positive-definite covariance matrix $\E_p$ for $p=1,...,P$. Parameterize the covariance matrix as $\E_p \equiv \theta_{s,p}\times \R_p$, where $\R_p$ is an $b_p\times b_p$ correlation matrix. Finally, assume $\{\etab_p: p=1,...,P\}$ are mutually independent. Therefore, after defining $\etab \equiv (\etab_1^\top , ..., \etab_P^\top)^\top$, we have $\etab \sim \MVG(\mathbf{0}, \mathrm{diag}\{\E_1, ..., \E_P\})$ or, equivalently, $\etab \sim \MVG(\mathbf{0}, \mathrm{diag}\{\theta_{s,1}\R_1, \cdots, \theta_{s,P}\R_P\})$. Each of the parameters $\{\theta_{s,p}: p = 1, ..., P\}$ represents the relative importance of $\{\S_p\etab_p: p = 1, ..., P\}$ relative to $\xib$. The parameters $\{\theta_{s,p}\}$ and any parameters of the correlation matrices $\{\R_p: p = 1, ..., P\}$ remain identifiable in $\bm\Sigma^\SRE = \S \E \S^\top + \I_N$ because they do not merely scale the marginal random variables and the covariance matrix; see Section \ref{sec:identifiability} of the main article for further details on parameter identification.   

\section{Additional details for Markov chain Monte Carlo (MCMC) implementation}\label{sec:appendix_mcmc}

In this section, detailed derivations of the full-conditional densities in the Markov chain Monte Carlo algorithms from Sections \ref{sec:MCMC_Gaussian_case}--\ref{sec:MCMC_t_case} are presented. In those derivations, recall that the notation ``$\rest$'' refers to conditioning on all the remaining variables. Section \ref{sec:appendix_FCs_Gau} relates to the Gau-SRE copula model. Section \ref{sec:appendix_FCs_t} relates to the t-SRE copula model. 

\subsection{Derivations of the full-conditional densities for the Gau-SRE copula model}\label{sec:appendix_FCs_Gau}

In this section, we derive the full-conditional distributions needed for MCMC with the Gau-SRE copula model (Section \ref{sec:Gaussian_copula_SRE} and Section \ref{sec:MCMC_Gaussian_case}). We also give information for sampling from these full-conditional distributions, if they are not available in closed form. 

\subsubsection{Derivation of $[\Y_\TO \mid \rest]$ and sampling from it}

From \eqref{eqn:joint_posterior_GauSRE}, we see that the joint posterior of $\Y_\TO$, $\etab$, and $\bm\theta_\TP$ is given by,
\begin{equation}
    [\Y_\TO, \etab, \bm\theta_\TP\mid \Z_\TO] \propto [\Z_\TO \mid \Y_\TO, \bm\theta_\TO] \times [\Y_\TO \mid \etab, \bm\theta_\TP] \times [\etab \mid \bm\theta_\TP] \times [\bm\theta_\TP].\label{eqn:appendix_Gau_SRE_joint_posterior_marginal}
\end{equation}
To obtain the full-conditional distribution for $\Y_\TO$, we write $[\Y_\TO \mid \rest] \propto [\Z_\TO \mid \Y_\TO, \bm\theta_\TO] \times [\Y_\TO \mid \etab, \bm\theta_\TP] \times [\etab \mid \bm\theta_\TP] \times [\bm\theta_\TP]$, which simplifies to $[\Y_\TO \mid \rest] \propto [\Z_\TO \mid \Y_\TO, \bm\theta_\TO] \times [\Y_\TO \mid \etab, \bm\theta_\TP]$, where recall that the data-model parameters $\bm\theta_\TO$ are assumed known. Conditional independence in the data model and the conditional independence of the elements of $\Y_\TO$ given $\etab$ (see \eqref{eqn:obs_gau_conditional}) mean that we can write,
$$
[\Y_\TO \mid \rest] \propto \prod_{k=1}^K [Z(A_{\TO k}) \mid Y(A_{\TO k}), \bm\theta_\TO] \times [Y(A_{\TO k}) \mid \etab, \bm\theta_\TP].
$$
The $k$-th term ($k = 1, ..., K$) in the product (i.e., $[Z(A_{\TO k}) \mid Y(A_{\TO k}), \bm\theta_\TO] \times [Y(A_{\TO k}) \mid \etab, \bm\theta_\TP]$) is in fact equivalent to $[Y(A_{\TO k}) \mid \rest]$, which is the full-conditional density of $Y(A_{\TO k})$. This means the latent values $\{Y(A_{\TO k}): k = 1, ..., K\}$ can be sampled independently and in parallel in $K$ separate Metropolis-within-Gibbs (MwG) steps with random-walk proposals. Each step only involves evaluating two univariate conditional densities, which makes the computations very fast.

\subsubsection{Derivation of $[\etab \mid \rest]$ and sampling from it}

Recall from Section \ref{sec:Gaussian_copula_SRE} that $\etab \sim \MVG(\mathbf{0}, \E)$ is a random vector of spatial-basis function coefficients from \eqref{eqn:Gaussian_SRE_model}, where $\E$ is a $b\times b$ covariance matrix. Also recall that $\S_\TO$ is a $K \times b$ dimensional matrix of spatial basis functions at the observed BAUs, and let $\w_\TO \in \mathbb{R}^K$ be a realization of $\W_\TO$ (see Section \ref{sec:Gaussian_copula_SRE}). 

From \eqref{eqn:appendix_Gau_SRE_joint_posterior_marginal}, we have $[\etab \mid \rest ] \propto [\Z_\TO \mid \Y_\TO, \bm\theta_\TO] \times [\Y_\TO \mid \etab, \bm\theta_\TP] \times [\etab \mid \bm\theta_\TP] \times [\bm\theta_\TP]$. This simplifies to $[\etab \mid \rest]\propto [\Y_\TO \mid \etab, \bm\theta_\TP] \times [\etab \mid \bm\theta_\TP]$. The following algebra then shows that the full conditional distribution $[\etab \mid \rest]$ has a recognizable form:
\begin{align*}
    &[\etab \mid \rest]\\     
    &\propto \frac{\exp\{-0.5(\w_\TO - \S_\TO\etab)^\top(\w_\TO - \S_\TO\etab)\}}{(2\pi)^{K/2}} \times \frac{\exp\{-0.5\etab^\top\E^{-1}\etab\}}{(2\pi)^{b/2}\det(\E)^{1/2}}\\
    &= \frac{\exp\{-0.5(\w_\TO^\top\w_\TO - 2\etab^\top\S_\TO^\top\w_\TO + \etab^\top(\S_\TO^\top\S_\TO + \E^{-1})\etab)\}}{(2\pi)^{(K+b)/2}\det(\E)^{1/2}}\\
    &\propto \exp\{-0.5(\etab^\top(\S_\TO^\top\S_\TO + \E^{-1})\etab-2\etab^\top\S_\TO^\top\w_\TO)\}\\
    &\propto \exp\{-0.5(\etab^\top(\S_\TO^\top\S_\TO + \E^{-1})\etab-2\etab^\top\S_\TO^\top\w_\TO + \w_\TO^\top\S_\TO^\top(\S_\TO^\top\S_\TO + \E^{-1})^{-1}\S_\TO\w_\TO)\}\\
    &\propto \exp\{-0.5(\etab - (\S_\TO^\top\S_\TO + \E^{-1})^{-1}\S_\TO\w_\TO)^\top(\S_\TO^\top\S_\TO + \E^{-1})(\etab - (\S_\TO^\top\S_\TO + \E^{-1})^{-1}\S_\TO\w_\TO)\}.
\end{align*}
The right-hand side of the final line is proportional to the density of a $b$-variate Gaussian distribution with mean vector, $\bm\mu_{\etab\mid\rest} \equiv (\S_\TO^\top\S_\TO + \E^{-1})^{-1}\S_\TO^\top\w_\TO$, and covariance matrix, $\bm\Sigma_{\etab\mid\rest} \equiv (\S_\TO^\top\S_\TO + \E^{-1})^{-1}$. This completes the derivation. Sampling from this low-dimensional Gaussian distribution is straightforward. 

\subsubsection{Derivation of $[\bm\theta_\TP \mid \rest]$ and sampling from it}

From \eqref{eqn:appendix_Gau_SRE_joint_posterior_marginal}, we have $[\bm\theta_\TP \mid \rest] \propto [\Z_\TO \mid \Y_\TO, \bm\theta_\TO] \times [\Y_\TO \mid \etab, \bm\theta_\TP] \times [\etab \mid \bm\theta_\TP] \times [\bm\theta_\TP]$, which simplifies to,
$$
[\bm\theta_\TP \mid \rest] \propto [\Y_\TO \mid \etab, \bm\theta_\TP]\times [\etab \mid \bm\theta_\TP] \times [\bm\theta_\TP].
$$
However, integrating out the random effects $\etab$ results in better mixing of the Markov chain. Hence, we use the full-conditional density,
\begin{align*}
[\bm\theta_\TP \mid \rest] &\propto \left\{\int_{\etab}[\Y_\TO \mid \etab, \bm\theta_\TP]\times [\etab \mid \bm\theta_\TP]~\d\etab\right\} \times [\bm\theta_\TP]\\
&\propto [\Y_\TO \mid  \bm\theta_\TP] \times [\bm\theta_\TP]\\
&= \left(\prod_{k=1}^Kf_{\TO k}(y_{\TO k})\right)\times \GCD_{1:K}^\SRE(F_{\TO 1}(y_{\TO 1}), ..., F_{\TO K}(y_{\TO K}); \bm\Sigma^\SRE_{\TO\TO}) \times [\bm\theta_\TP].
\end{align*}
The parameters can be sampled using an adaptive random-walk Metropolis-Hastings (MH) algorithm. Any strictly positive elements of $\bm\theta_\TP$ (e.g., $\theta_s$) are log-transformed; the resulting vector of parameters, where some elements have been transformed, is written as $\Tilde{\bm\theta}_\TP$. The proposal distribution for MH sampling of $\Tilde{\bm\theta}_\TP$ is taken to be a multivariate Gaussian distribution. We follow \citet{garthwaite2016adaptive} to adapt a scaling factor for the covariance matrix of the proposal distribution to achieve an acceptance rate of approximately 24\%. The covariance matrix itself is adapted at each iteration using all previous samples.

\subsubsection{Sampling from $[\Y_\TM \mid \Y_\TO, \etab, \bm\theta_\TP]$}\label{sec:predictions_gau_SRE}

As for sampling from the conditional distribution $[\Y_\TM \mid \Y_\TO, \bm\theta_\TP]$, this is simplified by the fact that $[\Y_\TM \mid \Y_\TO, \etab, \bm\theta_\TP]=[\Y_\TM \mid \etab, \bm\theta_\TP]=\prod_{l=1}^L [Y(A_{\TM l}) \mid \etab, \bm\theta_\TP]$ (see  \eqref{eqn:preds_gau_conditional}), since $\Y_\TM$ and $\Y_\TO$ (and all of their elements) are conditionally independent given $\etab$. 

Now, recall that $\Y_\TM \equiv (Y(A_{\TM 1}), ..., Y(A_{\TM L}))^\top$. For $l = 1, ..., L$ and given posterior samples of $\etab$ and $\bm\theta_\TP$, we can simulate $Y(A_{\TM l})$ from $[Y(A_{\TM l}) \mid \etab, \bm\theta_\TP]$ by the following procedure:
\begin{enumerate}
    \item Simulate $W(A_{\TM l}) \sim \Gau(\S(A_{\TM l})^\top \etab, 1)$, and 
    \item Compute $Y(A_{\TM l}) = F_{\TM l}^{-1}\big(\Phi\big(W(A_{\TM l})/\sigma^\SRE_{\TM l}\big)\big)$, where recall that $\sigma^\SRE_{\TM l}$ is the square root of the $l$-th diagonal element of $\bm\Sigma_{\TM\TM}^\SRE \equiv \S_\TM \E \S_\TM^\top + \I_L$. 
\end{enumerate}

\subsection{Derivations of the full-conditional densities for the t-SRE copula model}\label{sec:appendix_FCs_t}

In this section, we derive the full-conditional distributions needed for MCMC with the t-SRE copula model (Section \ref{sec:t_copula_SRE} and Section \ref{sec:MCMC_t_case}). We also give information for sampling from these full-conditional distributions, if they are not available in closed form.

\subsubsection{Derivation of $[\Y_\TO \mid \rest]$ and sampling from it}

From \eqref{eqn:hierarchical_joint_t} in Section \ref{sec:MCMC_t_case}, the joint posterior of $\Y_\TO$, $\etabb$, $\gamma$, and $\bm\theta_\TP$ under the t-SRE copula model is given by,
\begin{equation}
    [\Y_\TO, \etabb, \gamma, \bm\theta_\TP\mid \Z_\TO] \propto [\Z_\TO \mid \Y_\TO, \bm\theta_\TO] \times [\Y_\TO \mid \etabb, \gamma, \bm\theta_\TP] \times [\etabb \mid \gamma, \bm\theta_\TP] \times [\gamma\mid \bm\theta_\TP] \times [\bm\theta_\TP],\label{eqn:appendix_t_SRE_joint_posterior_marginal}
\end{equation}
where $\bm\theta_\TP$ contains $\nu > 2$, the degrees of freedom parameter of the multivariate t distribution. 

To obtain the full-conditional distribution for $\Y_\TO$, we write $[\Y_\TO \mid \rest] \propto [\Z_\TO \mid \Y_\TO, \bm\theta_\TO] \times [\Y_\TO \mid \etabb, \gamma, \bm\theta_\TP] \times [\etabb \mid \gamma, \bm\theta_\TP] \times [\gamma\mid \bm\theta_\TP] \times [\bm\theta_\TP]$, which simplifies to $[\Y_\TO \mid \rest] \propto [\Z_\TO \mid \Y_\TO, \bm\theta_\TO] \times [\Y_\TO \mid \etabb, \gamma, \bm\theta_\TP]$, where recall that the data-model parameters $\bm\theta_\TO$ are assumed known. Conditional independence in the data model and the conditional independence of the elements of $\Y_\TO$ given $\etabb$ and $\gamma$ (see \eqref{eqn:obs_t_conditional}) mean that we can write,
$$
[\Y_\TO \mid \rest] \propto \prod_{k=1}^K [Z(A_{\TO k}) \mid Y(A_{\TO k}), \bm\theta_\TO] \times [Y(A_{\TO k}) \mid \etabb, \gamma, \bm\theta_\TP].
$$
The $k$-th term ($k = 1, ..., K$) in the product (i.e., $[Z(A_{\TO k}) \mid Y(A_{\TO k}), \bm\theta_\TO] \times [Y(A_{\TO k}) \mid \etabb, \gamma, \bm\theta_\TP]$) is in fact the full-conditional density of $Y(A_{\TO k})$.

While the full-conditional density is not available in closed form, sampling is easy and fast using a random-walk Metropolis-Hastings algorithm. The latent values $\{Y(A_{\TO k}): k = 1, ..., K\}$ can be sampled independently and in parallel. 

\subsubsection{Derivation of $[\etabb \mid \rest]$}

Recall from Section \ref{sec:t_copula_SRE} that $\gamma > 0$ is a scale factor that follows a Gamma distribution with shape and rate parameters equal to $\nu/2$, where $\nu > 2$. Also recall that $\etabb \sim \MVG(\mathbf{0}, \gamma^{-1}\E)$ and $\xibb_\TO \sim \MVG(\mathbf{0}, \gamma^{-1}\I_K)$. Let $\v_\TO \in \mathbb{R}^K$ be a realization of $\V_\TO$ (see Section \ref{sec:t_copula_SRE}). 

From \eqref{eqn:appendix_t_SRE_joint_posterior_marginal}, the full-conditional density of $\etabb$ can be written as $[\etabb \mid \rest]\propto [\Z_\TO \mid \Y_\TO, \bm\theta_\TO] \times [\Y_\TO \mid \etabb, \gamma, \bm\theta_\TP] \times [\etabb \mid \gamma, \bm\theta_\TP] \times [\gamma\mid \bm\theta_\TP] \times [\bm\theta_\TP]$, which simplifies to $[\etabb \mid \rest ]\propto [\Y_\TO \mid \etabb, \gamma, \bm\theta_\TP] \times [\etabb \mid \gamma, \bm\theta_\TP]$. Below, we show that this full-conditional density has a recognizable form. That is,
\begin{align*}
    &[\etabb \mid \rest] \\
    &\propto [\Y_\TO \mid \etabb, \gamma, \bm\theta_\TP] \times [\etabb \mid \gamma, \bm\theta_\TP]\\
    &\propto \frac{\exp\{-0.5\gamma(\v_\TO - \S_\TO\etabb)^\top(\v_\TO - \S_\TO\etabb)\}}{(2\pi)^{K/2}\gamma^{-K/2}} \times \frac{\exp\{-0.5\gamma\etabb^\top\E^{-1}\etabb\}}{(2\pi)^{b/2}\gamma^{-b/2}\det(\E)^{1/2}}\\
    &= \frac{\exp\{-0.5\gamma(\v_\TO^\top\v_\TO - 2\etabb^\top\S_\TO^\top\v_\TO + \etabb^\top(\S_\TO^\top\S_\TO + \E^{-1})\etab)\}}{(2\pi)^{(K+b)/2}\gamma^{-(K+b)/2}\det(\E)^{1/2}}\\
    &\propto \exp\{-0.5\gamma(\etabb^\top(\S_\TO^\top\S_\TO + \E^{-1})\etabb-2\etabb^\top\S_\TO^\top\v_\TO)\}\\
    &\propto \exp\{-0.5\gamma(\etabb^\top(\S_\TO^\top\S_\TO + \E^{-1})\etabb-2\etabb^\top\S_\TO^\top\v_\TO + \v_\TO^\top\S_\TO^\top(\S_\TO^\top\S_\TO + \E^{-1})^{-1}\S_\TO\v_\TO)\}\\
    &\propto \exp\{-0.5\gamma(\etabb - (\S_\TO^\top\S_\TO + \E^{-1})^{-1}\S_\TO^\top\v_\TO)^\top(\S_\TO^\top\S_\TO + \E^{-1})(\etabb - (\S_\TO^\top\S_\TO + \E^{-1})^{-1}\S_\TO^\top\v_\TO)\}.
\end{align*}
The right-hand side of the final line is proportional to the density of a $b$-variate Gaussian distribution with mean vector, $\bm\mu_{\etabb\mid\rest} \equiv (\S_\TO^\top\S_\TO + \E^{-1})^{-1}\S_\TO^\top\v_\TO$, and covariance matrix, $\bm\Sigma_{\etabb\mid\rest} \equiv \gamma^{-1}(\S_\TO^\top\S_\TO + \E^{-1})^{-1}$. This completes the derivation of $[\etabb \mid \rest]$. Sampling from this low-dimensional multivariate Gaussian distribution is straightforward.

\subsubsection{Derivation of $[\gamma \mid \rest]$}

From \eqref{eqn:appendix_t_SRE_joint_posterior_marginal}, the full-conditional density of $\gamma$ can be written as $[\gamma \mid \rest] \propto [\Z_\TO \mid \Y_\TO, \bm\theta_\TO] \times [\Y_\TO \mid \etabb, \gamma, \bm\theta_\TP] \times [\etabb \mid \gamma, \bm\theta_\TP] \times [\gamma\mid \bm\theta_\TP] \times [\bm\theta_\TP]$. This can be simplified to, 
$$
[\gamma \mid \rest] \propto [\Y_\TO \mid \etabb, \gamma, \bm\theta_\TP] \times [\etabb \mid \gamma, \bm\theta_\TP] \times [\gamma \mid \bm\theta_\TP].
$$
In the full-conditional distribution, we condition on the parameters $\bm\theta_\TP$. The transformation that links $\Y_\TO$ with $\V_\TO$ (see Proposition \ref{prop:t_anamorphosis}) is $V(A_{\TO k}) \equiv \sigma_{\TO k}^\SRE T_\nu^{-1}(F_{\TO k}(Y(A_{\TO k})))$, for $k = 1, ..., K$, and this transformation is deterministic when $\bm\theta_\TP$ is fixed. Therefore, we can write,
$$
[\gamma \mid \rest] \propto [\V_\TO \mid \etabb, \gamma, \bm\theta_\TP] \times [\etabb \mid \gamma, \bm\theta_\TP] \times [\gamma \mid \bm\theta_\TP].
$$
We can achieve better mixing of the MCMC sampler by integrating out the random effects $\etabb$ in the equation above. Therefore, we use,
$$
[\gamma \mid \rest] \propto [\V_\TO \mid \gamma, \bm\theta_\TP] \times [\gamma \mid \bm\theta_\TP],
$$
where $[\V_\TO \mid \gamma, \bm\theta_\TP] = \int_{\etabb} [\V_\TO, \etabb, \gamma, \bm\theta_\TP]\times[\etabb\mid \gamma, \bm\theta_\TP]~\d\etabb$. From \eqref{eqn:V}, we know that $\V_\TO \mid \gamma, \bm\theta_\TP \sim \MVG(\mathbf{0}, \gamma^{-1}\bm\Sigma_{\TO\TO}^\SRE)$. Therefore, it follows that
\begin{align*}
    [\gamma \mid \rest] &\propto [\V_\TO \mid \gamma, \bm\theta_\TP] \times [\gamma \mid \bm\theta_\TP]\\    &\propto\gamma^{0.5K}\exp\{-0.5\gamma\v_\TO^\top(\bm\Sigma^\SRE_{\TO\TO})^{-1}\v_\TO\}\gamma^{0.5\nu - 1}\exp\{-0.5\gamma\nu\}\\
    &\propto \gamma^{0.5(K + \nu)-1}\exp\{-0.5\gamma(\nu + \v_\TO^\top(\bm\Sigma^\SRE_{\TO\TO})^{-1}\v_\TO)\}.
\end{align*}
The right-hand side of the last line is proportional to the density of a Gamma-distributed random variable with shape parameter $0.5(K + \nu)$ and rate parameter $0.5(\nu + \v_\TO^\top(\bm\Sigma^\SRE_{\TO\TO})^{-1}\v_\TO)$, which completes the derivation. Sampling from this Gamma distribution is straightforward.

\subsubsection{Derivation of $[\bm\theta_\TP \mid \rest]$ and sampling from it}

From \eqref{eqn:appendix_t_SRE_joint_posterior_marginal}, the na\"ive construction of the full-conditional density for $\bm\theta_\TP$ is, $[\bm\theta_\TP \mid \rest] \propto [\Y_\TO \mid \etabb, \gamma, \bm\theta_\TP] \times [\etabb \mid \gamma, \bm\theta_\TP] \times [\gamma \mid \bm\theta_\TP] \times [\bm\theta_\TP]$. As in the Gaussian-copula case, integrating out the random effects $\etabb$, results in better mixing of the Markov chain for the parameters. Additionally, integrating out $\gamma$ further improves the mixing properties. Therefore, for sampling the process-model parameters, we use the following full-conditional density:
\begin{align*}
    [\bm\theta_\TP \mid \rest] &\propto \left\{ \int_\gamma \left\{ \int_{\etabb} [\Y_\TO \mid \etabb, \gamma, \bm\theta_\TP] \times [\etabb \mid \gamma, \bm\theta_\TP] ~\d{\etabb}\right\} \times [\gamma \mid \bm\theta_\TP]~\d\gamma\right\} \times [\bm\theta_\TP]\\
    &\propto [\Y_\TO \mid \bm\theta_\TP] \times [\bm\theta_\TP],
\end{align*}
where $[\Y_\TO \mid \bm\theta_\TP]$ is given by \eqref{eqn:tanamorphosis}.

The parameters can be sampled using an adaptive random-walk Metropolis-Hastings algorithm. Any strictly positive elements of $\bm\theta_\TP$ (e.g., $\theta_s$) are log-transformed; the resulting vector of parameters, where some elements have been transformed, is written as $\Tilde{\bm\theta}_\TP$. The proposal distribution for $\Tilde{\bm\theta}_\TP$ is taken to be a multivariate Gaussian distribution. We follow \citet{garthwaite2016adaptive} to adapt the scale factor of the proposal distribution's covariance matrix to achieve an acceptance rate of approximately 24\%. The covariance matrix itself is adapted using at each iteration all previous samples.

\subsubsection{Sampling from $[\Y_\TM \mid \Y_\TO, \etabb, \gamma, \bm\theta_\TP]$}\label{sec:predictions_t_SRE}

Recognizing that $[\Y_\TM \mid \Y_\TO, \etabb, \gamma, \bm\theta_\TP] = [\Y_\TM \mid \etabb, \gamma, \bm\theta_\TP]$ and using \eqref{eqn:preds_t_conditional}, it is straightforward to sample from $[\Y_\TM \mid \Y_\TO, \etabb, \gamma, \bm\theta_\TP]$: For $l = 1, ..., L$ and given posterior samples of $\etabb$, $\gamma$, and $\bm\theta_\TP$, 
\begin{enumerate}
    \item Generate $V(A_{\TM l})$ from $\Gau(\S(A_{\TM l})^\top \etabb, \gamma^{-1})$, and 
    \item Compute $Y(A_{\TM l}) = F_{\TM l}^{-1}(T_\nu(V(A_{\TM l})/ \sigma^\SRE_{\TM l}))$.
\end{enumerate}

\section{Additional details for the simulation study}\label{sec:appendix_sims}

This section contains additional details for the simulation study reported in Section \ref{sec:simulation_studies} of the main article, including figures, tables, and results. Sections \ref{sec:appendix_LGGau}--\ref{sec:appendix_FRK_SG} fully describe the models used in the simulation study. Section \ref{sec:appendix_setup} describes the set up of the simulation study. Section \ref{sec:appendix_other_models_sims} describes how data were simulated from the Gau-SRE copula models (i.e., the LG-Gau-SRE and SG-Gau-SRE models). Section \ref{sec:appendix_sim_priors} presents the weakly informative priors used for the MCMC, and Section \ref{sec:appendix_sim_initial_values} gives the initial values used for the MCMC algorithm. Section \ref{sec:appendix_additional_predictions_t} compares the marginal predictive distributions under the LG-t-SRE and SG-t-SRE models in Section \ref{sec:simulation_studies} of the main article to those under the baseline models. Section \ref{sec:appendix_additional_predictions} shows a comparison of the predictive capabilities of the hierarchical copula-based models using LG-Gau-SRE and SG-Gau-SRE, compared to FRK and versions of the LG-Gau-SRE and SG-Gau-SRE models without measurement error. Section \ref{sec:appendix_posterior_parameters} shows results relating to parameter estimation through MCMC. This section concludes with computational-timing information, provided in Section \ref{sec:appendix_timing}.

\subsection{LG-Gau-SRE model}\label{sec:appendix_LGGau}

The LG-Gau-SRE model is specified as follows. Let $Y \sim \LG(m, s^2)$ denote a (generic) random variable $Y$ that follows a log-Gaussian distribution with log-scale mean $m$ and log-scale variance $s^2$. Now, for the data model, the $k$-th observation follows the conditional distribution,
\begin{equation}
    Z(A_{\TO k}) \mid Y(A_{\TO k}), \sigma_\TO^2 \sim \LG(\log(Y(A_j)) - 0.5\sigma^2_\TO, \sigma^2_\TO),\label{eqn:appendix_data_model_lg}
\end{equation}  and we also assume that the spatial data are conditionally independent given the underlying latent-process values. The data model in \eqref{eqn:appendix_data_model_lg} has been parameterized to guarantee that $E(Z(A_{\TO k}) \mid Y(A_{\TO k}), \sigma^2_\TO) = Y(A_{\TO k})$. 

For the marginals of the copula-based process, suppose $Y(A_j) \mid \beta_0, \sigma_\TP^2  \sim \LG(\beta_0-0.5\sigma^2_\TP, \sigma^2_\TP)$ for $j = 1, ..., N$, which means that  $E(Y(A_j) \mid \beta_0, \sigma^2_\TP) = \exp\{\beta_0\}$ and $\mathrm{var}(Y(A_j) \mid \beta_0, \sigma^2_\TP) = \exp\{2\beta_0\}\times(\exp\{\sigma^2_\TP\} - 1)$. In general, it is not necessary to restrict ourselves to a constant mean for the marginals. We can easily replace $\beta_0$ with the linear model, $\x(A_j)^\top \bm\beta$, where $\x(A_j)^\top$ is a vector of covariates at BAU $A_j$, for $j = 1, ..., N$. The dependence structure of $\Y$ is given by the Gau-SRE copula in \eqref{eqn:Gau_SRE_cop_model}, which has covariance-matrix parameter $\bm\Sigma^\SRE \equiv \S\E\S^\top + \I_N$. Here, $\E$ is an exponential covariance matrix; as explained in Section \ref{sec:models}. Letting $\theta_s, \theta_r > 0$ be parameters, we write the matrix as $\E(\theta_s, \theta_r)$, where the $(i,j)$ element of $\E$ is given by $E_{i,j} = \theta_s \exp\{-d_{ij}/\theta_r\}$, and $d_{ij}$ is the Euclidean distance (or, depending on the application, great-circle distance) between the $i$-th and $j$-th basis function centre. Hence, the joint PDF of $\Y$ is defined as follows: For $y_1, ..., y_N > 0$, 
\begin{align*}
&[\Y\mid\beta_0, \sigma^2_\TP, \theta_s, \theta_r]\\
&~~~~~~~= \left(\prod_{j=1}^N f_{LG,j}(y_j; \beta_0, \sigma_\TP^2)\right)\times \GCD^\SRE_{1:N}(F_{LG, 1}(y_1; \beta_0, \sigma_\TP^2)),...,F_{LG, N}(y_N; \beta_0, \sigma_\TP^2)); \bm\Sigma^\SRE),
\end{align*}
where recall that $\{f_{LG,j}(y_j; \beta_0, \sigma_\TP^2):j=1,...,N\}$ and $\{F_{LG,j}(y_j; \beta_0, \sigma_\TP^2):j=1,...,N\}$ denote, respectively, the PDFs and CDFs of the log-Gaussian variables $\{Y(A_j):j=1,...,N\}$, as defined in Section \ref{sec:models}. The priors are given in Section \ref{sec:appendix_sim_priors}.

\subsection{LG-t-SRE model}\label{sec:appendix_LGt}

The LG-t-SRE model is specified as follows. The LG-t-SRE model uses the same data model as the LG-Gau-SRE model. The marginal distributions of the latent process are also set to be log-Gaussian, such that
$$
Y(A_j) \mid \beta_0, \sigma^2_\TP \sim \LG(\beta_0 - 0.5\sigma^2_\TP, \sigma^2_\TP),~j=1, ..., N.
$$
The marginal variables $\{Y(A_j): j = 1, ..., N\}$ have been constructed to have $E(Y(A_j) \mid \beta_0, \sigma^2_\TP) = \exp\{\beta_0\}$ and $\mathrm{var}(Y(A_j)\mid \beta_0, \sigma_\TP^2) = \exp\{2\beta_0\} \times (\exp\{\sigma^2_\TP\} - 1)$. The marginals are used in the t-SRE copula \eqref{eqn:t_SRE_copula_model}, which has the parameters $\theta_s$, $\theta_r$, and $\nu > 2$. Hence, for $y_1, ..., y_N > 0$, 
\begin{align*}
&[\Y\mid\beta_0, \sigma^2_\TP, \theta_s, \theta_r, \nu]\\
&~~~~~~=\left(\prod_{j=1}^N f_{LG,j}(y_j; \beta_0, \sigma_\TP^2)\right)\times \TCD^\SRE_{1:N}(F_{LG, 1}(y_1; \beta_0, \sigma_\TP^2)),...,F_{LG, N}(y_N; \beta_0, \sigma_\TP^2)); \bm\Sigma^\SRE, \nu).
\end{align*}
The priors are given in Section \ref{sec:appendix_sim_priors}.

\subsection{SG-Gau-SRE model}\label{sec:appendix_SGGau}

The SG-Gau-SRE model is specified as follows. The SG-Gau-SRE model uses a Gaussian data model. Let $\sigma_\TO^2$ be a known constant measurement-error variance. Then, for $k = 1, ..., K$, $Z(A_{\TO k}) \mid Y(A_{\TO k}), \sigma_\TO^2 \sim \mathrm{Gau}(Y(A_{\TO k}), \sigma^2_\TO)$, and the spatial data are assumed to be conditionally independent given the underlying latent-process values. Clearly, $E(Z(A_{\TO k}) \mid Y(A_{\TO k}), \sigma^2_\TO) = Y(A_{\TO k})$ for $k = 1, ..., K$. 

For the process model, the marginal variables $\{Y(A_j): j = 1, ..., N\}$ are assumed to follow skew-Gaussian (SG) distributions \citep{Azzalini1985}. Let $Y(A_j) \sim \mathrm{SG}(\psi, \omega, \lambda)$ denote that the $j$-th latent process value, $j = 1, ..., N$, follows a skew-Gaussian distribution variable with location parameter $\psi \in \mathbb{R}$, scale parameter $\omega > 0$, and shape/skewness parameter $\lambda \in \mathbb{R}$. The PDF of $Y(A_j)$ is given by,
\begin{align*}
f_{SG,j}(y_j) = \frac{2}{\omega} \phi\left(\frac{y_j-\psi}{\omega}\right)\Phi\left(\frac{\lambda\times(y_j-\psi)}{\omega}\right); ~y_j \in \mathbb{R}.
\end{align*}
The subscript $SG$ denotes that the PDF $f_{SG,j}$ is for a skew-Gaussian distribution. The CDF of $Y$ is denoted by $F_{SG,j}(y_j), ~y_j\in\mathbb{R}$. The parameters relate to the mean and variance of $Y(A_j)$ as follows:
\begin{align}
E(Y(A_j) \mid \bm\theta_\TP) &= \psi + \omega \times\zeta \times \sqrt{2\pi^{-1}}\\
\mathrm{var}(Y(A_j) \mid \bm\theta_\TP) &= \omega^2 \times \left(1 - \frac{2\zeta^2}{\pi}\right)\label{eqn:var_sn},
\end{align}
where $\zeta = \lambda/\sqrt{1 +\lambda^2}$. Conversely, the skew-Gaussian parameters $\psi$, $\omega$, and $\lambda$ can be expressed in terms of $E(Y(A_j) \mid \bm\theta_\TP)$, $\mathrm{var}(Y(A_j) \mid \bm\theta_\TP)$, and $-1 < \zeta < 1$, as follows:
\begin{align}
\omega &= \sqrt{\mathrm{var}(Y(A_j)\mid \bm\theta_\TP)\Big/\left(1 - \frac{2\zeta^2}{\pi}\right)}\label{eqn:solve_var_sn}\\
\psi(A_j) &=  E(Y(A_j) \mid \bm\theta_\TP) - \sqrt{\mathrm{var}(Y(A_j)\mid \bm\theta_\TP)\Big/\left(1 - \frac{2\zeta^2}{\pi}\right)} \times \zeta \times \sqrt{2\pi^{-1}}\label{eqn:solve_e_sn}\\
\lambda &= \frac{\zeta}{\sqrt{1 - \zeta^2}}.
\end{align}
Therefore, for $j = 1, ..., N$, 
\begin{equation}
[Y(A_j) \mid \bm\theta_{\TP}] = f_{SG,j}(y_j; \bm\beta, \sigma^2_{\TP}, \lambda),~y_j \in \mathbb{R},\label{eqn:appendix_sn_marginals}
\end{equation}
which we parameterize in terms of a log-linear model, with BAU-level marginal variance $\sigma^2_{\TP}$, and skewness parameter $\lambda$, as follows:
\begin{align}
E(Y(A_j) \mid \bm\theta_{\TP}) &= \exp\{\mathbf{x}(A_j)^\top\bm\beta\} = \psi(A_j) + \omega \times \zeta \times \sqrt{2\pi^{-1}}\label{eqn:param_mean_sn}\\
\mathrm{var}(Y(A_j)\mid \bm\theta_{\TP}) &= \sigma^2_{\TP} = \omega^2 \times \left(1 - \frac{2\zeta^2}{\pi}\right).\label{eqn:param_var_sn}
\end{align}
Now $\psi(A_j)$ depends on the BAU, $A_j$, but $\omega$ does not. The skew-Gaussian parameters $\psi(A_j)$ and $\omega$ are uniquely determined by \eqref{eqn:solve_var_sn}--\eqref{eqn:solve_e_sn}. For the simulation study, we do not have covariates, so the log-linear model is simply $\exp\{\beta_0\}$. The specification of the SG-Gau-SRE model is completed by using these marginals in the Gau-SRE copula \eqref{eqn:Gau_SRE_cop_model}. This means that the density for the process model can be written as,
$$
[\Y \mid \beta_0, \sigma^2_\TP, \lambda, \theta_s, \theta_r] = \left(\prod_{j=1}^Nf_{SG, j}(y_j)\right) \times \GCD_{1:N}^\SRE(F_{SG, 1}(y_1), ..., F_{SG, N}(y_N); \bm\Sigma^\SRE),
$$
where it is understood that $F_{SG,j}$ and $f_{SG,j}$, $j = 1, ..., N$, are functions of $\beta_0$, $\sigma^2_\TP$, and $\lambda$, and $\bm\Sigma^\SRE$ is a function of $\theta_s$ and $\theta_r$. The priors for the parameters are given in Section \ref{sec:appendix_sim_priors}.

\subsection{SG-t-SRE model} \label{sec:appendix_SGt}

The SG-t-SRE model uses the same data model as the SG-Gau-SRE model, and it uses the same marginal distributions in the process model. The marginals are then used in the t-SRE copula \eqref{eqn:t_SRE_copula_model}, which has parameters $\theta_s$, $\theta_r$, and $\nu > 2$. Therefore, the process model can be written as follows: For $y_1, ..., y_N \in \mathbb{R}$,
$$
[\Y\mid\beta_0, \sigma^2_\TP, \lambda, \theta_s, \theta_r, \nu] = \left(\prod_{j=1}^N f_{SG,j}(y_j)\right)\times \TCD^\SRE_{1:N}(F_{SG, 1}(y_1)),...,F_{SG, N}(y_N)); \bm\Sigma^\SRE, \nu),
$$
where it is understood that $F_{SG,j}$ and $f_{SG,j}$, $j = 1, ..., N$, are functions of $\beta_0$, $\sigma^2_\TP$, and $\lambda$, and $\bm\Sigma^\SRE$ is a function of $\theta_s$ and $\theta_r$.
The priors are given in Section \ref{sec:appendix_sim_priors}.

\subsection{No-measurement-error (NME) versions of the copula-based models}\label{sec:appendix_NH_LGGau}

This section briefly describes the no-measurement-error (NME) versions of the LG-Gau-SRE, SG-Gau-SRE, LG-t-SRE, and SG-t-SRE models presented in Section \ref{sec:models} of the main article. 

In the NME versions of the models, it is assumed that the spatial process $Y(\cdot)$ can be observed directly at the observed BAUs, and there is no distinction between the process and the data. Hence models without measurement errors can be viewed as special cases of the hierarchical models of Section \ref{sec:models}, where the measurement-error variance is zero. Consequently, the previously `latent' spatial-process vector $\Y\equiv (\Y^\top_\TO, \Y^\top_\TM)^\top$ has first sub-vector that is no longer latent; the vector $\Y_\TO$ denotes observed spatial data. However, the vector $\Y_\TM$ is latent in the sense that it denotes the remaining `missing' values of the spatial process. 

Adapting the MCMC algorithms for the LG-Gau-SRE, SG-Gau-SRE, LG-t-SRE, and SG-t-SRE models to their NME versions is straightforward. The MCMC samplers for the process-model parameters $\bm\theta_\TP$ and the random effects $\etab$ (or $\etabb$ and scale factor $\gamma$ for the t-SRE copula models) remain unchanged. However, there are some meaningful changes elsewhere in the MCMC algorithm. First, we do not need to sample $\Y_\TO$ since we treat these as being directly observed. Second, it is meaningless to `predict' $\Y_\TO$ conditional on itself. The `predictions' at the observed BAUs are simply the observed values at those BAUs; hence they are excluded from any prediction-based comparison metrics (see Section \ref{sec:sim_results} in the main article). Third, for the prediction of the `missing' latent-process vector, $\Y_\TM$, we generate predictions through the predictive distribution $[\Y_\TM \mid \Y_\TO]$. It is straightforward to generate samples from the predictive distribution by adapting results of Section \ref{sec:predictions_gau_SRE} or \ref{sec:predictions_t_SRE} as appropriate. 

\subsection{Fixed rank kriging (FRK)}\label{sec:appendix_FRK_general}

Here, we give a brief description of fixed rank kriging (FRK) based on the Spatial Random Effects (SRE) model implemented in the FRK R package \citep{ZM2021, SainsburyDale2024}. It is based on a (hierarchical) Spatial Generalized Linear Mixed Model \citep[SGLMM;][]{Diggle1998, Sengupta2013} as follows. The data model is the usual one that assumes conditional independence of the spatial data given the underlying latent-process values: Letting $\bm\theta_\TO$ be a vector of known parameters for the data model, we write,
$$
[\Z_\TO \mid \Y_\TO, \bm\theta_\TO] = \prod_{k=1}^K[Z(A_{\TO k}) \mid Y(A_{\TO k}), \bm\theta_\TO].
$$
Each term in the data model follows an exponential-family distribution (e.g., Bernoulli, Gamma, Gaussian, or Poisson). There may or may not be parameters in the data model. For example, the Poisson data model depends on $Y(\cdot)$ but has no parameters, while the Gamma and Gaussian data models do. In the case of a Gaussian data model, each of $k=1, ..., K$ terms in the data model may be parameterized by a BAU-specific measurement-error variance, $\sigma_\TO(A_{\TO k})^2$, in which case we define $\bm\theta_\TO \equiv (\sigma_\TO(A_{\TO 1})^2, ..., \sigma_\TO(A_{\TO K})^2)^\top$. The measurement-error variances may be the same at all observed BAUs, in which case $\sigma_\TO(A_{\TO 1})^2 = \dots = \sigma_\TO(A_{\TO K})^2 = \sigma^2_\TO > 0$.

The process model specifies the joint distribution of the latent spatial process $\Y$ to be a trans-Gaussian spatial random effects (SRE) model. Let $\S(A_j)$ be a $b$-variate vector of spatial basis functions at BAU $A_j$, and let $\S$ be an $N\times b$ matrix of spatial basis functions, whose $j$-th row is given by $\S(A_j)^\top$, with $j = 1, ..., N$. Define $\bm\eta$ to be a $b$-variate vector of Gaussian spatial random effects with mean vector $\mathbf{0}$ and covariance matrix $\E$. Also let $\xi(A_j)$, $j = 1, ..., N$, be i.i.d Gaussian random variables, independent of $\bm\eta$, that have mean $0$ and variance $\sigma^2_\xi$. Because FRK is not designed to work with general copula-based models, the parameterization of $\E$ and $\sigma^2_\xi$ is different to the parameterization discussed in Section \ref{sec:identifiability} of the main article. Let $g(\cdot)$ be a link function. The FRK R package implements common options including the identity function, the logarithmic function, the inverse function, and the complementary log-log function, among others. Now, define $\Tilde{\Y} \equiv (g(Y(A_1)), ..., g(Y(A_N)))^\top$, which we model as,
$$
\Tilde{\Y} = \bm\mu + \S\etab + \xib,
$$
where $\bm\mu$ is mean vector, usually given by the linear model $\X\bm\beta$, where $\X$ is matrix of BAU-level covariates. We refer readers to \citet{SainsburyDale2024} for more details on Bayesian inference, including parameter estimation and spatial prediction, under this model.

\subsection{FRK for the LG-Gau-SRE and LG-t-SRE data}\label{sec:appendix_FRK_LG}

For the datasets simulated from the LG-Gau-SRE and LG-t-SRE models, we first log-transformed the data and then fitted a model with a Gaussian data model and Gaussian process model (i.e., $g(\cdot)$ is the identity function) with no covariates. In the data model, we set the log-scale variance to be $(\sigma_\TO^*)^2$ (i.e., the true log-scale measurement-error variance, which was assumed known). After samples of the predictive distribution were obtained on the log-scale, they were back-transformed to the original scale (i.e., by exponentiation), and the mean of the back-transformed samples was taken as the value of the predictor. The default settings in the FRK R package were used to fit the model. 

\subsection{FRK for the SG-Gau-SRE and SG-t-SRE data}\label{sec:appendix_FRK_SG}

For the datasets simulated from the SG-Gau-SRE and SG-t-SRE models, we simply used a Gaussian data model and Gaussian process model with no covariates in the FRK R package. The default settings in the FRK package were used to fit the model.

\subsection{Set up of simulation study}\label{sec:appendix_setup}

Fig. \ref{fig:BAU_config} shows the arrangement of 10,000 BAUs on the unit square. Each square is coloured according to whether it is an `Observed BAU' or a `Missing BAU' (i.e., a BAU with no observations). The same sets of observed and missing BAUs were used for all models (LG-Gau-SRE, LG-t-SRE, SG-Gau-SRE, SG-t-SRE) and for all $R = 100$ simulated datasets.

\begin{figure}[H]
\centering
\includegraphics[width=\textwidth]{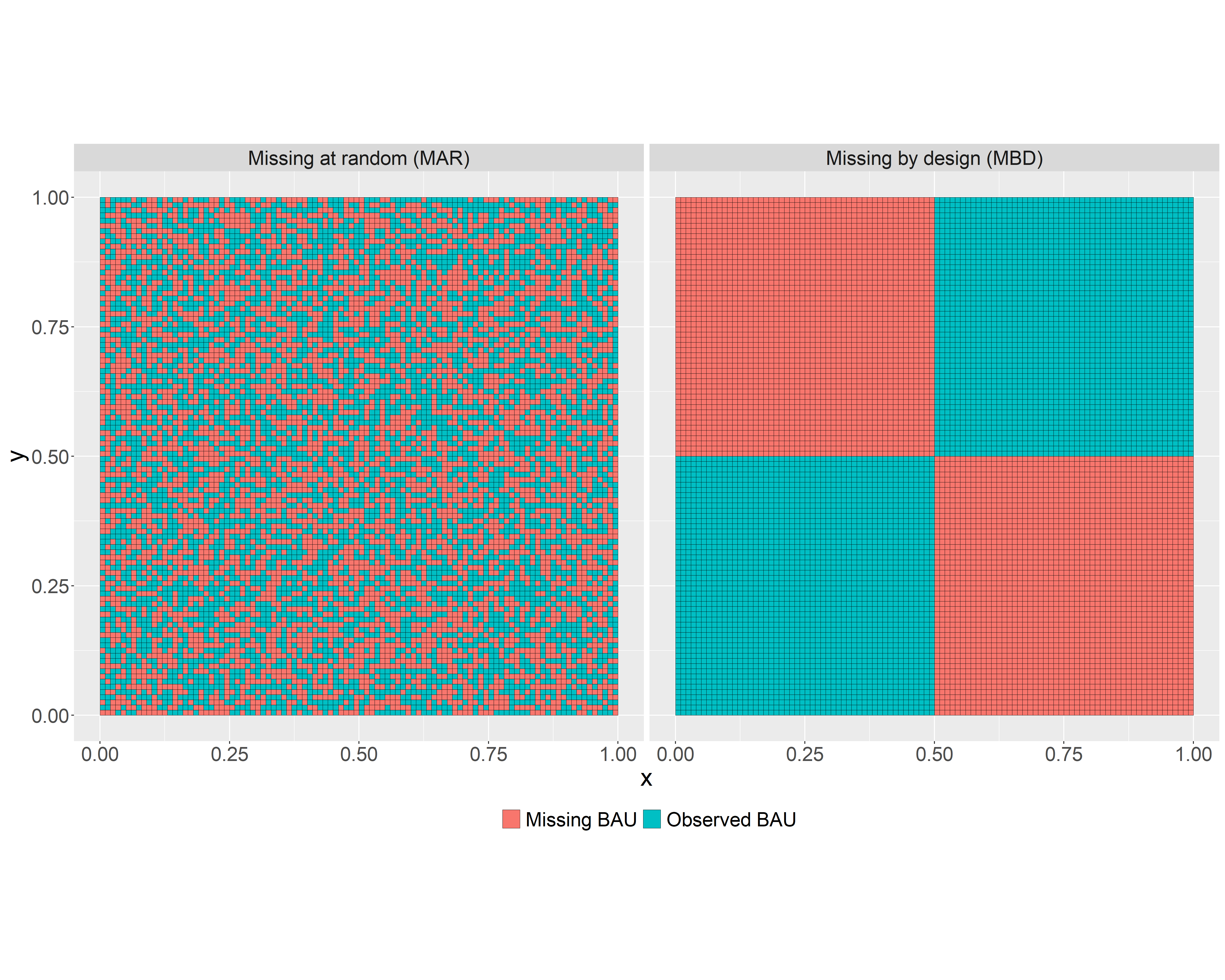}
\caption{Layout of Basic Areal Units (BAUs) with observations and missing observations. Of the total of $10,000$ BAUs, there are $K = 5,000$ BAUs with observations. The legend shows the schema for denoting `Observed BAUs' and `Missing BAUs'.}
\label{fig:BAU_config}
\end{figure}

Fig. \ref{fig:simulation_domain} shows the layout of the 36 bisquare spatial basis functions \citep{Cressie2008} used in the simulation study (Section \ref{sec:simulation_studies}). The bisquare basis functions were evaluated at the centroids of the 10,000 BAUs in the spatial domain; the calculations were performed with the FRK R package \citep{ZM2021, SainsburyDale2024}. The centers of the spatial basis functions are positioned on a $6\times 6$ regular grid on the unit square.

\begin{figure}[H]
\centering
\includegraphics[width=0.75\textwidth]{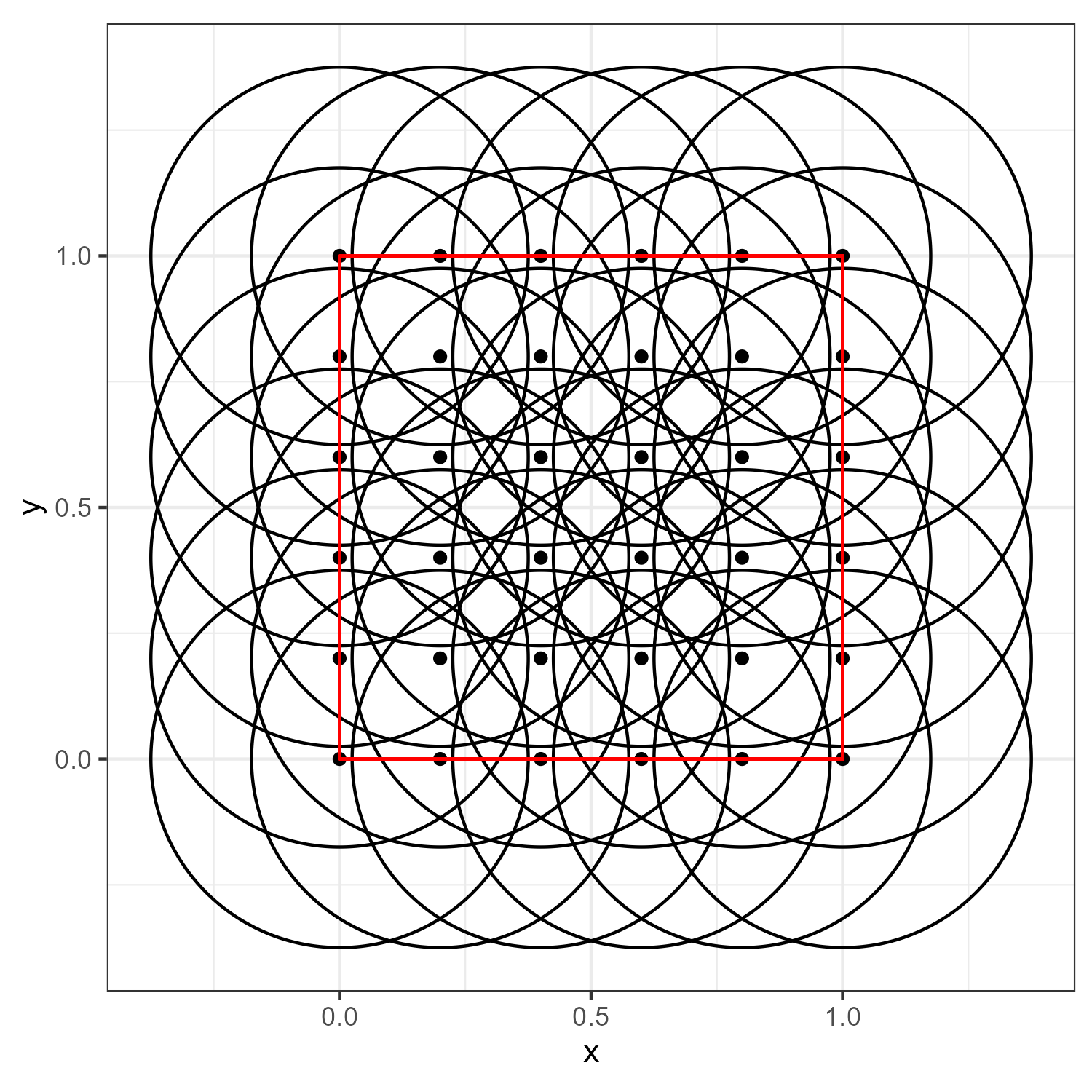}
\caption{The unit square with 36 bisquare spatial basis functions \citep{Cressie2008}. The dots are the centers of the bisquare basis functions. The radii of the circles are equal to 0.375 units.}
\label{fig:simulation_domain}
\end{figure}

\subsection{Simulating data from a Gau-SRE copula model}\label{sec:appendix_other_models_sims}

Section \ref{sec:setup} of the main article describes how data were simulated from the t-SRE copula model. The steps for simulating data from the Gau-SRE copula model are simpler, and they are as follows. Let $\bm\theta_\TP^*$ be a vector of all true parameter values. Then, for $r = 1, ..., R$, we simulated $\etab^{(r)} \sim [\etab \mid \bm\theta_\TP^*]$; then $\Y^{(r)} \sim [\Y\mid \etab^{(r)}, \bm\theta_\TP^*]$, where recall $\Y^{(r)} \equiv ((\Y_\TO^{(r)})^\top, (\Y_\TM^{(r)})^\top)^\top$; and finally, the observed spatial data $\Z_\TO^{(r)} \sim [\Z_\TO \mid \Y_\TO^{(r)}, (\sigma^*_\TO)^2]$. 

\subsection{Priors for the simulation study}\label{sec:appendix_sim_priors}

Table \ref{tab:priors_LG} displays the priors for the LG-Gau-SRE and LG-t-SRE models in the simulation study. Table \ref{tab:priors_SG} displays the priors for the SG-Gau-SRE and SG-t-SRE models in the simulation study. 

\begin{table}[H]
    \centering
    \caption{Prior distributions and their 95\% probability intervals for the parameters in the LG-Gau-SRE and LG-t-SRE models used for the simulation study. The Gamma (Gam) and truncated Gamma priors are parameterized according to the shape and scale; the Gaussian (Gau) prior is parameterized according to the mean and variance; the Half Cauchy (HC) priors have a scale parameter. The function $\mathbb{I}(\cdot)$ is the indicator function.}
    \begin{tabular}{|c|c|c|c|}
        \hline
         Parameter & Prior & 95\% probability interval & Description\\
         \hline
         $\sigma_\TP$ & $\mathrm{HC}(0.1)$ & (0.0039, 2.5452) & Half Cauchy\\
         $\beta_0$ & $\mathrm{Gau}(0, 100^2)$ & (-196, 196) & Gaussian\\
         $\theta_s$ & $\mathrm{Gam}(4, 2)$ & (2.18, 17.53) & Gamma\\
         $\theta_r$ & $\mathrm{HC}(0.25)$ & (0.0098, 6.3629) & Half Cauchy\\
         $\nu$ & $\mathbb{I}(\nu > 2) \times \mathrm{Gam}(3, 2)$ & (2.2366, 14.669) & Truncated Gamma\\
         \hline
    \end{tabular}
    \label{tab:priors_LG}
\end{table}

\begin{table}[H]
    \centering
    \caption{Prior distributions and their 95\% probability intervals for the parameters in the SG-Gau-SRE and SG-t-SRE models used for the simulation study. The Gamma (Gam) and truncated Gamma priors are parameterized according to the shape and scale; the Gaussian (Gau) prior is parameterized according to the mean and variance; the Half Cauchy (HC) priors have a scale parameter. The function $\mathbb{I}(\cdot)$ is the indicator function.}
    \begin{tabular}{|c|c|c|c|}
        \hline
         Parameter & Prior & 95\% probability interval & Description\\
         \hline
         $\sigma_{\TP}$ & $\mathrm{HC}(1000)$ & (39.3, 25451.7) & Half Cauchy\\
         $\lambda$ & $\mathrm{Gau}(0, 4^2)$ & (-7.84, 7.84) & Gaussian\\
         $\beta_0$ & $\mathrm{Gau}(0, 100^2)$ & (-196, 196) & Gaussian\\
         $\theta_s$ & $\mathrm{Gam}(4, 2)$ & (2.18, 17.54) & Gamma\\
         $\theta_r$ & $\mathrm{HC}(0.25)$ & (0.0098, 6.3629) & Half Cauchy\\
         $\nu$ & $\mathbb{I}(\nu > 2) \times \mathrm{Gam}(3, 2)$ & (2.2366, 14.669) & Truncated Gamma\\
         \hline
    \end{tabular}
    \label{tab:priors_SG}
\end{table}

\subsection{Initial values}\label{sec:appendix_sim_initial_values}

For $r= 1,..., R$, let $\sigma_\TP^{(0,r)}$, $\beta_0^{(0,r)}$, $\theta_s^{(0,r)}$, $\theta_r^{(0,r)}$, and $\nu^{(0,r)}$ be the initial values, respectively, of the parameters $\sigma_\TP$, $\beta_0$, $\theta_s$, $\theta_r$, and $\nu$ in the MCMC algorithm for the $r$-th simulated dataset. These were set as follows. For the LG-Gau-SRE and LG-t-SRE models, let $\log(\Z_\TO^{(r)})$ denote the elementwise logarithm of the $r$-th spatial dataset, for $r= 1,. ..., R$. Then, for the $r$-th dataset, $\sigma_\TP^{(0,r)}$ was set to the sample standard deviation of $\{\log(\Z_\TO^{(r)}): r = 1, ..., R\}$ (elementwise). Then, $\beta_0^{(0,r)}$ was set to the sample mean of $\{\log(\Z_\TO^{(r)}):r=1,...,R\}$. An empirical semivariogram was computed using $\log(\Z_\TO)$ detrended with a linear model involving the $x$ and $y$ coordinates of the data. The robust Cressie-Hawkins semivariogram estimator \citep{Cressie1980} was used, and all variogram calculations were done using the R package gstat \citep{Pebesma2004}. Subsequently, a spherical semivariogram model was fitted to the empirical semivariogram. If this failed, an exponential semivariogram model was fitted. If this failed, a wave semivariogram model was fitted. See \citet[pp. 61-63]{Cressie1993} for details about these models. The estimated range parameter from the fitted semivariogram model was used as $\theta_r^{(0,r)}$, and the ratio of the estimated partial sill to the estimated nugget effect was used as $\theta_s^{(0,r)}$. If the estimated nugget effect was zero, $\theta_s^{(0,r)} = 1$ was used instead. For the LG-t-SRE model, $\nu^{(0,r)} = 10$ was always used.

For $r = 1, ..., R$, let $\lambda^{(0,r)}$ denote the initial value of the skewness parameter $\lambda$ in the MCMC for the $r$-th dataset. For the SG-Gau-SRE and SG-t-SRE models, and for simulated datasets indexed by $r = 1, ..., R$, $\sigma_\TP^{(0,r)}$ was set to the sample standard deviation of $\{\Z_\TO^{(r)}:r=1,...,R\}$. For $\beta_0^{(0,r)}$, we used the sample mean of $\{\log(\Z_\TO^{(r)}): r = 1, ..., R\}$. The log-scale data are used because, per \eqref{eqn:param_mean_sn}, we have $E(Y(A_j)\mid \bm\theta_\TP) = \exp\{\beta_0\}$. Therefore, $\exp\{\beta_0\}$ is on the same scale as the spatial data $\{\Z^{(r)}: r = 1, ..., R\}$, and $\beta_0$ is on the same scale as $\{\log(\Z^{(r)}):r=1,...,R\}$. The sample skewness of $\Z_\TO^{(r)}$, calculated via the `moments' R package \citep{moments}, provided an initial value of $\lambda^{(0,r)}$. The same semivariogram-based procedure outlined above was used to initialize $\theta_s^{(0,r)}$ and $\theta_r^{(0,r)}$. Again, for the SG-t-SRE model, we set $\nu^{(0,r)} = 10$ for all datasets. 

\subsection{Additional results for the predictive performance of the LG-t-SRE and SG-t-SRE models}\label{sec:appendix_additional_predictions_t}

Fig. \ref{fig:LGt_Marginals} shows five randomly selected marginal predictive distributions among $\{[Y(A_{\TO k}) \mid \Z^{(1)}_\TO]: k = 1, ..., 5000\}$ and $\{[Y(A_{\TM l}) \mid \Z^{(1)}_\TO]: l = 1, ..., 5000\}$, under the LG-t-SRE, NME LG-t-SRE, and LG-Gau-SRE models, and FRK, based on the first ($r=1$) dataset $\Z_\TO^{(1)}$, which was simulated from the LG-t-SRE model. Fig. \ref{fig:SGt_Marginals} shows the same for the SG-t-SRE, NME SG-t-SRE, and SG-Gau-SRE models, and FRK, based on spatial data, $\Z_\TO^{(1)}$, simulated from the SG-t-SRE model. The marginal predictive distributions from FRK and the NME versions of the LG-t-SRE and SG-t-SRE models appear to deviate substantially from the marginal predictive distributions under the LG-t-SRE and SG-t-SRE models, either being more diffuse or lacking skewness when the marginal predictive distributions from the LG-t-SRE and SG-t-SRE models show skewness. This effect is more pronounced when the missing BAUs are MBD, and when the spatial data $\Z_\TO^{(1)}$ are simulated from the SG-t-SRE model. These results emphasise that using an appropriate copula-based hierarchical model can be advantageous, even when similar predictive performance can be achieved using Gaussian or simple trans-Gaussian models (e.g., FRK).

\begin{figure}
    \centering
    \includegraphics[width=.8\textwidth]{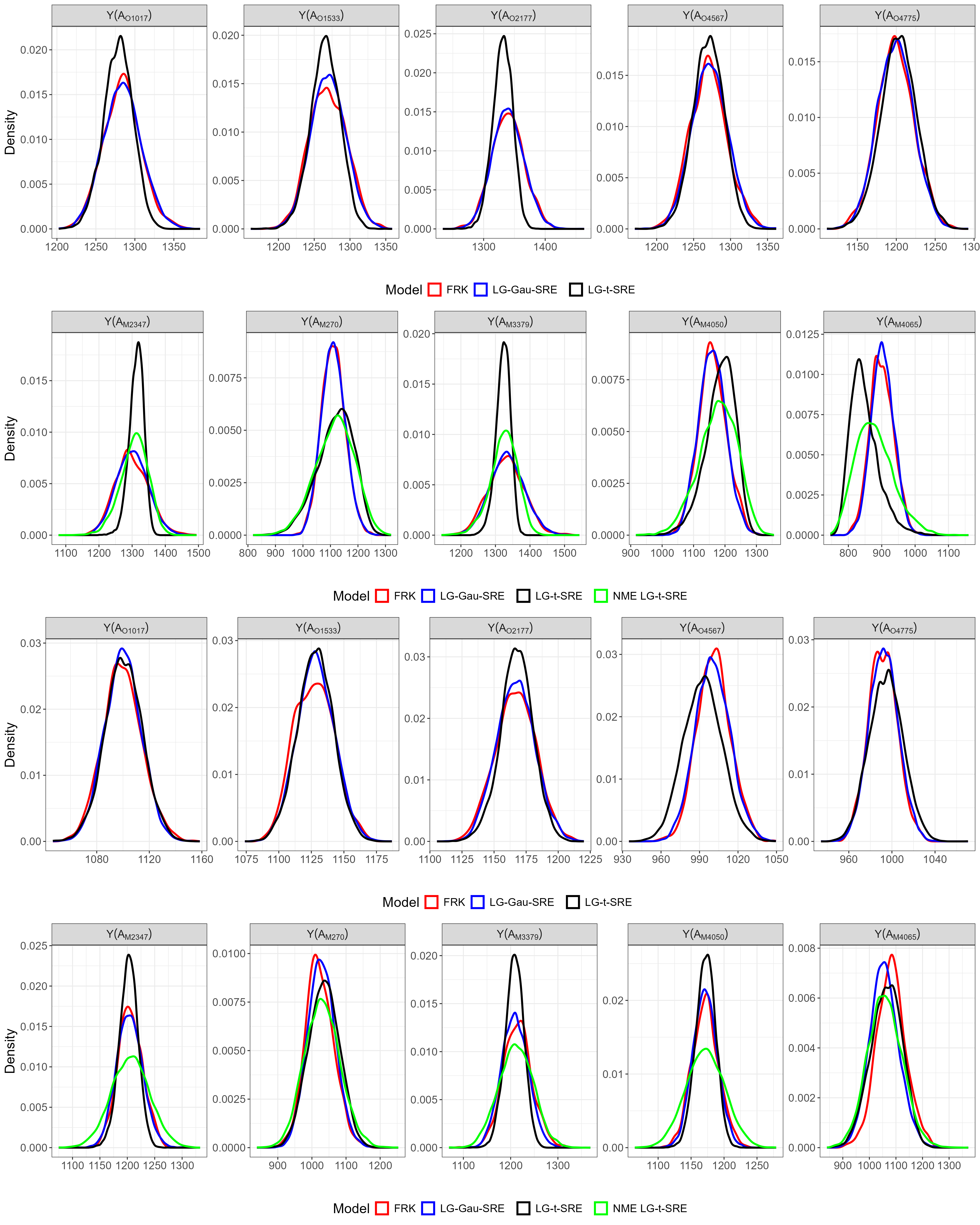}
    \caption{Kernel density estimates of the marginal predictive distributions of randomly selected elements of $\Y$ based on the first simulated dataset, $\Z_\TO^{(1)}$, under the LG-t-SRE model (which generated the data), the no-measurement-error (NME) LG-t-SRE model, the LG-Gau-SRE model, and fixed rank kriging (FRK). Top row: The marginal predictive distributions for five randomly selected elements of $\Y_\TO$ when the missing BAUs are missing at random (MAR). Second row: The marginal predictive distributions for five randomly selected elements of $\Y_\TM$ when the missing BAUs are MAR. Third row: As for the first row, but the missing BAUs are missing by design (MBD). Bottom row: As for the second row, but the missing BAUs are MBD.}
    \label{fig:LGt_Marginals}
\end{figure}

\begin{figure}
    \centering
    \includegraphics[width=.8\textwidth]{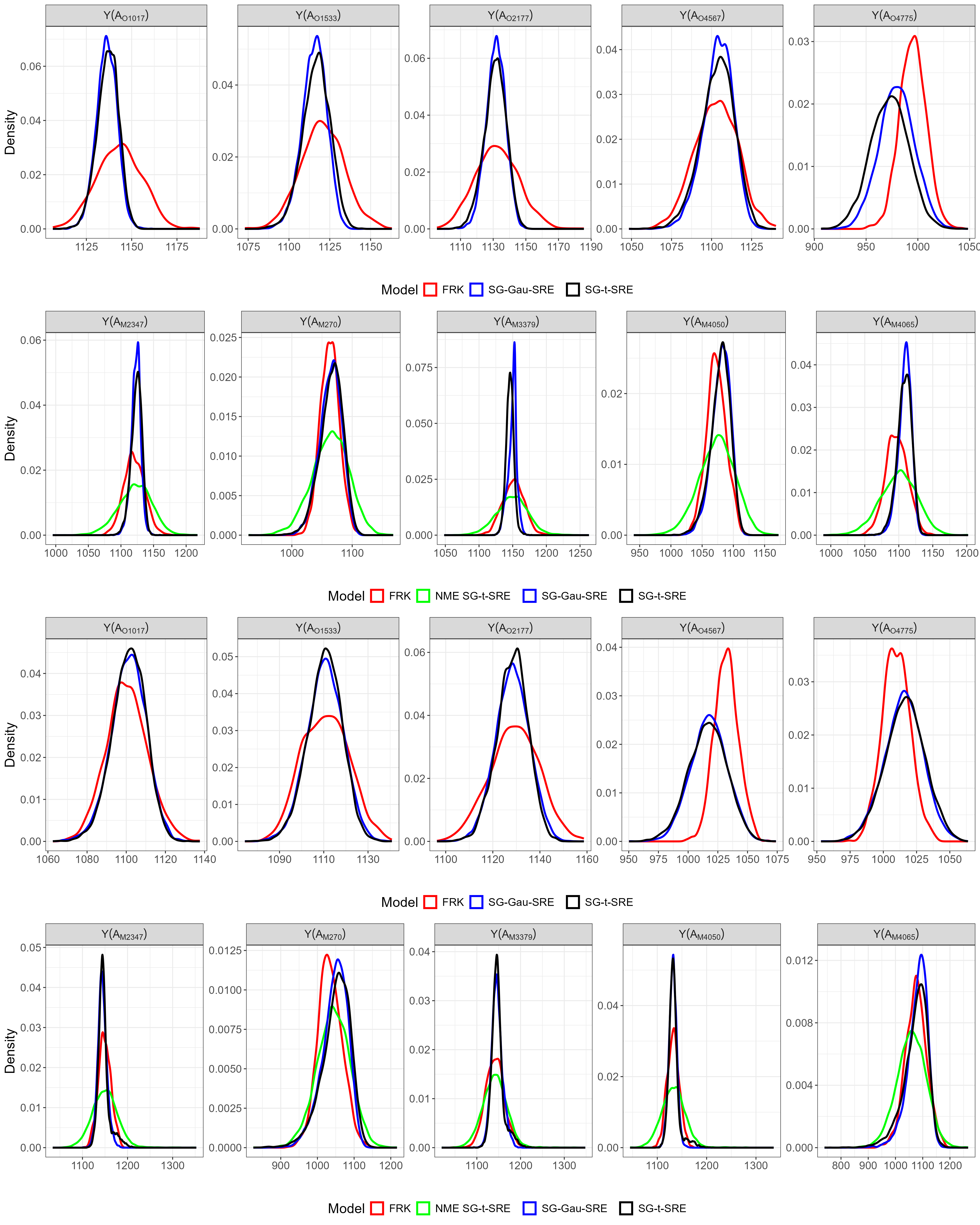}
    \caption{Kernel density estimates of the marginal predictive distributions of randomly selected elements of $\Y$ based on the first simulated dataset, $\Z^{(1)}$, under the SG-t-SRE model (which generated the data), the no-measurement-error (NME) SG-t-SRE model, the SG-Gau-SRE model, and fixed rank kriging (FRK). Top row: The marginal predictive distributions for five randomly selected elements of $\Y_\TO$ when the missing BAUs are missing at random (MAR). Second row: The marginal predictive distributions for five randomly selected elements of $\Y_\TM$ when the missing BAUs are MAR. Third row: As for the first row, but the missing BAUs are missing by design (MBD). Bottom row: As for the second row, but the missing BAUs are MBD.}
    \label{fig:SGt_Marginals}
\end{figure}

\subsection{Additional results for the predictive performance of the LG-Gau-SRE and SG-Gau-SRE models}\label{sec:appendix_additional_predictions}

Here we show that, when data are simulated from the LG-Gau-SRE and SG-Gau-SRE models, the respective model accurately predicts the latent spatial process, achieves adequate coverage of the 90\% prediction intervals, and outperforms the no-measurement-error versions of these models (NME LG-Gau-SRE and NME SG-Gau-SRE) and FRK. 

Fig. \ref{fig:appendix_LGGau_prediction_comparisons} shows the BAU-wise root-mean-squared prediction errors (RMSPEs) for the LG-Gau-SRE model compared to those from the NME LG-Gau-SRE model and FRK when the data are simulated from the LG-Gau-SRE model. Fig. \ref{fig:appendix_SGGau_prediction_comparisons} shows the same for the SG-Gau-SRE model compared to the NME SG-Gau-SRE model, and FRK when the data are simulated from the SG-Gau-SRE model. The plots show that, when the missing BAUs are MAR, the SG-Gau-SRE and LG-Gau-SRE models show, at most, marginal improvements in terms of predictive ability compared to the reference models, namely their no-measurement-error counterparts and FRK. However, the LG-Gau-SRE and SG-Gau-SRE models decisively outperform the reference models when the missing BAUs are MBD. The latter case is a better test of the models' predictive capabilities, since the former case (MAR) represents a scenario where accurate prediction of the process at the missing BAUs is easy since every missing BAU is virtually guaranteed to be close to several observed BAUs. 

\begin{figure}[!ht]
    \centering
    \includegraphics[width=\linewidth]{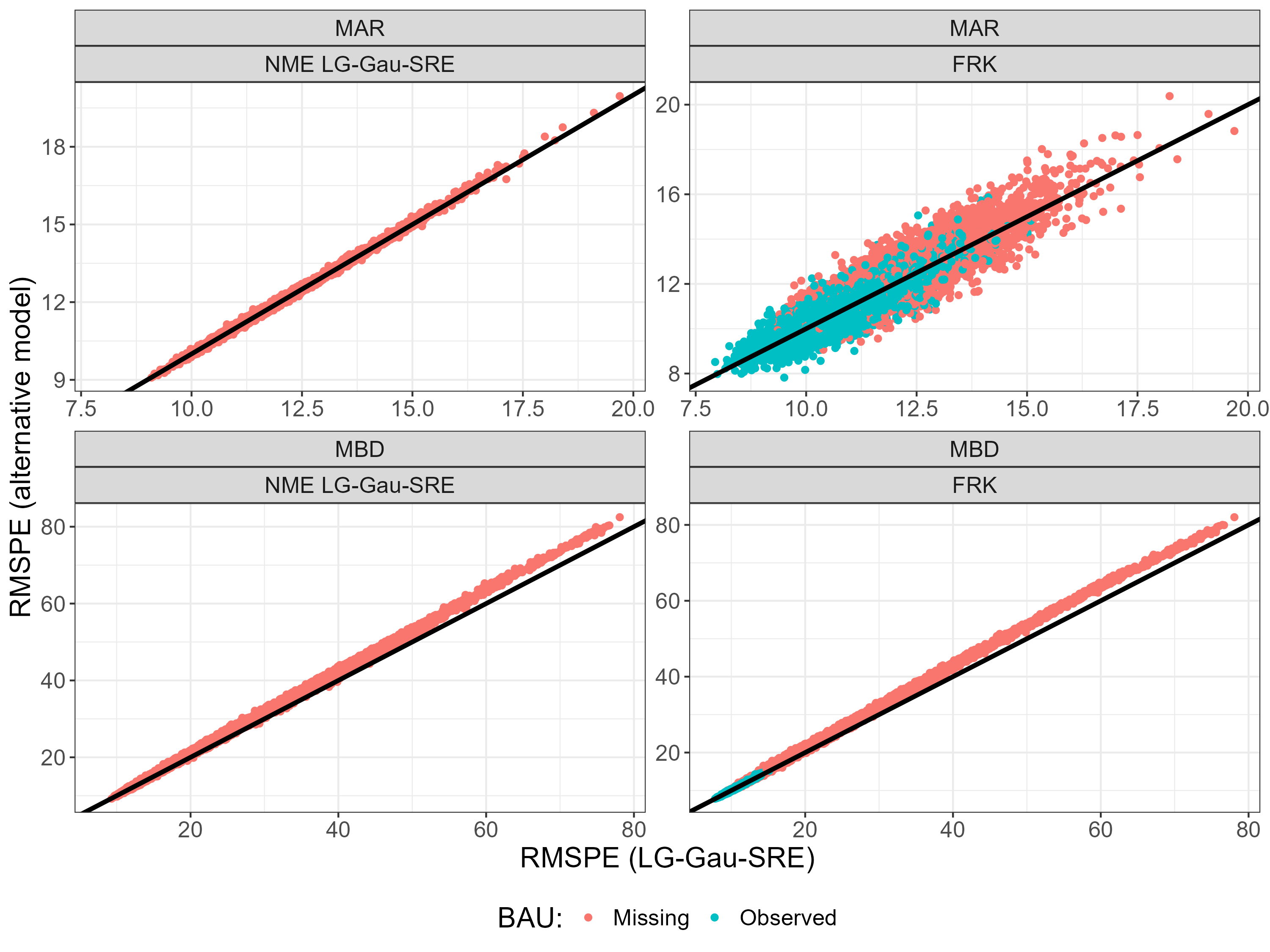}
    \caption{For data generated under the LG-Gau-SRE model, root-mean-squared prediction errors (RMSPEs) under the LG-Gau-SRE model plotted against the RMSPEs under a no-measurement-error version of the LG-Gau-SRE model, and the FRK predictor at each missing and observed basic areal unit (BAU), when the BAUs are missing at random (MAR) or missing by design (MBD). Points above the 45 degree line indicate the alternative model has a higher RMSPE than the LG-Gau-SRE model for that BAU.}
    \label{fig:appendix_LGGau_prediction_comparisons}
\end{figure}

\begin{figure}[!ht]
    \centering
    \includegraphics[width=\linewidth]{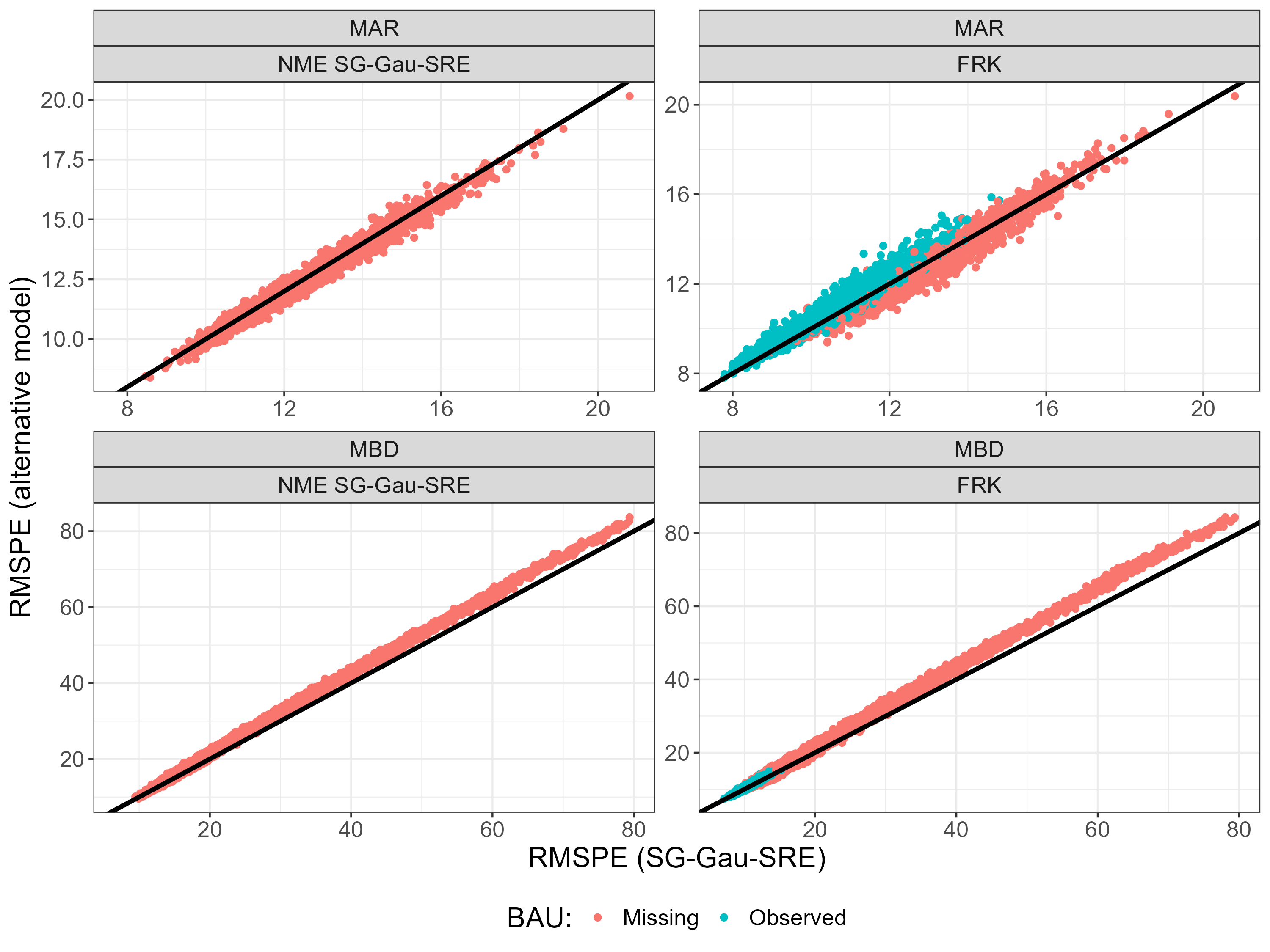}
    \caption{For data generated under the SG-Gau-SRE model, root-mean-squared prediction errors (RMSPEs) under the SG-Gau-SRE model plotted against the RMSPEs under a no-measurement-error version of the SG-Gau-SRE model and the FRK predictor at each missing and observed basic areal unit (BAU), when the BAUs are missing at random (MAR) or missing by design (MBD). Points above the 45 degree line indicate the alternative model has a higher RMSPE than the SG-Gau-SRE model for that BAU.}
    \label{fig:appendix_SGGau_prediction_comparisons}
\end{figure}

Fig. \ref{fig:appendix_extra_empirical_coverages} shows the BAU-wise empirical coverages in \eqref{eqn:EC_definition} under the LG-Gau-SRE and SG-Gau-SRE models, as well as those for FRK and the NME LG-Gau-SRE and NME SG-Gau-SRE models. These results show that, when the missing BAUs are MAR, the NME SG-Gau-SRE model is the only no-measurement-error model to achieve the correct empirical coverages for its 90\% prediction intervals. Otherwise, all no-measurement-error models have 90\% prediction intervals that are invalid due to overcoverage. All other models (the SG-Gau-SRE model and FRK) have nominal 90\% prediction intervals whose BAU-wise empirical coverages are approximately 90\%. 

\begin{figure}
    \centering
    \includegraphics[width=\linewidth]{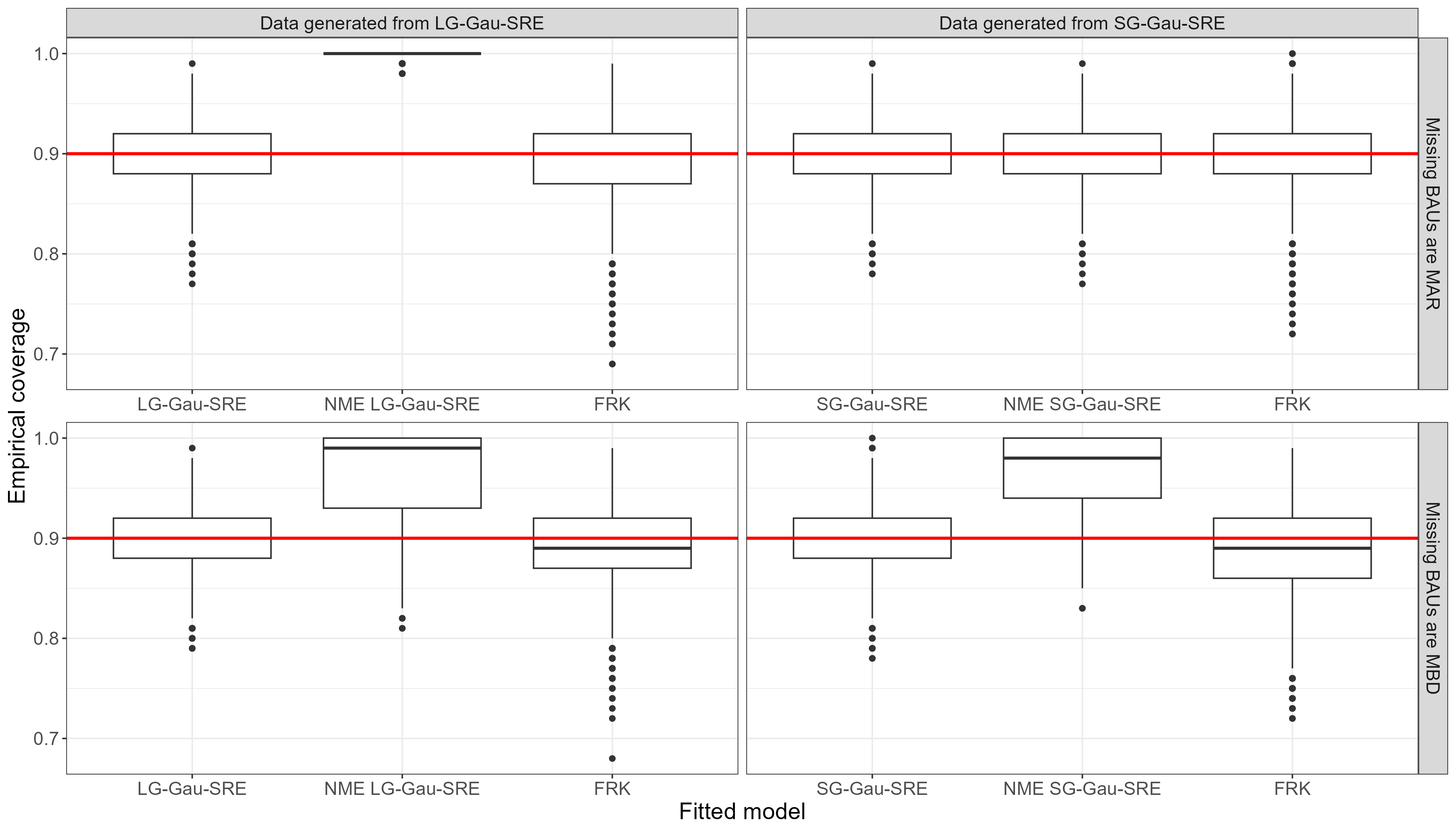}
    \caption{Empirical coverages in \eqref{eqn:EC_definition} calculated for all BAUs in $D$ for a range of models fitted to data simulated from either the LG-Gau-SRE model or the SG-Gau-SRE model. The panel titles indicate which copula-based model was used to simulate the data and whether the missing BAUs were `Missing at Random' (MAR) or `Missing by Design' (MBD).}\label{fig:appendix_extra_empirical_coverages}
\end{figure}

\subsection{Results for inference on the parameters}\label{sec:appendix_posterior_parameters}

Recall the parameters of the models and their true values. The true values of the parameters were set as $\beta_0^* = \log(1000)$, $\lambda^* = -5$, $\theta_s^* = 10$, $\theta_r^* = \sqrt{2}/4$, and $\nu^* = 4$ for all models. For the LG-Gau-SRE and LG-t-SRE models, we have $(\sigma^*_\TP)^2 = (0.1)^2$. For the SG-Gau-SRE and SG-t-SRE models, we have $(\sigma^*_\TP)^2 = (100)^2$. 

For every combination of our models (LG-Gau-SRE, SG-Gau-SRE, LG-t-SRE, and SG-t-SRE) and each arrangement of the missing BAUs (MAR and MBD), we examined the sample mean and sample standard error (SE) over the $R = 100$ simulated datasets of the posterior means of the parameters. The bias of the estimates was also computed. For generic parameter $\theta$, and estimator $\hat{\theta} = E(\theta \mid \Z)$, the bias is approximated as $\mathrm{Bias}(\hat{\theta}) \approx R^{-1}\sum_{r=1}^R (\hat{\theta}^{(r)} - \theta),$ where $\hat{\theta}^{(r)} = E(\theta \mid \Z^{(r)})$ from the $r$-th simulated dataset. Similarly, the sample SE is calculated as $\mathrm{SE}(\hat{\theta}) \approx \sqrt{(R-1)^{-1}\sum_{r=1}^R (\hat{\theta}^{(r)}- \bar{\theta})^2}$, where $\bar{\theta} = R^{-1}\sum_{r=1}^R \hat{\theta}^{(r)}$ Finally, the proportion of datasets for which the 95\% credible interval for each parameter contained the true value of the parameter was computed. 

Table \ref{tab:LG_Parameters_MAR} summarizes the results for the MAR datasets simulated from the LG-Gau-SRE and LG-t-SRE models. Table \ref{tab:LG_Parameters_MBD} summarizes the results for the MBD datasets. In all cases, the averages of the posterior means are clearly close to the true values. All averages of posterior means are within one standard error of the true values, with the only exception being the degrees of freedom parameter in the t-SRE copula (an inherently challenging problem).

\begin{table}[H]
    \centering
    \caption{Comparison table for true values of parameters against the average and standard error (SE) of the posterior mean values of the parameters across 100 simulated datasets from each of the LG-Gau-SRE and LG-t-SRE models with the missing BAUs being `missing at random' (MAR). The bias and the proportion of 95\% credible intervals (CIs) that covered the true value of the parameter across 100 simulated datasets, are also shown.}
    \begin{tabular}{|p{2cm}|p{1cm}|p{2.5cm}p{2.5cm}|p{1.5cm}p{1.5cm}|p{1cm}p{1cm}|}
        \hline
         Parameter & True value & \multicolumn{2}{p{5cm}|}{Average (SE) of posterior means } & \multicolumn{2}{p{2.5cm}}{Bias} & \multicolumn{2}{|p{2cm}|}{Proportion of true values in 95\% CI}\\
	\hline
	&  & Gau & t & Gau & t & Gau & t \\	
         \hline
  $\sigma_{\TP}$ & 0.1 & 0.093 (0.014) & 0.091 (0.020) & -0.007 & -0.009 & 0.97 & 0.94 \\ 
  $\beta_0$ & 6.9078 & 6.906 (0.033) & 6.906 (0.024) & -0.002 & -0.002 & 0.94 & 0.93 \\ 
  $\theta_s$ & 10  & 9.45 (1.756) & 9.337 (1.648) & -0.546 & -0.663 & 0.99 & 1.00 \\ 
  $\theta_r$ & 0.3536  & 0.362 (0.110) & 0.358 (0.111) & 0.008 & 0.005 & 0.99 & 0.96 \\ 
  $\nu$ & 4 &  & 6.011 (1.563) &  & 2.011 &  & 1.00 \\ 
         \hline
    \end{tabular}
    \label{tab:LG_Parameters_MAR}
\end{table}

\begin{table}[H]
    \centering
    \caption{Comparison table for true values of parameters against the average and standard error (SE) of the posterior mean values of the parameters across 100 simulated datasets  from each of the LG-Gau-SRE and LG-t-SRE models with the missing BAUs being `missing by design' (MBD). The bias and the proportion of 95\% credible intervals (CIs) that covered the true value of the parameter across 100 simulated datasets, are also shown.}
    \begin{tabular}{|p{2cm}|p{1cm}|p{2.5cm}p{2.5cm}|p{1.5cm}p{1.5cm}|p{1cm}p{1cm}|}
        \hline
         Parameter & True value & \multicolumn{2}{p{5cm}|}{Average (SE) of posterior means } & \multicolumn{2}{p{2.5cm}}{Bias} & \multicolumn{2}{|p{2cm}|}{Proportion of true values in 95\% CI}\\
	\hline
	&  & Gau & t & Gau & t & Gau & t \\	
         \hline
  $\sigma_{\mathrm{\TP}}$ & 0.1 & 0.092 (0.015) & 0.093 (0.019) & -0.008 & -0.007 & 0.98 & 0.97 \\ 
  $\beta_0$ & 6.9078 & 6.907 (0.039) & 6.906 (0.024) & 0.000 & -0.002 & 0.93 & 0.95 \\ 
  $\theta_s$ & 10  & 9.328 (1.675) & 9.164 (1.687) & -0.672 & -0.835 & 1.00 & 0.99 \\ 
  $\theta_r$ & 0.3536  & 0.366 (0.138) & 0.367 (0.130) & 0.012 & 0.013 & 0.96 & 0.98 \\ 
  $\nu$ & 4 &  & 5.953 (1.341) &  & 1.953 &  & 1.00 \\ 
         \hline
    \end{tabular}
    \label{tab:LG_Parameters_MBD}
\end{table}

Tables \ref{tab:SG_Parameters_MAR} and \ref{tab:SG_Parameters_MBD} show parameter-recovery results for the SG-Gau-SRE and SG-t-SRE models when the BAUs are MAR (Table \ref{tab:SG_Parameters_MAR}) and MBD (Table \ref{tab:SG_Parameters_MBD}). As with the results for the LG-Gau-SRE and LG-t-SRE models, it is evident that all parameter estimates are close to the true values. Except for the skewness ($\lambda$) and degrees of freedom ($\nu$) parameters in the SG-t-SRE model, all averages of the posterior means are within one standard error of the true values.

\begin{table}[H]
    \centering
    \caption{Comparison table for true values of parameters against the average and standard error (SE) of the posterior mean values of the parameters across 100 simulated datasets from each of the SG-Gau-SRE and SG-t-SRE models with the missing BAUs being `missing at random' (MAR). The bias and the proportion of 95\% credible intervals (CIs) that covered the true value of the parameter across 100 simulated datasets, are also shown}
    \begin{tabular}{|p{2cm}|p{1cm}|p{2.5cm}p{2.5cm}|p{1.5cm}p{1.5cm}|p{1cm}p{1cm}|}
        \hline
         Parameter & True value & \multicolumn{2}{p{5cm}|}{Average (SE) of posterior means } & \multicolumn{2}{p{2.5cm}}{Bias} & \multicolumn{2}{|p{2cm}|}{Proportion of true values in 95\% CI}\\
	\hline
	&  & Gau & t & Gau & t & Gau & t \\	
         \hline
  $\sigma_{\TP}$ & 100 & 97.074 (18.10) & 89.680 (17.847) & -2.926 & -10.320 & 0.94 & 0.96 \\ 
  $\beta$ & 6.9078 & 6.904 (0.030) & 6.898 (0.022) & -0.004 & -0.010 & 0.88 & 0.95 \\ 
  $\lambda$ & -5  & -4.492 (1.112) & -3.196 (1.796) & 0.508 & 1.804 & 0.99 & 0.91 \\ 
  $\theta_s$ & 10  & 9.630 (1.950) & 9.194 (1.889) & -0.370 & -0.806 & 0.98 & 0.96 \\ 
  $\theta_r$ & 0.3536  & 0.368 (0.106) & 0.379 (0.145) & 0.015 & 0.025 & 0.95 & 0.93 \\ 
  $\nu$ & 4 &  & 5.839 (1.301) &  & 1.839 &  & 1.00 \\ 
         \hline
    \end{tabular}
    \label{tab:SG_Parameters_MAR}
\end{table}

\begin{table}[H]
    \centering
    \caption{Comparison table for true values of parameters against the average and standard error (SE) of the posterior mean values of the parameters across 100 simulated datasets  from each of the SG-Gau-SRE and SG-t-SRE models with the missing BAUs being `missing by design' (MBD). The bias and the proportion of 95\% credible intervals (CIs) that covered the true value of the parameter across 100 simulated datasets, are also shown.}
    \begin{tabular}{|p{2cm}|p{1cm}|p{2.5cm}p{2.5cm}|p{1.5cm}p{1.5cm}|p{1cm}p{1cm}|}
        \hline
         Parameter & True value & \multicolumn{2}{p{5cm}|}{Average (SE) of posterior means } & \multicolumn{2}{p{2.5cm}}{Bias} & \multicolumn{2}{|p{2cm}|}{Proportion of true values in 95\% CI}\\
	\hline
	&  & Gau & t & Gau & t & Gau & t \\	
         \hline
  $\sigma_{\TP}$ & 100 & 96.572 (19.65) & 95.940 (22.883) & -3.428 & -4.060 & 0.95 & 0.96 \\ 
  $\beta$ & 6.9078 & 6.902 (0.033) & 6.897 (0.027) & -0.006 & -0.011 & 0.94 & 0.95 \\ 
  $\lambda$ & -5  & -4.300 (1.129) & -3.109 (1.637) & 0.700 & 1.891 & 0.97 & 0.95 \\ 
  $\theta_s$ & 10  & 9.522 (1.601) & 9.028 (1.774) & -0.478 & -0.972 & 1.00 & 0.99 \\ 
    $\theta_r$ & 0.3536  & 0.381 (0.142) & 0.363 (0.130) & 0.027 & 0.009 & 0.93 & 0.97 \\ 
  $\nu$ & 4 &  & 5.741 (1.137) &  & 1.741 &  & 1.00 \\ 
         \hline
    \end{tabular}
    \label{tab:SG_Parameters_MBD}
\end{table}

\subsection{Timing of computations}\label{sec:appendix_timing}

For the LG-Gau-SRE and SG-Gau-SRE models, for each configuration of missing BAUs (MAR and MBD), and for each of $r = 1, ..., 100$ simulated datasets, a total of 45,000 MCMC iterations (including burn-in) were run concurrently on Australia's National Computing Infrastructure's GADI high-performance computing cluster, using R version 4.2.2. Each set of 45,000 MCMC iterations took, on average, approximately 8 minutes for the LG-Gau-SRE model and 16-17 minutes for the SG-Gau-SRE model. The NME versions of the LG-Gau-SRE and SG-Gau-SRE models each took approximately 6-7 minutes and 11-12 minutes, respectively, for 45,000 MCMC iterations.

\section{Additional details for the methane (XCH$_4$) remote-sensing application}\label{sec:appendix_real_data}

This section provides additional figures, tables, and modeling details for the analysis of the atmospheric methane (XCH$_4$) data in Section \ref{sec:methane} in the main article. Section \ref{sec:appendix_methane_baus} presents the 7,370 BAUs that cover the study area. Sections \ref{sec:appendix_daily_histograms}-\ref{sec:appendix_measurement_error} present plots for exploratory analysis of these data. Section \ref{sec:appendix_methane_basis} describes the construction of the 101 bisquare spatial basis functions used in the analysis. Section \ref{sec:appendix_averaging_data} discusses what to do when two or more retrievals fall in a single BAU. Section \ref{sec:appendix_skew_Gaussian} discusses the SG-Gau-SRE model used to capture the skewness seen in the XCH$_4$ data in the Bowen Basin, Queensland, Australia. Section \ref{sec:appendix_FRK_methane} describes FRK for the methane data.  Section \ref{sec:appendix_methane_priors} presents the set of weakly informative priors used for the MCMC in the Bayesian analysis presented in Section \ref{sec:methane} of the main article, and Section \ref{sec:appendix_methane_initial} discusses the initial values used for the MCMC. Finally, trace plots for the parameters, random effects, latent-process values are shown in Section \ref{sec:appendix_methane_trace_plots}, and formal convergence diagnostics for the MCMC are reported in Section \ref{sec:appendix_convergence_diagnostics}. 

\subsection{Basic Areal Units in the Bowen Basin, Queensland, Australia}\label{sec:appendix_methane_baus}

For the Bowen Basin in Queensland, Australia, we defined the BAUs as a regular grid of 0.05dd $\times$ 0.05dd squares (`dd' $\equiv$ decimal degrees) that tesselate $D$, which is the area between 26$^\circ$S and 20.5$^\circ$S and between 147$^\circ$E and 151$^\circ$E (or the Queensland coastline); see Fig. \ref{fig:appendix_methane_baus}. Since retrieval characteristics of XCH$_4$ are different over land and over water, we only kept a BAU if more than 50\% of its area is terrestrial. There were a total of 7,370 BAUs in $D$.

\begin{figure}[H]
\centering
\includegraphics[width=0.65\textwidth]{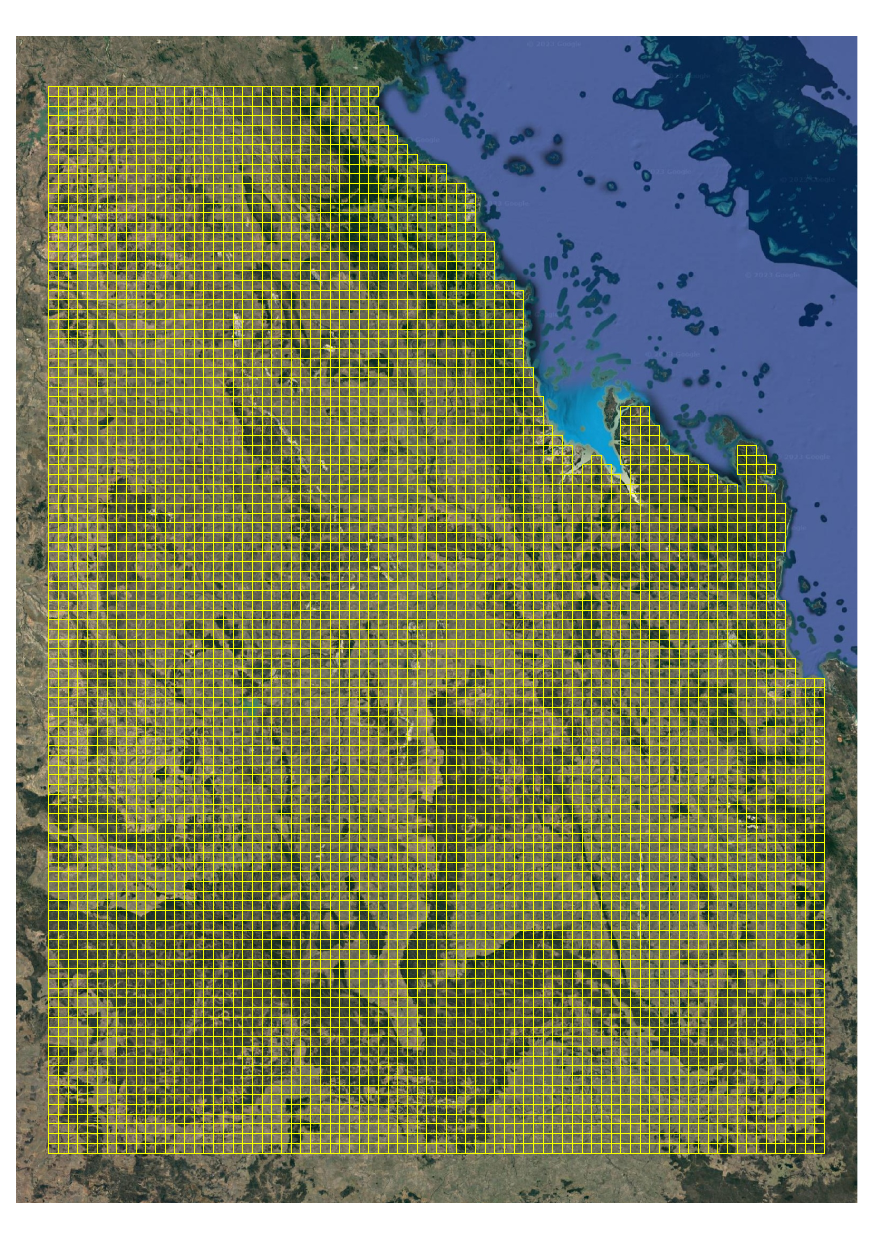}
\caption{A map of the 7,370 Basic Areal Units (BAUs; grid squares) that tesselate our study area in and around the Bowen Basin, Queensland, Australia. The map is plotted in the EPSG:4326 coordinate system.}
\label{fig:appendix_methane_baus}
\end{figure}

\subsection{Daily histograms}\label{sec:appendix_daily_histograms}

In August, 2020, there were 41,750 column-average methane (XCH$_4$) retrievals in the study area defined in and around the Bowen Basin in the state of Queensland, Australia. These retrievals occurred across 24 of the 31 days in August; no retrievals were recorded on the other 7 days of August. Fig. \ref{fig:methane_histograms} shows the histograms of XCH$_4$ recorded on each day: There are days where the histograms are positively skewed, days where the histograms are negatively skewed, and days where the histograms appear to be symmetric.

\begin{figure}[H]
\centering
\includegraphics[width=.8\textwidth]{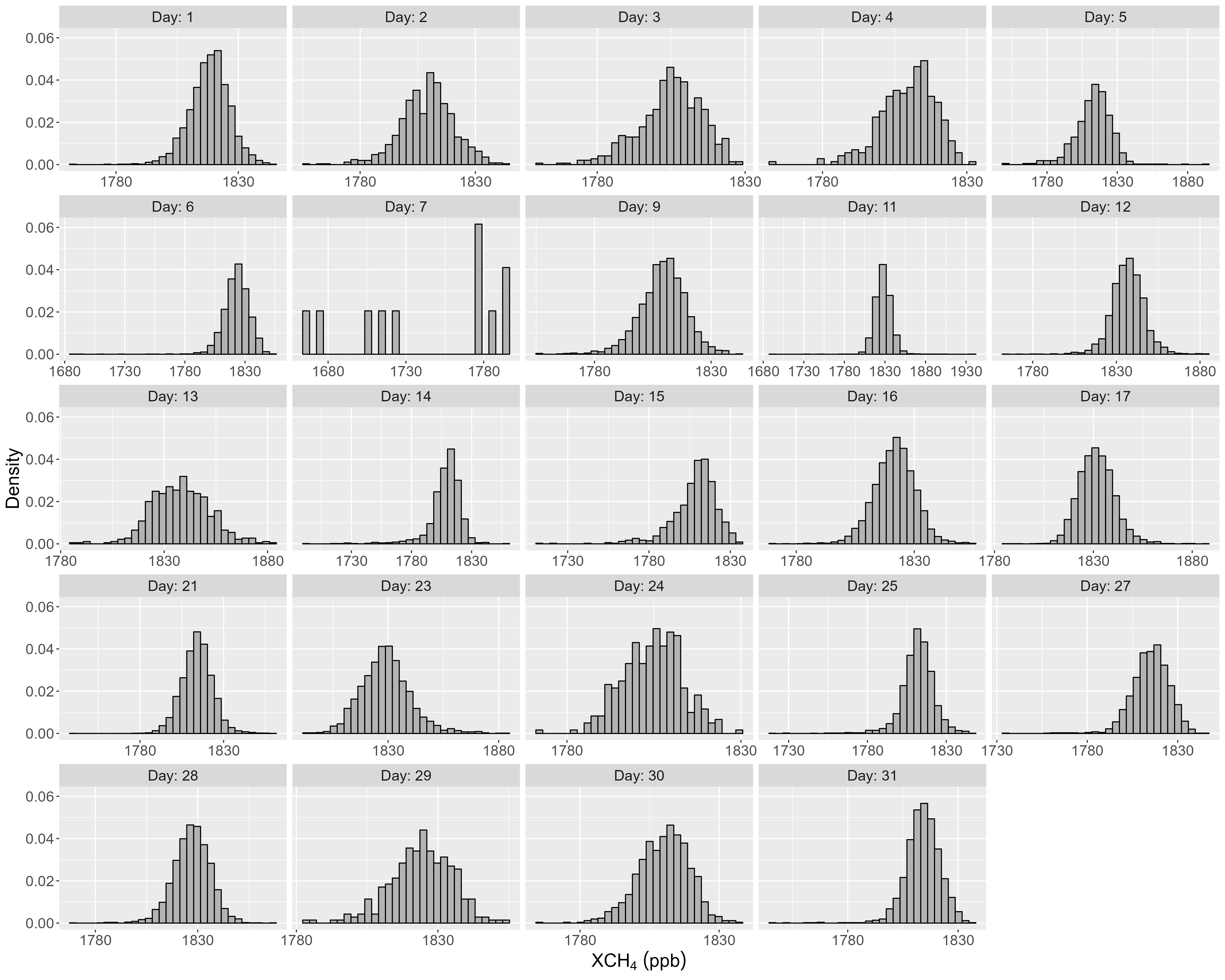}
\caption{Histograms of XCH$_4$ values retrieved by the Sentinel 5P satellite in August, 2020, by day. Some days of the month are missing because no retrievals occurred in the study area on those days.}
\label{fig:methane_histograms}
\end{figure}

\subsection{Counts of observations in each day}\label{sec:appendix_daily_counts}

The number of XCH$_4$ retrievals recorded on each day is shown in Table \ref{tab:number_of_obs}. Note that the numbers in Table \ref{tab:number_of_obs} represent the total number of retrievals on each day. If multiple retrievals fall into a single BAU, the values of the retrievals are averaged. This may mean that the total number of observed BAUs on each day is lower than the number listed in Table \ref{tab:number_of_obs}.

\begin{table}[H]
\centering
\caption{Number of XCH$_4$ retrievals on each day of retrievals during August, 2020, in and around the Bowen Basin in Queensland, Australia.}\label{tab:number_of_obs}
\begin{tabular}{rr}
  \hline
 Day (August 2020) & No. of XCH$_4$ retrievals \\ 
  \hline
   1 & 1611 \\ 
   2 & 845 \\ 
     3 & 624 \\ 
     4 & 298 \\ 
     5 & 768 \\ 
   6 & 2172 \\ 
     7 &  11 \\ 
    9 & 1811 \\ 
    11 & 4494 \\ 
    12 & 2625 \\ 
  13 & 555 \\ 
  14 & 1274 \\ 
  15 & 824 \\ 
  16 & 4152 \\ 
 17 & 3288 \\ 
  21 & 4611 \\ 
   23 & 2110 \\ 
   24 & 305 \\ 
  25 & 2169 \\ 
   27 & 1671 \\ 
  28 & 2537 \\ 
  29 & 289 \\ 
   30 & 1044 \\ 
 31 & 1662 \\ 
   \hline
Total: & 41750\\
\hline
\end{tabular}
\end{table}

\subsection{Measurement errors of the XCH$_4$ retrievals}\label{sec:appendix_measurement_error}

On any given day, let $\{\s_{i} \in D:i=1,...,n\}$ denote the set of $n$ spatial locations (reported by Sentinel 5P as a longitude-latitude coordinate pair) where XCH$_4$ has been retrieved. The retrievals are written as $\{Z(\s_i):i=1,...,n\}$. Individual measurement-error quantifications associated with the XCH$_4$ retrievals are provided at every location by the TROPOMI instrument onboard the Sentinel 5P satellite \citep{Hu2016}. Hence write the retrieval-specific measurement-error standard deviations as $\{\sigma_\trop(\s_i): i =1,...,n\}$ (in ppb). The subscript `$\trop$' is intended to recall `TROPOMI'. The notation here is deliberately different from the `$\sigma_\TO$' used to denote the (BAU-level) measurement-error standard-deviation parameter in various data models described elsewhere in the main article and in this Supplement. The reasons for this are made clear in Section \ref{sec:appendix_averaging_data}, where we describe the process of aggregating the point-level XCH$_4$ retrievals to the BAU-level.  

Fig. \ref{fig:tropomi_measurement_error} plots $\{\sigma_\trop(\s_i):i=1,...,n\}$ against $\{Z(\s_i):i=1,...,n\}$. It reveals that the behavior of the measurement errors and their relationship to the magnitude of the underlying measurements vary by day. There are some days where there appears to be no relationship. On other days, the measurement error increases or decreases with the magnitude of the retrieval. 

\begin{figure}[H]
\centering
\includegraphics[width=\textwidth]{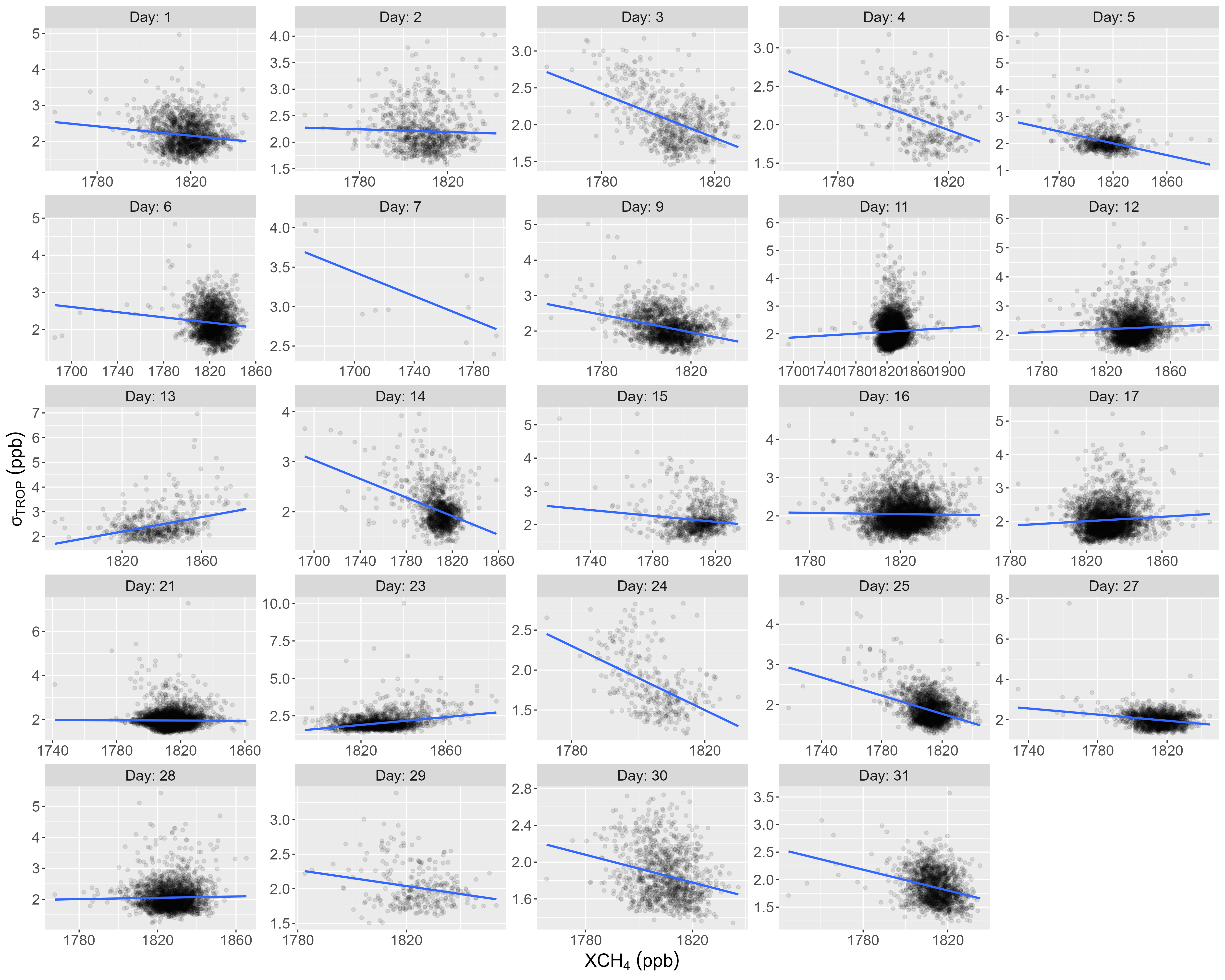}
\caption{Relationships between the retrieval-specific measurement-error standard deviation $\sigma_\trop(\cdot)$ and magnitude of the XCH$_4$ retrieval, both in parts per billion (ppb), by day in August 2020, in and around the Bowen Basin in the state of Queensland, Australia.}
\label{fig:tropomi_measurement_error}
\end{figure}

\subsection{Bisquare spatial basis functions over the Bowen Basin, Queensland, Australia}\label{sec:appendix_methane_basis}

Fig. \ref{fig:appendix_methane_basis_funcs} shows the 101 bisquare basis functions used to map XCH$_4$ in and around the Bowen Basin, Queensland, Australia. These basis functions were selected and generated using the FRK package \citep{ZM2021, SainsburyDale2024} in R. 

\begin{figure}[H]
\centering
\includegraphics[width=0.65\textwidth]{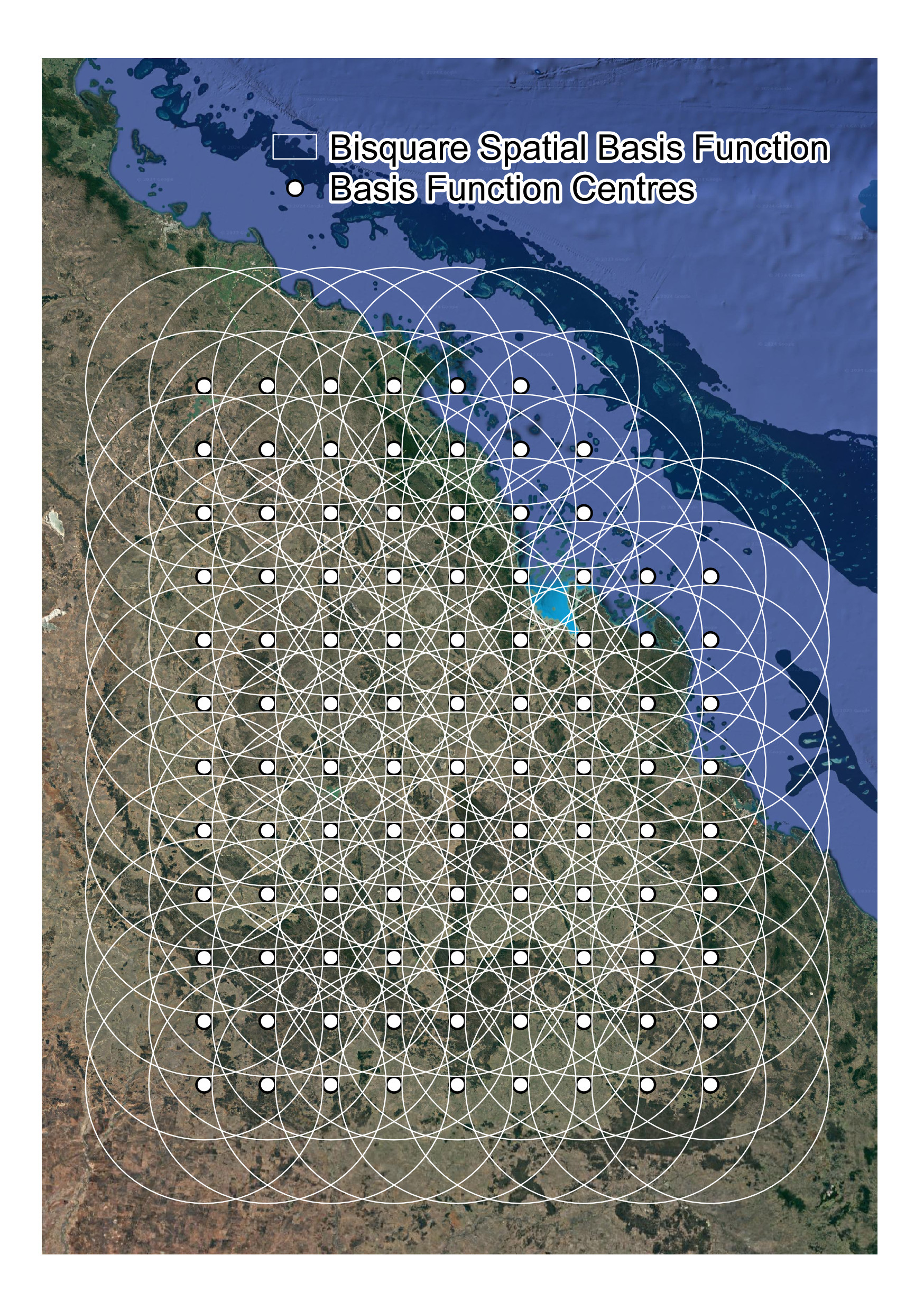}
\caption{A set of 101 bisquare spatial basis functions with apertures of 0.9258 decimal degrees (dd) covering the Bowen Basin, Queensland, Australia. The dots represent the basis-function centers, and the circles have a radius of 0.9258 dd. The map is plotted in the EPSG:4326 coordinate system.}
\label{fig:appendix_methane_basis_funcs}
\end{figure}

\subsection{Defining BAU-level data and averaging retrievals within the BAUs}\label{sec:appendix_averaging_data}

On any given day, the retrievals $\{Z(\s_i): i =1, ..., n\}$ may occur anywhere in the domain $D$ that falls along the track of the Sentinel 5P satellite. Hence the retrieval locations $\{\s_i:i=1,...,n\}$ do not form a regular grid, and they are not necessarily aligned with the BAUs shown in Fig. \ref{fig:appendix_methane_baus}. Since our copula-based models (Sections \ref{sec:hierarchical_models}, \ref{sec:appendix_LGGau}--\ref{sec:appendix_SGt}) are defined for BAU-level spatial data, we need to aggregate here the point-level retrievals  $\{Z(\s_i): i =1, ..., n\}$ to produce a dataset of BAU-level observations, $\{Z(A_{\TO k}): k = 1, ..., K\}$, where $K \leq n$ and, obviously, each location $\{\s_i:i=1,...,n\}$ falls into one and only one BAU.

For a given BAU, $A_{\TO k}$ ($k = 1, ..., K$), suppose that there are retrieval locations $\{\s'_1, ..., s'_{n_k}\}$ that fall into $A_{\TO k}$. Then $A_{\TO k}$ contains the retrievals $\{Z(\s'_1), ..., Z(\s'_{n_k})\}$ with measurement-error standard deviations $\{\sigma_\trop(\s'_1), ..., \sigma_\trop(\s'_{n_k})\}$, provided by the Sentinel 5P database. Assume all point-level retrievals are conditionally independent given $Y(A_{\TO k})$, and let $\{[Z(\s_{h}') \mid Y(A_{\TO k})]: h = 1, ..., n_k]\}$ denote the conditional distributions of the point-level retrievals given $Y(A_{\TO k})$, with conditional mean $E(Z(s_{h}') \mid Y(A_{\TO k})) = Y(A_{\TO k})$ and conditional variance $\mathrm{var}(Z(s_{h}') \mid Y(A_{\TO k})) = \sigma_\trop(\s'_{h})^2$, for $h = 1, ..., n_k$. Note that, unconditionally, the retrievals are spatially dependent.

For BAUs that contain at least one observation, we define the BAU-level observation as the average of the point-level retrievals: That is,
$$
Z(A_{\TO k}) = \frac{1}{n_k}\left(Z(\s'_1) + \cdots + Z(\s'_{n_k})\right).
$$
The consequences of this averaging for the measurement-error standard deviations are as follows: Due to conditional independence, we have
\begin{align*}
\mathrm{var}(Z(A_{\TO k})\mid Y(A_{\TO k})) &\equiv \mathrm{var}\!\left(\frac{1}{n_k}(Z(\s'_1) + \cdots + Z(\s'_{n_k}))\middle|~ Y(A_{\TO k})\right)\\
&= \frac{1}{n_k^2}\!\Big(\mathrm{var}\!\left(Z(\s_1') \mid Y(A_{\TO k})\right) + \cdots + \mathrm{var}\!\left(Z(\s_{n_k}') \mid Y(A_{\TO k})\right)\Big)\\
&= \frac{1}{n_k^2}\Big(\sigma_\trop(\s'_1)^2 + \cdots +\sigma_\trop(\s'_{n_k})^2\Big)\\
&\equiv \sigma_\TO(A_{\TO k})^2.
\end{align*}
It follows that the BAU-level measurement-error standard deviations are given by,
\begin{equation}
\sigma_\TO(A_{\TO k}) \equiv \frac{1}{n_k}\sqrt{\sigma_\trop(\s'_1)^2 + \cdots + \sigma_\trop(\s'_{n_k})^2}.\label{eqn:appendix_avg_ME_sd}
\end{equation}
Hence, we use $\sigma_\TO(A_{\TO k})$ as the measurement-error standard deviation in the data model of the copula-based hierarchical spatial statistical model for XCH$_4$ (see next section). BAUs that contain no data define which part of the latent vector $\Y$ are written as $\Y_\TM$.

For the XCH$_4$ dataset analyzed in Section \ref{sec:methane}, there are a small number of BAUs containing two retrievals, and no BAUs contain more than two retrievals. On 23 August (Day 23) and 28 August (Day 28), 2020, there are 33 and 31 BAUs containing two retrievals, respectively. The pairs of retrievals were averaged to define the BAU-level XCH$_4$ value, and the measurement-error standard deviations of the BAU-level XCH$_4$ values were calculated using \eqref{eqn:appendix_avg_ME_sd}. 

\subsection{The SG-Gau-SRE model for XCH$_4$}\label{sec:appendix_skew_Gaussian}

The SG-Gau-SRE model described in Section \ref{sec:appendix_SGGau} was used to analyze the XCH$_4$ data, with the following modifications. For the data model, we modelled $Z(A_{\TO k}) \mid Y(A_{\TO k}), \sigma_\TO (A_{\TO k})^2 \sim \Gau(Y(A_{\TO k}), \sigma_\TO(A_{\TO k})^2)$, and $[\Z_\TO \mid \Y_\TO, \{\sigma_\TO(A_{\TO k})^2: k = 1, ..., K\}] = \prod_{k=1}^K [Z(A_{\TO k}) \mid Y(A_{\TO k}), \sigma_\TO(A_{\TO k})^2]$, where $\sigma_\TO(A_{\TO k})$ is given by \eqref{eqn:appendix_avg_ME_sd} for all $k=1,...,K$. This differs slightly from the model described in Section \ref{sec:appendix_SGGau} in that the measurement-error variances are now allowed to differ for every observation. In the process model, the matrix of spatial basis functions $\S$ is $7,370\times 101$, where there are $7,370$ BAUs and $101$ bisquare basis functions (Sections \ref{sec:appendix_methane_baus} and \ref{sec:appendix_methane_basis}). The values of the basis functions for the $j$-th BAU, $j = 1, ..., 7,370$, were calculated by evaluating each spatial basis function at the centroid of the $j$-th BAU. The number of rows in $\S_\TO$ and $\S_\TM$ differed by day. Basis functions were used to capture the principal spatial variability, and no covariates were used.

\subsection{FRK for XCH$_4$}\label{sec:appendix_FRK_methane}

For fixed-rank kriging (FRK) of the XCH$_4$ dataset, the following SRE model was used. A Gaussian data model with observation-specific measurement-error variances described in Sections \ref{sec:appendix_averaging_data} and \ref{sec:appendix_skew_Gaussian} was assumed, and a Gaussian process model with no covariates and a $7,370\times101$ matrix of spatial basis functions was used (Section \ref{sec:appendix_skew_Gaussian}). Note that our approach to averaging multiple observations within a BAU, as described in Section \ref{sec:appendix_averaging_data}, differs from the default behavior of the R package FRK \citep{ZM2021}. Therefore, the averaging of pairs of retrievals within some BAUs was handled outside of the R package, but the model was otherwise fitted using the default arguments in the R package FRK \citep{SainsburyDale2024}.

\subsection{Prior distribution of the parameters for modeling the XCH$_4$ dataset}\label{sec:appendix_methane_priors}

Table \ref{tab:appendix_methane_priors} shows the priors chosen for the parameters of the hierarchical spatial copula model. The same priors were used on the two days, 23 August (`Day 23') and 28 August (`Day 28'), 2020, featured in our analysis.

\begin{table}[H]
    \centering
    \caption{Prior distributions and their 95\% probability intervals for the parameters in the model for the XCH$_4$ dataset. The Gamma (Gam) and truncated Gamma priors are parameterized according to the shape and scale; the Gaussian (Gau) prior is parameterized according to the mean and variance; the Half Cauchy (HC) prior is parameterized by a scale parameter.}
    \begin{tabular}{|c|c|c|c|}
        \hline
         Parameter & Prior & 95\% probability interval & Description\\
         \hline
         $\sigma_{\TP}$ & $\mathrm{HC}(1000)$ & (39.29, 25451.70) & Half Cauchy\\
         $\lambda$ & $\Gau(0, 2^2)$ & (-3.92, 3.92) & Gaussian\\
         $\beta_0$ & $\Gau(0, 100^2)$ & (-196, 196) & Gaussian\\
         $\theta_s$ & $\Gam(4, 0.5)$ & (0.5449, 4.3836) & Gamma\\
         $\theta_r$ & $\Gam(5, 0.1)$ & (0.1623, 1.0242) & Gamma\\
         \hline
    \end{tabular}
    \label{tab:appendix_methane_priors}
\end{table}

\subsection{Initial values of the parameters in MCMC}\label{sec:appendix_methane_initial}

Recall that the parameters of the SG-Gau-SRE model are $\sigma_\TP$, $\beta_0$, $\lambda$, $\theta_s$, and $\theta_r$. Let $\sigma_\TP^{(0,t)}$, $\beta_0^{(0,t)}$, $\lambda^{(0,t)}$, $\theta_s^{(0,t)}$, and $\theta_r^{(0, t)}$ denote the initial values of the parameters in the MCMC algorithm for $t \in \{23, 28\}$ (i.e., Days 23 and 28). The initial values were set as follows: The initial value $\sigma_{\TP}^{(0,t)}$ was set to the standard deviation of the respective day's XCH$_4$ values. The mean parameter, $\beta_0^{(0, t)}$, was set to be the mean of the logarithm of the respective day's XCH$_4$ values. The reasons for using the logarithm of the data are the same as those articulated in Section \ref{sec:appendix_sim_initial_values}. The initial value of the skewness parameter, $\lambda^{(0,t)}$, was set to the maximum likelihood estimate of the skewness parameter of a skew-Gaussian distribution fitted directly to the respective day's XCH$_4$ values (assuming independence of the observations). The initial values $\theta_s^{(0,t)}$ and $\theta_r^{(0,t)}$ (corresponding to parameters $\theta_s$ and $\theta_r$ in the spherical covariance function for $\E(\theta_s, \theta_r)$; see Section \ref{sec:daily_maps}) were set equal to 1 and $0.5$, respectively. 

\subsection{Trace plots}\label{sec:appendix_methane_trace_plots}

Fig. \ref{fig:appendix_latent_process_trace}--\ref{fig:appendix_parameters_trace} respectively show the trace plots for five randomly selected latent process values ($\Y$), five randomly selected random effects ($\etab$), and the parameters in the SG-Gau-SRE model for the XCH$_4$ data on Days 23 and 28. The trace plots suggest convergence has been achieved (also see next section), and the Markov chains appear to be adequately mixed.

\begin{figure}[!ht]
    \centering
    \includegraphics[width=0.8\textwidth]{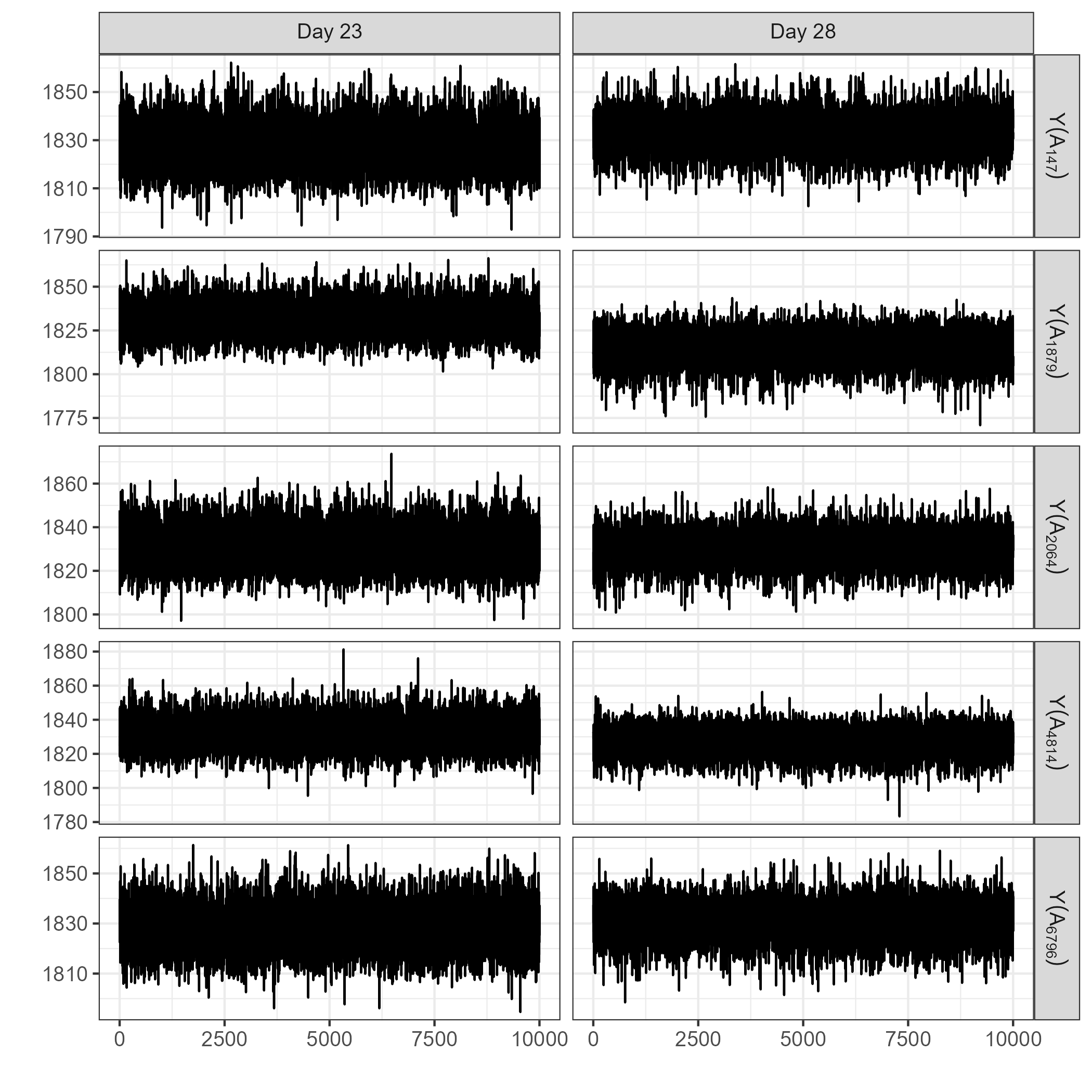}
    \caption{Trace plots for some elements of the latent-process values $\Y$ (both missing and observed BAUs) on Days 23 and 28. The trace plots are shown for five randomly selected BAUs, and the same BAUs were chosen for each day.}
    \label{fig:appendix_latent_process_trace}
\end{figure}

\clearpage

\begin{figure}[!ht]
    \centering
    \includegraphics[width=0.8\textwidth]{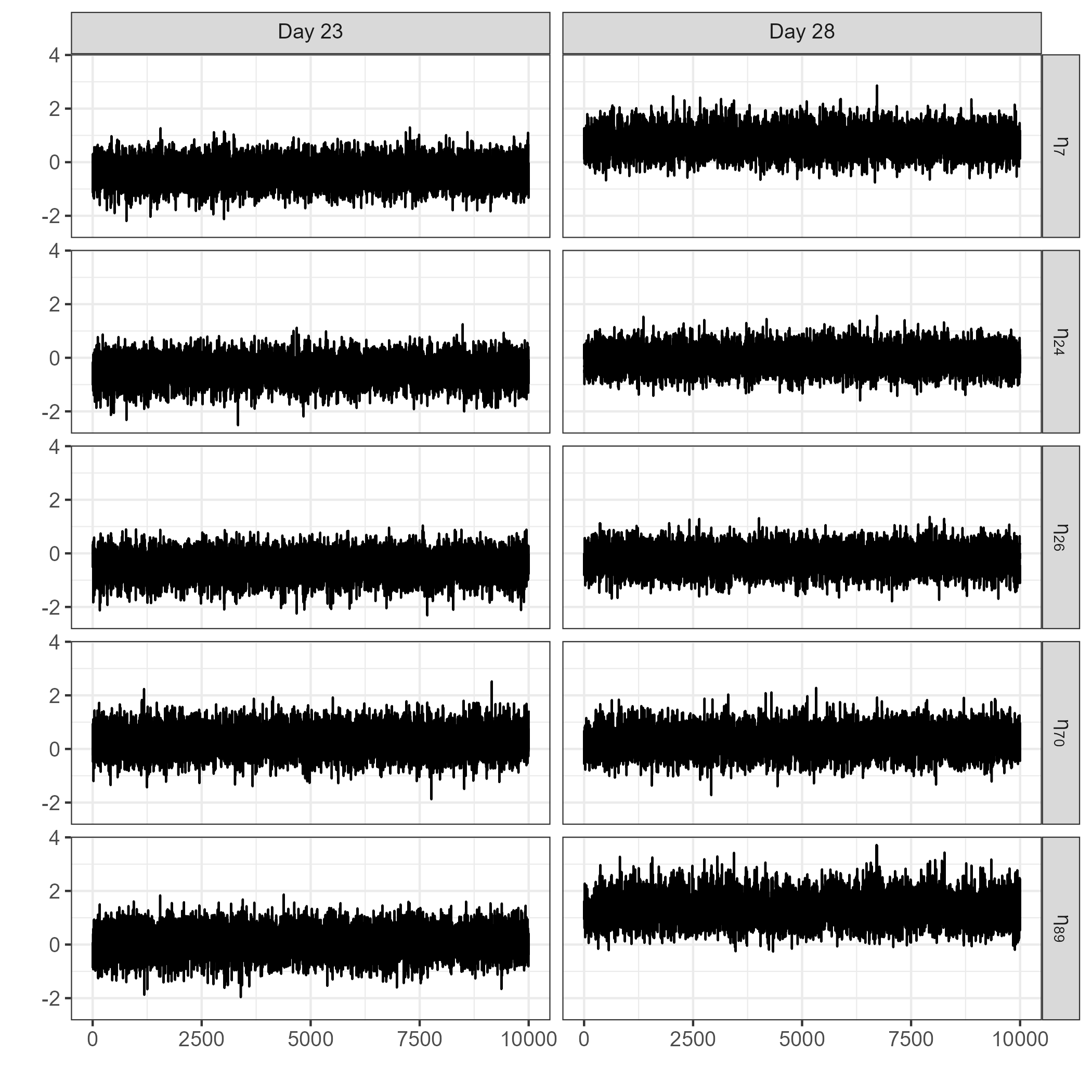}
    \caption{Trace plots for some elements of the spatial random effects $\etab$ on Days 23 and 28. The same random effects were chosen for display on each day.}
    \label{fig:appendix_random_effect_trace}
\end{figure}

\clearpage

\begin{figure}[!ht]
    \centering
    \includegraphics[width=0.8\textwidth]{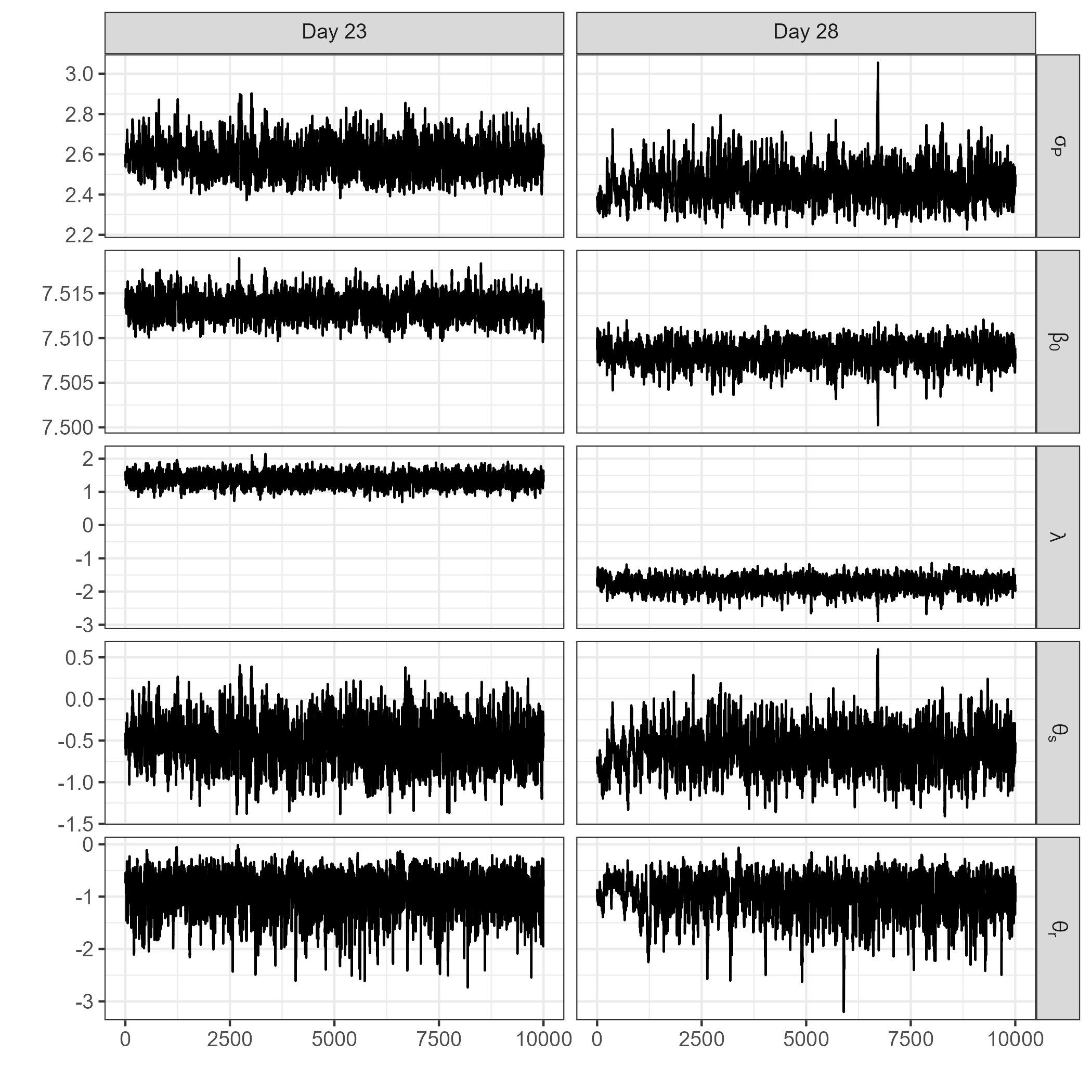}
    \caption{Trace plots for the parameters of the SG-Gau-SRE model for the XCH$_4$ data on Days 23 and 28.}
    \label{fig:appendix_parameters_trace}
\end{figure}

\clearpage

\subsection{Convergence diagnostics}\label{sec:appendix_convergence_diagnostics}

Gelman-Rubin statistics \citep{Gelman1992} were calculated to quantitatively assess convergence of the MCMC sampler. Two additional chains were run from different random seeds and initial starting values for the parameters, random effects, and latent-process values. The Gelman-Rubin statistics for the parameters in the SG-Gau-SRE model are shown in Table \ref{tab:appendix_GR_stats_parms}. All statistics are close to $1$, indicating that the MCMC sampler achieved convergence. Gelman-Rubin statistics were calculated for the latent-process values, $\Y$, and random effects, $\etab$, but none are presented here because, rounded to two decimal places, all were equal to $1$. The Gelman-Rubin statistics presented here were calculated using the R package coda \citep{coda}.

\begin{table}[!ht]
    \centering
        \caption{Gelman-Rubin (GR) statistics for the parameters of the SG-Gau-SRE model used to analyze atmospheric methane concentrations on Days 23 and 28. A value of the GR statistic close to $1$ indicates convergence.}
    \begin{tabular}{|c|c|c|}
    \hline
        Parameter & Day 23 & Day 28\\
         \hline
        $\sigma_\TP$ & 1.00 & 1.01 \\
        $\beta_0$ & 1.00 & 1.00\\
        $\lambda$ & 1.00 & 1.00\\
        $\theta_s$ & 1.00 & 1.01\\
        $\theta_r$ & 1.01 & 1.00\\
        \hline
    \end{tabular}
    \label{tab:appendix_GR_stats_parms}
\end{table}

\clearpage

\end{document}